\theoremstyle{definition}
\newtheorem{definition}{Definition}[section]
\theoremstyle{remark}
\newtheorem{remark}[definition]{Remark}
\theoremstyle{plain}
\newtheorem{proposition}[definition]{Proposition}
\newtheorem{theorem}[definition]{Theorem}
\tikzset{
  baseline=(current bounding box.center),
   % Set units, fontsize & baseline
  x=28pt, y=28pt, font=\small,
  % Arrow tip
  >=stealth,
  % Plain vertex
  vertex/.style={inner sep=0pt},
  % Blank vertex
  bvertex/.style={vertex, draw, shape=circle, fill=white, opacity=0.7, minimum size=4pt},
  % Labeled vertex
  lvertex/.style={vertex, draw, fill= white, opacity=1, minimum size=9pt, font=\scriptsize},
  % Gauge node
  gnode/.style={lvertex, shape=circle},
  % Flavor node
  fnode/.style={lvertex, shape=rectangle},
  % Arrows
  q->/.style={->, shorten >=0.5pt, font=\smaller[3]},
}
\newcommand{\CN}{\mathcal{N}}
\newcommand{\cD}{\mathcal{D}}
\newcommand{\CX}{\mathcal{X}}
\newcommand{\at}{\tilde{a}}
\newcommand{\Xt}{\widetilde{X}}
\newcommand{\xt}{\tilde{x}}
\newcommand{\CA}{\mathcal{A}}
\newcommand{\CH}{\mathcal{H}}
\newcommand{\CS}{\mathcal{S}}
\newcommand{\Z}{\mathbb{Z}}
\newcommand{\Q}{\mathbb{Q}}
\newcommand{\R}{\mathbb{R}}
\newcommand{\C}{\mathbb{C}}
\newcommand{\Ad}{\mathrm{Ad}}
\newcommand{\Xtrop}{\mathsf{X}}
\newcommand{\Ptrop}{\mathbb{P}_{\mathrm{trop}}}
\newcommand{\pitrop}{\pi_{\mathrm{trop}}}
\newcommand{\sgntrop}{\upepsilon}
\newcommand{\Puniv}{\mathbb{P}_{\mathrm{univ}}}
\newcommand{\id}{\mathop{\mathrm{id}}\nolimits}
\newcommand{\rmd}{\mathrm{d}}
\newcommand{\sgn}{\mathrm{sgn}}
\newcommand{\seed}{\Sigma}
\newcommand{\auto}{\alpha}
\newcommand{\ct}{\mathbf{c}}
\newcommand{\K}[1]{\mathbf{K}_{#1}}
\newcommand{\xop}{\hat{x}}
\newcommand{\bop}{\hat{b}}
\newcommand{\xtop}{\hat{\tilde{x}}}
\newcommand{\Xop}{\widehat{X}}
\newcommand{\Bop}{\widehat{B}}
\newcommand{\Xtop}{\widehat{\widetilde{X}}}
\newcommand{\iu}{\mathrm{i}}
\renewcommand{\Re}{\mathop\mathrm{Re}}
\renewcommand{\Im}{\mathop\mathrm{Im}}
\newcommand{\Tr}{\mathop\mathrm{Tr}}
\newcommand{\ah}{\hat{a}}
\newcommand{\ph}{\hat{p}}
\newcommand{\ket}[1]{|#1\rangle}
\newcommand{\bra}[1]{\langle #1|}
\newcommand{\braket}[2]{\langle #1|#2\rangle}
\newcommand{\U}{\mathrm{U}}
\renewcommand{\flat}{\prime}
\title[Cluster transformations, the tetrahedron equation
and\dots]{Cluster transformations, the tetrahedron equation and
  three-dimensional gauge theories}
\author{Xiaoyue Sun}
\address{Yau Mathematical Sciences Center and Department of Mathematical Sciences, Tsinghua University, China}
\author{Junya Yagi}
\address{Yau Mathematical Sciences Center, Tsinghua University, China}
\date{}
\begin{document}
\begin{abstract}
  We define three families of quivers in which the braid relations of
  the symmetric group $S_n$ are realized by mutations and
  automorphisms.  A sequence of eight braid moves on a reduced word
  for the longest element of $S_4$ yields three trivial cluster
  transformations with 8, 32 and 32 mutations.  For each of these
  cluster transformations, a unitary operator representing a single
  braid move in a quantum mechanical system solves the tetrahedron
  equation.  The solutions thus obtained are constructed from the
  noncompact quantum dilogarithm and can be identified with the
  partition functions of three-dimensional $\CN = 2$ supersymmetric
  gauge theories on a squashed three-sphere.
\end{abstract}

\maketitle

\section{Introduction}
\label{sec:introduction}

The Zamolodchikov tetrahedron equation \cite{MR611994b} is a
fundamental relation for integrability of quantum field theories in
$2+1$ spacetime dimensions and of statistical mechanical models on
three-dimensional lattices, much in the same way as its
lower-dimensional analog, the Yang--Baxter equation, is a fundamental
relation in integrable $(1+1)$-dimensional quantum field theories and
two-dimensional lattice models.  Compared to the Yang--Baxter
equation, however, our understanding of the tetrahedron equation is
still limited despite its obvious importance and relatively long
history.

In this work we hope to shed some light on the tetrahedron equation by
uncovering its connections to quantum cluster algebras and
three-dimensional supersymmetric gauge theories.

\subsection{The Yang--Baxter equation and the tetrahedron equation}

Graphically, the Yang--Baxter equation is represented as an equality
between two configurations of three intersecting lines in a plane.
The tetrahedron equation is likewise represented as an equality
between two configurations of four intersecting planes in a
three-dimensional space.  See Figure~\ref{fig:YBE-TE}.
Combinatorially, the Yang--Baxter equation and the tetrahedron
equation can be understood in terms of the reduced expressions for the
longest elements of the symmetric groups $S_3$ and $S_4$,
respectively.  Basic notions regarding the symmetric groups are
recalled in section~\ref{sec:wiring}.

\begin{figure}
  \centering
  \tikzset{font=\scriptsize}
  \tdplotsetmaincoords{120}{0}
  \begin{tikzpicture}[tdplot_main_coords, scale=0.5, baseline=14pt]
    \draw[thick, -stealth, shorten >=-12pt, shorten <=-12pt]
    (0,0,{sqrt(3)}) -- ({sqrt(3)},-1,0);

    \draw[thick, stealth-, shorten >=-12pt, shorten <=-12pt]
    (0,0,{sqrt(3)}) -- ({-sqrt(3)},-1,0);

    \draw[line width=4pt, -, white]
    ({-sqrt(3)/2},-1,0) -- ({sqrt(3)/2},-1,0);
    \draw[thick, -stealth, shorten >=-12pt, shorten <=-12pt]
    ({-sqrt(3)},-1,0) -- node[below left, shift={(-8pt,2pt)}] {} ({sqrt(3)},-1,0);

    \node[above=3pt] at (0,0,{sqrt(3)}) {$R_{13}$};
    \node[above=-1pt, xshift=-4pt] at ({-sqrt(3)},-1,0) {$R_{12}$};
    \node[above=-1pt, xshift=4pt] at ({sqrt(3)},-1,0) {$R_{23}$};
  \end{tikzpicture}
  \hspace{-0.1em}=\hspace{-0.2em}
  \tdplotsetmaincoords{-60}{0}
  \begin{tikzpicture}[tdplot_main_coords, scale=0.5, baseline=-14.4pt]
    \draw[thick, -stealth, shorten >=-12pt, shorten <=-12pt]
    ({-sqrt(3)},-1,0) -- ({sqrt(3)},-1,0);

    \draw[thick, -stealth, shorten >=-12pt, shorten <=-12pt]
    (0,0,{sqrt(3)}) -- ({sqrt(3)},-1,0);

    \draw[thick, stealth-, shorten >=-12pt, shorten <=-12pt]
    (0,0,{sqrt(3)}) -- ({-sqrt(3)},-1,0);

    \node[above=1pt] at (0,0,{sqrt(3)}) {$R_{13}$};
    \node[above=-1pt, xshift=4pt] at ({-sqrt(3)},-1,0) {$R_{23}$};
    \node[above=-1pt, xshift=-4pt] at ({sqrt(3)},-1,0) {$R_{12}$};
  \end{tikzpicture}
  \hfill
  \tdplotsetmaincoords{135}{0}
  \begin{tikzpicture}[tdplot_main_coords, scale=0.5, baseline=-5.5pt]
    \draw[thick, stealth-, shorten >=-12pt, shorten <=-12pt]
    ({-sqrt(3)},-1,0) -- (0,{sqrt(3)},0);

    \draw[thick, stealth-, shorten >=-12pt, shorten <=-12pt]
    ({sqrt(3)},-1,0) -- (0,{sqrt(3)},0);

    \draw[thick, -stealth, shorten >=-12pt, shorten <=-12pt]
    (0,0,{sqrt(3)}) -- ({sqrt(3)},-1,0);

    \draw[thick, stealth-, shorten >=-12pt, shorten <=-12pt]
    (0,0,{sqrt(3)}) -- ({-sqrt(3)},-1,0);

    \draw[thick, stealth-, shorten >=-12pt, shorten <=-12pt]
    (0,0,{sqrt(3)}) -- (0,{sqrt(3)},0);

    \draw[line width=4pt, -, white]
    ({-sqrt(3)/2},-1,0) -- ({sqrt(3)/2},-1,0);
    \draw[thick, -stealth, shorten >=-12pt, shorten <=-12pt]
    ({-sqrt(3)},-1,0) -- node[below left, shift={(-8pt,2pt)}] {} ({sqrt(3)},-1,0);

    \node[right=-1pt] at (0,{sqrt(3)},0) {$R_{123}$};
    \node[above=1pt, xshift=-10pt] at (0,0,{sqrt(3)}) {$R_{134}$};
    \node[below=1pt, xshift=-3pt] at ({-sqrt(3)},-1,0) {$R_{124}$};
    \node[above=1pt, xshift=-5pt] at ({sqrt(3)},-1,0) {$R_{234}$};
  \end{tikzpicture}
  \hspace{0.5em}=\hspace{-0.3em}
  \tdplotsetmaincoords{-45}{0}
  \begin{tikzpicture}[tdplot_main_coords, scale=0.5]
    \draw[thick, -stealth, shorten >=-12pt, shorten <=-12pt]
    ({-sqrt(3)},-1,0) -- ({sqrt(3)},-1,0);

    \draw[thick, -stealth, shorten >=-12pt, shorten <=-12pt]
    ({-sqrt(3)},-1,0) -- (0,{sqrt(3)},0);

    \draw[thick, -stealth, shorten >=-12pt, shorten <=-12pt]
    ({sqrt(3)},-1,0) -- (0,{sqrt(3)},0);

    \draw[thick, -stealth, shorten >=-12pt, shorten <=-12pt]
    (0,0,{sqrt(3)}) -- ({sqrt(3)},-1,0);

    \draw[thick, stealth-, shorten >=-12pt, shorten <=-12pt]
    (0,0,{sqrt(3)}) -- ({-sqrt(3)},-1,0);

    \draw[line width=4pt, -, white]
    (0,0,{sqrt(3/2)}) -- (0,{sqrt(3)/2},0);
    \draw[thick, -stealth, shorten >=-12pt, shorten <=-12pt]
    (0,0,{sqrt(3)}) -- (0,{sqrt(3)},0);

    \node[left=0pt] at (0,{sqrt(3)},0) {$R_{123}$};
    \node[below=1pt, xshift=10pt] at (0,0,{sqrt(3)}) {$R_{134}$};
    \node[below=0pt, xshift=5pt] at ({-sqrt(3)},-1,0) {$R_{234}$};
    \node[above=2pt, xshift=4pt] at ({sqrt(3)},-1,0) {$R_{124}$};
  \end{tikzpicture}

  \caption{Graphical representations of the Yang--Baxter equation
    (left) and the tetrahedron equation (right).}
  \label{fig:YBE-TE}
\end{figure}
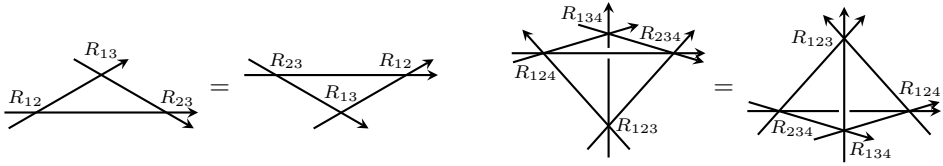

A solution of the Yang--Baxter equation is associated with adjacent
transpositions.  For concreteness let us consider a solution of vertex
type.  This is a set $R$ of linear operators
\begin{equation}
  R_{ab}\colon V_a \otimes V_b \to V_b \otimes V_a \,,
\end{equation}
where $V_a$, $a \in \{1,2,3\}$, are vector spaces and $a < b$.  (The
operator $R_{ab}$ is often denoted by $\check{R}_{ab}$ in the
literature.)  In the graphical representation, $V_a$ is the $a$th line
and $R_{ab}$ is the crossing of the $a$th and $b$th lines.
Corresponding to the two adjacent transpositions $s_1$, $s_2$ of
$S_3$, the R-matrix $R$ can act on a tensor product
$V_a \otimes V_b \otimes V_c$ of $V_1$, $V_2$, $V_3$ in two ways,
either by $R_{ab} \otimes \id_{V_c}$ or $\id_{V_a} \otimes R_{bc}$.
(The identity maps are often suppressed in the notation.)  The longest
element of $S_3$ has two reduced expressions $s_1 s_2 s_1$ and
$s_2 s_1 s_2$, and the Yang--Baxter equation
\begin{equation}
  R_{23} R_{13} R_{12} = R_{12} R_{13} R_{23}
\end{equation}
reflects the equality $s_1 s_2 s_1 = s_2 s_1 s_2$ satisfied by the two
reduced expressions as elements of $S_3$.

The tetrahedron equation lifts the above structure to one higher
level.  Its solution is associated with braid moves
\begin{equation}
  s_a s_{a+1} s_a \to s_{a+1} s_a s_{a+1} \,,
\end{equation}
as opposed to adjacent transpositions.  A vertex-type solution is a
set $R$ of linear operators
\begin{equation}
  R_{abc}\colon
  V_{ab} \otimes V_{bc} \otimes V_{ac}
  \to V_{ac} \otimes V_{bc} \otimes V_{ab} \,,
\end{equation}
where $a$, $b$, $c \in \{1,2,3,4\}$ and $a < b < c$.  Graphically, the
vector space $V_{ab}$ is represented by the intersection of the $a$th
and $b$th planes and $R_{abc}$ is represented by the intersection of
the $a$th, $b$th and $c$th planes.  The tetrahedron equation
\begin{equation}
  \label{eq:TE}
  R_{234} R_{134} R_{124} R_{123}
  =
  R_{123} R_{124} R_{134} R_{234}
\end{equation}
corresponds to an equivalence between two sequences of braid moves on
reduced expressions for the longest element of $S_4$ modulo far
commutativity, which takes
$s_1 s_2 s_3 s_1 s_2 s_1 \sim s_1 s_2 s_1 s_3 s_2 s_1$ to
$s_3 s_2 s_1 s_3 s_2 s_3 \sim s_3 s_2 s_3 s_1 s_2 s_3$.

The relation between the graphical and combinatorial interpretations
of the Yang--Baxter equation and the tetrahedron equation can be made
manifest by use of wiring diagrams.  A wiring diagram on $n$ wires is
a diagrammatic representation of a word for a permutation in $S_n$.
The wiring diagrams for the reduced words $121$ and $212$ for the
longest element of $S_3$, corresponding to the reduced expressions
$s_1 s_2 s_1$ and $s_2 s_1 s_2$, are
\begin{equation}
  121
  =
  \tikz{\braid[gap=0, rotate=90, scale=0.4] a_1 a_2 a_1;}
  \ ,
  \qquad
  212
  =
  \tikz{\braid[gap=0, rotate=90, scale=0.4] a_2 a_1 a_2;}
  \ .
\end{equation}
The equality between these wiring diagrams is topologically equivalent
to the graphical representation of the Yang--Baxter equation.  To
relate the two interpretations of the tetrahedron equation, for each
side of the equation we vertically juxtapose the relevant sequence of
wiring diagrams and think of these wiring diagrams as ``time slices''
of the surfaces swept out by four moving wires \cite{MR1402762}.  The
four surfaces correspond to the four planes in the graphical
representation, as illustrated in Figure~\ref{fig:TE}.  From the
figure we see the correspondence
\begin{equation}
  \begin{aligned}
    \text{segments of wires}
    & \leftrightarrow
    \text{regions on planes} \,,
    \\
    \text{intersections of wires}
    & \leftrightarrow
    \text{intersections of planes} \,,
    \\
    \text{regions in wiring diagrams}
    & \leftrightarrow
    \text{regions in the three-dimensional space} \,.
  \end{aligned}
\end{equation}

\begin{figure}
  \centering
  \begin{tikzpicture}
    \begin{scope}[xscale=0.5, yscale=0.15, black!30, rotate=90, thick]
      \braid[gap=0] a_1 a_2 a_3 a_1 a_2 a_1;
      \braid[gap=0, shift={(-10,0)}] a_1 a_3 a_2 a_3 a_1 a_2;
      \braid[gap=0, shift={(-5,0)}] a_1 a_2 a_3 a_2 a_1 a_2;
      \braid[gap=0, shift={(-15,0)}] a_3 a_1 a_2 a_1 a_3 a_2;
      \braid[gap=0, shift={(-20,0)}] a_3 a_2 a_1 a_2 a_3 a_2;
      \braid[gap=0, shift={(-25,0)}] a_3 a_2 a_1 a_3 a_2 a_3;

      \node[black] at (2.5,1.5) {$123121$};
      \node[black] at (0,1.5) {${\raisebox{0.5ex}{$\hspace{-0.15em}\scriptstyle \beta_{234}$}}\big\downarrow%
\phantom{\hspace{-0.15em}\scriptstyle \beta_{234}}$};
      \node[black] at (-2.5,1.5) {$123212$};
      \node[black] at (-5,1.5) {${\raisebox{0.5ex}{$\hspace{-0.15em}\scriptstyle \beta_{134}$}}\big\downarrow%
\phantom{\hspace{-0.15em}\scriptstyle \beta_{134}}$};
      \node[black] at (-7.5,1.5) {$132312$};
      \node[black] at (-10,1.5) {$\big\downarrow$};
      \node[black] at (-12.5,1.5) {$312132$};
      \node[black] at (-15,1.5) {${\raisebox{0.5ex}{$\hspace{-0.15em}\scriptstyle \beta_{124}$}}\big\downarrow%
\phantom{\hspace{-0.15em}\scriptstyle \beta_{124}}$};
      \node[black] at (-17.5,1.5) {$321232$};
      \node[black] at (-20,1.5) {${\raisebox{0.5ex}{$\hspace{-0.15em}\scriptstyle \beta_{123}$}}\big\downarrow%
\phantom{\hspace{-0.15em}\scriptstyle \beta_{123}}$};
      \node[black] at (-22.5,1.5) {$321323$};

      \draw[red, line width=2pt, rounded corners, opacity=0.8] (1.5,-0.75) -- (-8.5,-0.75) -- (-13.5,-1.75) -- (-15.5,-2.75);
      \draw[red, line width=1pt, rounded corners, opacity=0.8] (-19.5,-4.75) -- (-21.5,-5.75);

      \draw[red, line width=1.5pt, rounded corners, opacity=0.8] (2.5,-1.75) -- (-2.5,-1.75) -- (-4.5,-2.75);
      \draw[red, line width=1.5pt, rounded corners, opacity=0.8] (-19.5,-4.75) -- (-22.5,-4.75);

      \draw[red, line width=1pt, rounded corners, opacity=0.8] (3.5,-2.75) -- (-4.5,-2.75);
      \draw[red, line width=2pt, rounded corners, opacity=0.8] (-15.5,-2.75) -- (-23.5,-2.75);
      
      \draw[red, line width=2pt, rounded corners, opacity=0.8] (1.5,-3.75) -- (-0.5,-4.75);
      \draw[red, line width=1pt, rounded corners, opacity=0.8] (-19.5,-4.75) -- (-21.5,-3.75);

      \draw[red, line width=1.5pt, rounded corners, opacity=0.8] (2.5,-4.75) -- (-0.5,-4.75);
      \draw[red, line width=1.5pt, rounded corners, opacity=0.8] (-15.5,-2.75) -- (-17.5,-1.75) -- (-22.5,-1.75);

      \draw[red, line width=2pt, rounded corners, opacity=0.8] (1.5,-5.75) -- (-0.5,-4.75);
      \draw[red, line width=1pt, rounded corners, opacity=0.8] (-4.5,-2.75) -- (-6.5,-1.75) -- (-11.5,-0.75) -- (-21.5,-0.75);

      \draw[purple, line width=1.5pt, rounded corners, opacity=0.8] (-15.5,-2.75) -- (-19.5,-4.75);
      \draw[purple, line width=1pt, rounded corners, opacity=0.8] (-4.5,-2.75) -- (-6.5,-3.75) -- (-11.5,-4.75) -- (-19.5,-4.75);
      \draw[purple, line width=1.5pt, rounded corners, opacity=0.8] (-4.5,-2.75) -- (-15.5,-2.75);
      \draw[purple, line width=1.5pt, rounded corners, opacity=0.8] (-0.5,-4.75) -- (-2.5,-5.75) -- (-17.5,-5.75) -- (-19.5,-4.75);
      \draw[purple, line width=2pt, rounded corners, opacity=0.8] (-0.5,-4.75) -- (-8.5,-4.75) -- (-13.5,-3.75) -- (-15.5,-2.75);
      \draw[purple, line width=1.5pt, rounded corners, opacity=0.8] (-0.5,-4.75) -- (-2.5,-3.75) -- (-4.5,-2.75);
   \end{scope}
  \end{tikzpicture}
  \quad = \quad
  \begin{tikzpicture}
    \begin{scope}[xscale=0.5, yscale=0.15, black!30, rotate=90, thick]
      \braid[gap=0] a_1 a_2 a_1 a_3 a_2 a_1;
      \braid[gap=0, shift={(-5,0)}] a_2 a_1 a_2 a_3 a_2 a_1;
      \braid[gap=0, shift={(-10,0)}] a_2 a_1 a_3 a_2 a_3 a_1;
      \braid[gap=0, shift={(-15,0)}] a_2 a_3 a_1 a_2 a_1 a_3;
      \braid[gap=0, shift={(-20,0)}] a_2 a_3 a_2 a_1 a_2 a_3;
      \braid[gap=0, shift={(-25,0)}] a_3 a_2 a_3 a_1 a_2 a_3;

      \node[black] at (2.5,-8) {$121321$};
      \node[black] at (0,-8) {$\phantom{\hspace{-0.15em}\scriptstyle \beta_{123}}%
        \big\downarrow{\raisebox{0.5ex}{$\hspace{-0.15em}\scriptstyle \beta_{123}$}}$};
      \node[black] at (-2.5,-8) {$212321$};
      \node[black] at (-5,-8) {$\phantom{\hspace{-0.15em}\scriptstyle \beta_{124}}%
        \big\downarrow{\raisebox{0.5ex}{$\hspace{-0.15em}\scriptstyle \beta_{124}$}}$};
      \node[black] at (-7.5,-8) {$213231$};
      \node[black] at (-10,-8) {$\big\downarrow$};
      \node[black] at (-12.5,-8) {$231213$};
      \node[black] at (-15,-8) {$\phantom{\hspace{-0.15em}\scriptstyle \beta_{134}}%
        \big\downarrow{\raisebox{0.5ex}{$\hspace{-0.15em}\scriptstyle \beta_{134}$}}$};
      \node[black] at (-17.5,-8) {$232123$};
      \node[black] at (-20,-8) {$\phantom{\hspace{-0.15em}\scriptstyle \beta_{234}}%
        \big\downarrow{\raisebox{0.5ex}{$\hspace{-0.15em}\scriptstyle \beta_{234}$}}$};
      \node[black] at (-22.5,-8) {$323123$};

      \draw[red, line width=2pt, rounded corners, opacity=0.8] (1.5,-0.75) -- (-0.5,-1.75);
      \draw[red, line width=1pt, rounded corners, opacity=0.8] (-4.5,-3.75) -- (-6.5,-4.75) -- (-11.5,-5.75) -- (-16.5,-5.75) -- (-21.5,-5.75);

      \draw[red, line width=1.5pt, rounded corners, opacity=0.8] (2.5,-1.75) -- (-1,-1.75);
      \draw[red, line width=1.5pt, rounded corners, opacity=0.8] (-15.5,-3.75) -- (-17.5,-4.75) -- (-22.5,-4.75);

      \draw[red, line width=2pt, rounded corners, opacity=0.8] (1.5,-2.75) -- (-0.5,-1.75);
      \draw[red, line width=1pt, rounded corners, opacity=0.8] (-19.5,-1.75) -- (-21.5,-2.75);

      \draw[red, line width=1pt, rounded corners, opacity=0.8] (3.5,-3.75) -- (-1.5,-3.75) -- (-4.5,-3.75);
      \draw[red, line width=2pt, rounded corners, opacity=0.8] (-15.5,-3.75) -- (-18.5,-3.75) -- (-23.5,-3.75);

      \draw[red, line width=1.5pt, rounded corners, opacity=0.8] (2.5,-4.75) -- (-2.5,-4.75) -- (-4.5,-3.75);
      \draw[red, line width=1.5pt, rounded corners, opacity=0.8] (-19.5,-1.75) -- (-22.5,-1.75);

      \draw[red, line width=2pt, rounded corners, opacity=0.8] (1.5,-5.75) -- (-3.5,-5.75) -- (-8.5,-5.75) -- (-13.5,-4.75) -- (-15.5,-3.75);
      \draw[red, line width=1pt, rounded corners, opacity=0.8] (-19.5,-1.75) -- (-21.5,-0.75);

      \draw[purple, line width=1.5pt, rounded corners, opacity=0.8] (-0.5,-1.75) -- (-2.5,-2.75) -- (-4.5,-3.75);
      \draw[purple, line width=2pt, rounded corners, opacity=0.8] (-0.5,-1.75) -- (-3.5,-1.75) -- (-8.5,-1.75) -- (-13.5,-2.75) -- (-15.5,-3.75);
      \draw[purple, line width=1.5pt, rounded corners, opacity=0.8] (-0.5,-1.75) -- (-2.5,-0.75) -- (-7.5,-0.75) -- (-12.5,-0.75) -- (-17.5,-0.75) -- (-19.5,-1.75);
      \draw[purple, line width=1.5pt, rounded corners, opacity=0.8] (-4.5,-3.75) -- (-12.5,-3.75) -- (-15.5,-3.75);
      \draw[purple, line width=1pt, rounded corners, opacity=0.8]
      (-4.5,-3.75) -- (-6.5,-2.75) -- (-11.5,-1.75) -- (-16.5,-1.75) -- (-19.5,-1.75);
      \draw[purple, line width=1.5pt, rounded corners, opacity=0.8] (-15.5,-3.75) -- (-19.5,-1.75);
    \end{scope}
  \end{tikzpicture}
  \caption{The tetrahedron equation arises from two sequences of braid
    moves with the isotopic start and end wiring diagrams on four
    wires.  Each wiring diagram represents a slice of four surfaces
    that bound (after flattened to planes) a tetrahedron.  The six
    intersection curves of these surfaces are drawn in color.  Each
    braid move $\beta_{abc}$ changes the local configuration of the
    surfaces and acts downward.  The corresponding R-matrix $R_{abc}$
    acts upward.}
  \label{fig:TE}
\end{figure}
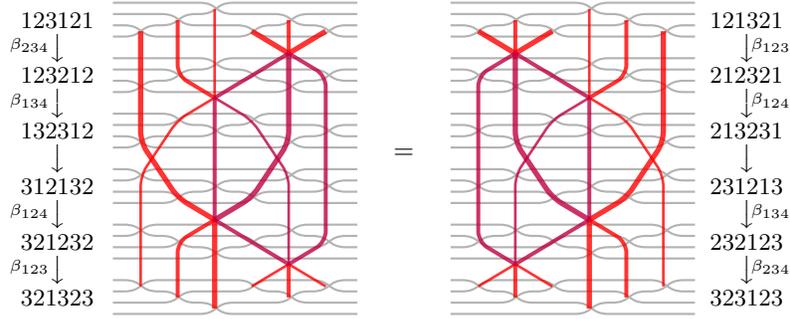

A solution of the tetrahedron equation obtained by Kapranov and
Voevodsky \cite{MR1278735} realizes the symmetric group structure with
representations of the quantized coordinate ring $A_q(A_3)$ on
$q$-oscillator Fock spaces.  This is the same solution as the one
discovered by Bazhanov and Sergeev \cite{Bazhanov:2005as}, as pointed
out in \cite{Kuniba:2012ei}, and is expected to arise from a brane
configuration in M-theory \cite{Yagi:2022tot}.

\subsection{Summary}

In this paper we will construct three solutions of the tetrahedron
equation with a help of quantum cluster transformations and explain
how these solutions arise from three-dimensional supersymmetric gauge
theories.  Let us summarize the main ideas.

To each word for a permutation in $S_n$ we assign three cluster seeds,
or equivalently three quivers, which we call the triangle, square and
butterfly quivers.  Thus we obtain three families of quivers assigned
to the words for the permutations in $S_n$.  In each of these
families, braid moves are realized as transformations of quivers
composed of mutations and automorphisms relabeling vertices
(Proposition \ref{prop:R}).  For the triangle quivers a braid move is
a single mutation followed by an automorphism, whereas for the other
two families a braid move involves four mutations.

The theory of quantum cluster varieties \cite{MR2567745, MR2470108}
tells us how to represent such quiver transformations in quantum
mechanics.  For every seed $\seed$, there is a corresponding quantum
mechanical system whose observables are generated by pairs of
variables indexed by the vertices of $\seed$ and satisfy commutation
relations determined by the way in which the vertices are connected by
arrows.  The Hilbert space of states $\CH_\seed$ of the system is the
space of wavefunctions of half of these variables.  A composition
$\ct$ of mutations and automorphisms that transforms $\seed$ to
another seed $\seed'$ induces an isomorphism
$\K{\ct}\colon \CH_{\seed'} \xrightarrow{\sim} \CH_\seed$ between the
associated quantum mechanical systems, which is constructed from the
noncompact quantum dilogarithm.  The intertwiner $\K{\ct}$ transforms
the observables by conjugation, the transformation known as the
quantum cluster transformation $\ct^q$.  A crucial property of
$\K{\ct}$ is that if $\seed' = \seed$ and $\ct^q$ is the identity map,
then $\K{\ct}$ itself is the identity map
(Propositions~\ref{prop:trivial-cq} and ~\ref{prop:trivial-K}).

The two sequences of braid moves that lead to the tetrahedron equation
can be concatenated (with one of them reversed) to form a loop of
braid moves from one reduced expression for the longest element of
$S_4$ to itself.  By construction, on each of the three quivers
assigned to this reduced expression the corresponding quiver
transformation $\ct$ acts trivially.  A key observation made in this
paper is that $\ct^q$ is also trivial
(Proposition~\ref{prop:trivial-loop-ct}).

Therefore, $\K{\ct}$ equals the identity map.  This equality can be
rewritten as the tetrahedron equation solved by the intertwiner
corresponding to a single braid move.  In this way we obtain three
solutions of the tetrahedron equation, associated with the triangle,
square and butterfly quivers.  They define local Boltzmann weights for
three-dimensional statistical mechanical lattice models with
continuous spin variables.

The triangle, square and butterfly quivers have appeared in connection
with gauge theories and the Yang--Baxter equation
\cite{Bazhanov:2011mz, Yagi:2015lha, Yamazaki:2015voa}.  In that
context, these quivers describe supersymmetric gauge theories with
four supercharges, and the Yang--Baxter equation is interpreted as an
infrared duality \cite{Yamazaki:2012cp}.  Computing physical
quantities that are invariant under these dualities, such as
supersymmetric indices, one obtains solutions of the Yang--Baxter
equation.

The solutions of the tetrahedron equation constructed in this paper
also admit gauge theory interpretations.  In section \ref{sec:QFT} we
show that for any composition $\ct$ of mutations and automorphisms
from a quiver $\seed$ to another quiver $\seed'$, the intertwiner
$\K{\ct}$ can be identified with the partition function of a
three-dimensional $\CN = 2$ supersymmetric gauge theory on a squashed
three-sphere.  A similar observation was made for a closely related
operator in \cite{Terashima:2013fg}, and we adapt their computation to
$\K{\ct}$.

\subsection{Relations to other works}

In \cite{Yamazaki:2016wnu}, Yamazaki derived an equation of the form
\begin{equation}
  L_{23}(\tau_{23}) L_{13}(\tau_{13}) L_{12}(\tau_{12})
  =
  L_{12}(\tau'_{12}) L_{13}(\tau'_{13}) L_{23}(\tau'_{23})
\end{equation}
as an equality between the sphere partition functions of
two-dimensional $\CN = (2,2)$ supersymmetric gauge theories described
by the butterfly quivers assigned to the reduced expressions
$s_1 s_2 s_1$ and $s_2 s_1 s_2$.  This equation is almost the
Yang--Baxter equation but not quite since $L_{ab}$ depends on a set of
variables $\tau_{ab}$.  In the gauge theory language, these variables
are Fayet--Iliopoulos parameters and known to transform under quiver
mutations in the same way as classical $\CX$-variables do
\cite{Benini:2014mia}.  Yamazaki's equation can be understood as a
classical limit of the tetrahedron equation of type $RLLL = LLLR$:
\begin{equation}
  \label{eq:RLLL}
  L_{234} L_{134} L_{124} R_{123}
  =
  R_{123} L_{124} L_{134} L_{234} \,,
\end{equation}
where $R_{123}$ is the R-matrix for the butterfly quiver constructed
in this paper, which transforms
quantum $\CX$-variables by
conjugation.

A classical limit of the RLLL relation \eqref{eq:RLLL} with different
$L$ has been obtained in \cite{Gavrylenko:2020eov}, also from the
butterfly quiver.  The L-operator considered in
\cite{Gavrylenko:2020eov} is a classical limit of the L-operator
constructed in \cite{Bazhanov:2005as}, which satisfies the RLLL
relation with a $q$-oscillator-valued solution of the tetrahedron
equation \cite{MR1278735, Bazhanov:2005as}.  This fact suggests that
there is a close relation between the R-matrix for the butterfly
quiver and the R-matrix of \cite{MR1278735, Bazhanov:2005as}.

It appears that the three-dimensional gauge theory corresponding to
the intertwiner $\K{\ct}\colon \CH_{\seed'} \to \CH_\seed$ is (an
infrared description of) a domain wall that separates two different
parameter configurations of a four-dimensional $\CN = 2$
supersymmetric field theory, for which the BPS spectra are encoded in
$\seed$ and $\seed'$, respectively.  At least it is akin to such
domain walls which arise from pairs of M5-branes compactified on
three-manifolds with boundary \cite{Terashima:2011qi, Dimofte:2011ju,
  Cecotti:2011iy}, and it has been argued that in the infrared on the
Coulomb branch, the relevant sector of the four-dimensional theory on
$S^3_b$ is, essentially, captured by the quantum mechanical systems
assigned to $\seed$ and $\seed'$.  There are, however, differences
between the construction of those domain walls and our construction,
namely we use twice as many cluster coordinates and twice as many
quantum dilogarithms.

\section{Cluster transformations}
\label{sec:cluster}

In this section we recall relevant definitions and facts about cluster
transformations.  We will mainly follow the conventions of
\cite{MR2567745, MR2470108}.  The main result of this section is
Proposition \ref{prop:trivial-K} which provides, for each trivial
cluster transformation, an identity satisfied by a product of
noncompact quantum dilogarithms.  This identity will play a crucial
role in the construction of solutions of the tetrahedron equation in
section~\ref{sec:TE}.

\subsection{Cluster varieties}

A \emph{cluster seed} (or simply \emph{seed}) $\seed$ is a pair
$(I,\varepsilon)$ of a finite set $I$ and a skew-symmetric integer
matrix $\varepsilon = (\varepsilon_{ij})_{i,j \in I}$, called the
\emph{exchange matrix} of the seed.  We will identify a seed with a
\emph{quiver}, a directed graph consisting of vertices connected by
arrows.  The quiver corresponding to a seed $\seed = (I, \varepsilon)$
has $|I|$ vertices, labeled by elements of $I$, and $\varepsilon_{ij}$
arrows $j \to i$ between vertices $i$, $j \in I$ if
$\varepsilon_{ij} > 0$.

A seed $\seed' = (I,\varepsilon')$ is said to be obtained from a seed
$\seed = (I,\varepsilon)$ by the \emph{mutation
  $\mu_k\colon \seed \to \seed'$ in the direction of $k \in I$} if
\begin{equation}
  \varepsilon'_{ij}
  =
  \begin{cases} 
    -\varepsilon_{ij}
    & \text{if $i = k$ or $j = k$} \,;
    \\
    \varepsilon_{ij}
    + \frac12 (|\varepsilon_{ik}| \varepsilon_{kj}
    + \varepsilon_{ik} |\varepsilon_{kj}|)
    & \text{otherwise} \,.
  \end{cases}
\end{equation}
The quiver corresponding to $\seed'$ is obtained from that
corresponding to $\seed$ by the following procedure:
\begin{enumerate}
\item For each pair of arrows $i \to k$ and $k \to j$, draw an arrow $i \to j$.
\item Reverse the directions of all arrows incident to the vertex $k$.
\item Delete pairs of arrows $i \to j$ and $j \to i$ (``$2$-cycles'').
\end{enumerate}

To a seed $\seed = (I, \varepsilon)$ we assign three algebraic tori.
Let $\C^\times$ be the multiplicative group of complex numbers.  The
\emph{seed $\CA$-torus} $\CA_\seed$ is the algebraic torus
$(\C^\times)^I$, and the standard coordinates $A := (A_i)_{i \in I}$
of $\CA_\seed$ are referred to as the \emph{cluster $\CA$-variables}
(or simply \emph{$\CA$-variables}).  The \emph{seed $\cD$-torus}
$\cD_{\seed}$ is the torus $(\C^\times)^I \times (\C^\times)^I$,
equipped with the \emph{cluster $\cD$-variables}
$(X,B) := (X_i, B_i)_{i \in I}$.  The \emph{seed $\CX$-torus}
$\CX_{\seed}$ is the torus $(\C^\times)^I$ obtained from $\cD_\seed$
by projection to the first factor and equipped with the \emph{cluster
  $\CX$-variables} $X = (X_i)_{i \in I}$.

A mutation $\mu_k\colon \seed \to \seed'$ induces birational maps
$\mu_k\colon \CA_{\seed} \to \CA_{\seed'}$,
$\mu_k\colon \CX_{\seed} \to \CX_{\seed'}$ and
$\mu_k\colon \cD_{\seed} \to \cD_{\seed'}$, all denoted by the same
symbol $\mu_k$.  On the cluster variables the pullback of $\mu_k$ acts
by
\begin{align}
  \label{eq:A'}
  \mu_k^*(A'_i)
  &:=
  \begin{cases}
    A_k^{-1} \Bigl(\prod_{j \in I \mid \varepsilon_{kj} > 0} A_j^{\varepsilon_{kj}}
      + \prod_{j \in I \mid \varepsilon_{kj} < 0} A_j^{-\varepsilon_{kj}}\Bigr)
    & \text{if $i = k$} \,;
    \\
    \smash{A_i}
    & \text{if $i \neq k$} \,,
  \end{cases}
  \\
  \label{eq:X'}
  \mu_k^*(X'_i)
  &:=
  \begin{cases}
    X_k^{-1}
    & \text{if $i = k$} \,;
    \\
    X_i (1 + X_k^{\sgn(-\varepsilon_{ik})})^{-\varepsilon_{ik}}
    & \text{if $i \neq k$} \,,
  \end{cases}
  \\
  \label{eq:B'}
  \mu_k^*(B'_i)
  &:=
  \begin{cases}
    (X_k \prod_{j \mid \varepsilon_{kj} > 0} B_j^{\varepsilon_{kj}}
    + \prod_{j \mid \varepsilon_{kj} < 0} B_j^{-\varepsilon_{kj}})
    /B_k (1 + X_k)
    & \text{if $i = k$} \,;
    \\
    B_i
    & \text{if $i \neq k$} \,.
  \end{cases}
\end{align}
Here $\sgn(a) = +1$ for $a \geq 0$ and $-1$ for $a < 0$.

Mutations are involutive: applying $\mu_k$ twice leaves a seed
invariant and $\mu_k^* \circ \mu_k^*$ is the identity map on each of
the seed tori.

For a permutation $\auto\colon I \to I$, a seed
$\seed' = (I,\varepsilon')$ is said to be obtained from a seed $\seed$
by the \emph{automorphism} $\auto\colon \seed \to \seed'$ if
\begin{equation}
  \varepsilon'_{\auto(i)\auto(j)} = \varepsilon_{ij} \,.
\end{equation}
On the seed tori $\auto$ acts by relabeling coordinates:
\begin{equation}
  \label{eq:s}
  \auto^*(A'_{\auto(i)}) = A_i \,,
  \quad
  \auto^*(X'_{\auto(i)}) = X_i \,,
  \quad
  \auto^*(B'_{\auto(i)}) = B_i \,.
\end{equation}

A \emph{cluster transformation} $\ct\colon \seed \to \seed'$ is a
composition of mutations and automorphisms that takes a seed $\seed$
to a seed $\seed'$.  We write
\begin{equation}
  \seed \xrightarrow{\ct} \seed'
\end{equation}
to mean that $\seed'$ is obtained from $\seed$ by a cluster
transformation $\ct$.  A cluster transformation
$\ct\colon \seed \to \seed'$ induces birational maps from the seed
tori assigned to $\seed$ to those assigned to $\seed'$, which are the
compositions of the birational maps corresponding to the mutations and
automorphisms consisting of $\ct$.

\begin{definition}
  A cluster transformation $\ct\colon \seed \to \seed$ is said to be
  \emph{trivial} if it acts on $\CA_\seed$, $\CX_\seed$ and
  $\cD_\seed$ by the identity maps.
\end{definition}

\begin{remark}
  In fact, $\ct$ is trivial if it acts trivially on one of
  $\CA_\seed$, $\CX_\seed$ and $\cD_\seed$.  If $\ct$ acts trivially
  on $\CX_\seed$ or $\cD_\seed$, then it acts on the tropical
  $\CX$-variables trivially and hence is trivial by
  Theorem~\ref{eq:trivial-trop}.  Suppose $\ct$ acts on $\CA_\seed$
  trivially.  For each vertex $i \in I$, add another vertex and
  connect it to the vertex $i$ by an arrow; the resulting seed
  $\widetilde{\seed} = (\tilde{I}, \tilde{\varepsilon})$ has
  $\det \tilde{\varepsilon} = \pm 1$.  Theorem 4.3 of \cite{MR2931902}
  (applied to the case in which the semifield $\mathbb{P}$ is trivial)
  shows that the cluster transformation $\tilde{\ct}$ on
  $\widetilde{\seed}$ corresponding to $\ct$ leaves
  $\widetilde{\seed}$ invariant and acts on $\CA_{\widetilde{\seed}}$
  trivially.  Then, $\tilde{\ct}$ also acts on
  $\CX_{\widetilde{\seed}}$ trivially because the isomorphism
  $\tilde{p}_{\widetilde{\seed}}\colon \CA_{\widetilde{\seed}} \to
  \CX_{\widetilde{\seed}}$ given by
  $p_{\widetilde{\seed}}^* X_k = \prod_{i \in I}
  A_i^{\tilde{\varepsilon}_{ki}}$ commutes with $\tilde{\ct}$.  Since
  $\tilde{\ct}$ restricts to $\ct$ on $\CX_\seed$, the latter acts on
  $\CX_\seed$ trivially.
\end{remark}

Let $\ct\colon \seed \to \seed'$ be a cluster transformation from a
seed $\seed = (I, \varepsilon)$ to a seed
$\seed' = (I, \varepsilon')$.  By definition there is a decomposition
\begin{equation}
  \label{eq:c-decomposition}
  \ct = \ct[N] \circ \dotsb \circ \ct[2] \circ \ct[1] \,,
\end{equation}
where $\ct[t]$ is either a mutation or an automorphism.  This
decomposition defines a collection of seeds
$\seed[t] = (I, \varepsilon[t])$, $t = 1$, $2$, $\dotsc$, $N+1$, such
that
\begin{equation}
  \seed =: \seed[1]
  \xrightarrow{\ct[1]} \seed[2]
  \xrightarrow{\ct[2]}
  \dotsb
  \xrightarrow{\ct[N]} \seed[N+1] := \seed' \,.
\end{equation}
The cluster transformation $\ct$ is trivial if and only if
$\varepsilon[N+1] = \varepsilon$, $A[N+1] = A$, $X[N+1] = X$ and
$B[N+1] = B$.

Gluing the seed $\CA$-tori $\CA_{\seed'}$ assigned to all seeds
$\seed'$ related to $\seed$ by cluster transformations, we obtain the
\emph{cluster $\CA$-variety} $\CA_{|\seed|}$.  Similarly, the
\emph{cluster $\CX$-variety} $\CX_{|\seed|}$ and the \emph{cluster
  $\cD$-variety} $\cD_{|\seed|}$ are constructed from the seed
$\CX$-tori $\CX_{\seed'}$ and the seed $\cD$-tori $\cD_{\seed'}$
assigned to all seeds $\seed'$ related to $\seed$ by cluster
transformations.  The Poisson structure on $\cD_{\seed}$ given by
\begin{equation}
  \label{eq:D-Poisson}
  \{X_i, X_j\} = \varepsilon_{ij} X_i X_j \,,
  \quad
  \{X_i, B_j\} = \delta_{ij} X_i B_j \,,
  \quad
  \{B_i, B_j\} = 0
\end{equation}
is invariant under cluster transformations and defines a Poisson
structure on $\cD_{|\seed|}$ and $\CX_{|\seed|}$.

\subsection{Tropical $\CX$-variables}

Sometimes it is convenient to introduce the cluster variables $A$, $X$
and $B$ as formal variables assigned to a seed $\seed$.  In this
context, the pairs $(\seed, A)$, $(\seed, X)$ and $(\seed, (X,B))$ are
referred to as an \emph{$\CA$-seed}, an \emph{$\CX$-seed} and a
\emph{$\cD$-seed}, respectively, and a cluster transformation
$\ct\colon \seed \to \seed'$ is interpreted as providing relations
between variables $A$, $X$, $B$ assigned to $\seed$ and variables
$A'$, $X'$, $B'$ assigned to $\seed'$ via formulas \eqref{eq:A'},
\eqref{eq:X'}, \eqref{eq:B'} and \eqref{eq:s}.  For each $t$ in a
decomposition~\eqref{eq:c-decomposition} of $\ct$, the variable
$A_i[t]$ (pulled back by $c[t-1] \circ \dotsb \circ c[2] \circ c[1]$)
is an element of the \emph{universal semifield} $\Puniv(A)$, which is
the set of nonzero rational functions in the initial variables
$A = (A_i)_{i \in I}$ that can be written as subtraction-free
expressions, endowed with addition and multiplication.  Similarly,
$X_i[t] \in \Puniv(X)$ and $B_i[t] \in \Puniv((X,B))$.

For an $I$-tuple of variables $u = (u_i)_{i \in I}$, the
\emph{tropical semifield} $\Ptrop(u)$ is defined as the abelian
multiplicative group freely generated by $u$, endowed with the
addition $\oplus$ given by
\begin{equation}
  \prod_{i \in I} u_i^{a_i} \oplus \prod_{i \in I} u_i^{b_i}
  := \prod_{i \in I} u_i^{\min(a_i,b_i)} \,.
\end{equation}
There is a semifield homomorphism
$\pitrop\colon \Puniv(u) \to \Ptrop(u)$, known as the
\emph{tropicalization map}, such that
\begin{equation}
  \pitrop(u_i) = u_i \,,
  \quad
  i \in I \,,
  \qquad
  \pitrop(c) = 1
  \,,
  \quad
  c \in \Q_{>0} \,.
\end{equation}
The tropicalization of $\CX$-variables are called \emph{tropical
  $\CX$-variables}.

For any $t \in \{1, 2, \dotsc, L\}$, we have
\begin{equation}
  \Xtrop_i[t] := \pitrop(X_i[t])  = \prod_{j \in I} \Xtrop_j^{a_{ij}[t]}
\end{equation}
with either $a_{ij}[t] \geq 0$ for all $j \in I$ or
$a_{ij}[t] \leq 0$ for all $j \in I$ (``sign coherence'').  The
\emph{tropical sign} $\sgntrop(\Xtrop_i[t])$ of $\Xtrop_i[t]$ is defined by
\begin{equation}
  \sgntrop(\Xtrop_i[t])
  :=
  \begin{cases}
    +1 & \text{if $a_{ij}[t] \geq 0$ for all $j \in I$} \,;
    \\
    -1 & \text{if $a_{ij}[t] \leq 0$ for all $j \in I$} \,.
  \end{cases}
\end{equation}
If $\ct[t]$ is a mutation in the direction of $k \in I$, then
\begin{equation}
  \label{eq:y-mutation}
  \Xtrop_i[t+1]
  =
  \begin{cases}
    \Xtrop_k[t]^{-1}
    & \text{if $i = k$} \,;
    \\
    \Xtrop_i[t] \Xtrop_k[t]^{[\sgntrop(\Xtrop_k[t]) \varepsilon_{ik}]_+}
    & \text{if $i \neq k$} \,,
  \end{cases}
\end{equation}
where $[a]_+ = a$ for $a \geq 0$ and $0$ for $a < 0$.

The following theorem provides considerable simplification of the
criterion for a cluster transformation to be trivial:
\begin{theorem}[{\cite[Theorem 5.1]{MR3029994}}]
  \label{eq:trivial-trop}
  A cluster transformation $\ct\colon \seed \to \seed$ from a seed
  $\seed = (I, \varepsilon)$ to itself is trivial if and only if
  $\pitrop(\ct^*(X_i)) = \pitrop(X_i)$ for all $i \in I$.
\end{theorem}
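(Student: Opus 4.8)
The forward implication is immediate: a trivial $\ct$ acts as the identity on $\CX_\seed$, so $\ct^*(X_i) = X_i$ in $\Puniv(X)$ and hence $\pitrop(\ct^*(X_i)) = \pitrop(X_i)$ for all $i \in I$.

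For the converse the plan is to reduce the whole statement to two standard facts about cluster algebras: the \emph{separation of additions} of Fomin and Zelevinsky, which expresses the $\CA$- and $\CX$-variables along a mutation path through the initial variables, the $F$-polynomials, and the tropical $\CX$-variables; and the \emph{tropical duality} of Nakanishi and Zelevinsky between the $C$- and $G$-matrices. Fix a decomposition \eqref{eq:c-decomposition} of $\ct$, so that $\seed = \seed[1] \to \dotsb \to \seed[N+1] = \seed$. After collecting the automorphisms occurring in $\ct$ into a single relabeling (a standard manipulation I will suppress), one may treat the path as a pure mutation sequence and work in the cluster pattern with principal coefficients at $\seed$. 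There the exponent matrix $(a_{ij}[N+1])$ of the tropical $\CX$-variables $\Xtrop_i[N+1] = \pitrop(\ct^*(X_i)) = \prod_{j \in I}\Xtrop_j^{a_{ij}[N+1]}$, obtained from the standard basis by iterating \eqref{eq:y-mutation}, is precisely the $C$-matrix of $\seed[N+1]$. So the hypothesis, namely $\Xtrop_i[N+1] = \Xtrop_i$ for all $i$, says exactly that this $C$-matrix is the identity.

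Given this, I would argue as follows. By sign coherence (recorded in the excerpt) and tropical duality the $G$-matrix of $\seed[N+1]$ is then also the identity; and since quivers have skew-symmetric exchange matrices, cluster monomials of the associated cluster algebra are linearly independent, so distinct cluster variables have distinct $g$-vectors, whence the seed $\seed[N+1]$ coincides with the initial seed $\seed$ inside the cluster pattern. In particular $\ct^*(A_i) = A_i$ for all $i$, so $\ct$ is trivial on $\CA_\seed$, and every $F$-polynomial of $\seed[N+1]$ equals $1$. Now substitute $a_{ij}[N+1] = \delta_{ij}$, all $g$-vectors equal to standard basis vectors, and all $F$-polynomials equal to $1$ into the separation-of-additions formulas for the $\CX$-variables and for the $\cD$-variables; for the latter one uses the separation formula for cluster variables with coefficients, the $\CX$-variables playing the role of the coefficients, which is legitimate because the exchange relation \eqref{eq:B'} is exactly the one with coefficient $X_k$ and because $F$-polynomials depend only on the exchange matrix and the mutation sequence. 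Each of these formulas collapses, yielding $\ct^*(X_i) = X_i$ and $\ct^*(B_i) = B_i$ for all $i$. Hence $\ct$ acts as the identity on $\CA_\seed$, $\CX_\seed$ and $\cD_\seed$, so it is trivial.

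I expect the substantive part to be the middle step: upgrading ``the $C$-matrix returns to the identity'' to ``the whole seed returns.'' This is where tropical duality and the linear independence of cluster monomials (valid in the skew-symmetric case) are essential, and it is the real content of the synchronicity between the $\CA$-, $\CX$- and tropical patterns. By contrast, the forward direction, the reduction of automorphisms to relabelings, and the final collapse of the separation formulas are routine bookkeeping with \eqref{eq:A'}, \eqref{eq:X'}, \eqref{eq:B'} and \eqref{eq:y-mutation}.
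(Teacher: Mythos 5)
You should first note that the paper does not prove this statement at all: it is imported verbatim as \cite[Theorem 5.1]{MR3029994}, so there is no in-paper argument to compare against. Judged on its own terms, your outline is essentially a reconstruction of how this synchronicity result is actually established in the literature: identify the exponent matrix of the tropical $\CX$-variables with the $C$-matrix of the principal-coefficient pattern, use sign coherence and the Nakanishi--Zelevinsky tropical duality (in the skew-symmetric case the $G$-matrix is the inverse transpose of the $C$-matrix) to get $G=\mathrm{id}$, conclude that the labeled seed returns, hence all $F$-polynomials are $1$, and then collapse the Fomin--Zelevinsky separation formulas; your observation that the $\cD$-mutation \eqref{eq:B'} is exactly the normalized exchange relation with coefficients in $\Puniv(X)$, so that the same $F$-polynomials and $g$-vectors govern the $B$-variables, is the right way to handle the $\cD$-torus, and the hypothesis $\ct\colon\seed\to\seed$ already supplies the return of the exchange matrix that this identification needs. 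So the strategy is sound, and it is legitimate that all the heavy lifting sits in quoted black boxes, since the original proof relies on the same inputs.

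The one inference that does not hold as stated is ``cluster monomials are linearly independent, so distinct cluster variables have distinct $g$-vectors.'' Linear independence of cluster monomials and injectivity of the $g$-vector map are two different (if related) Fomin--Zelevinsky conjectures, and neither follows formally from the other; what you actually need is the second one --- that distinct cluster variables (indeed cluster monomials) have distinct $g$-vectors --- which is a theorem in the skew-symmetric case by the Derksen--Weyman--Zelevinsky/Plamondon circle of results (and in general by Gross--Hacking--Keel--Kontsevich). Replace the appeal to linear independence by a citation of that result and the middle step closes; nothing else in your argument is affected.
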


\subsection{Quantum cluster varieties}

There is canonical deformation quantization of the algebra of regular
functions on $\CX_{|\seed|}$ and that on $\cD_{|\seed|}$, which leads
to the notions of quantum cluster $\CX$-variety and quantum cluster
$\cD$-variety.

Let $q$ be a formal parameter.  For a seed
$\seed = (I,\varepsilon)$, the \emph{quantum torus algebra}
$\mathbf{D}^q_\seed$ is the algebra over $\Z[q, q^{-1}]$ generated by
variables $(X^q, B^q) = (X^q_i, B^q_i)_{i \in I}$ subject to the
relations
\begin{equation}
  \label{eq:XB}
  q^{-\varepsilon_{ij}} X^q_i X^q_j =  q^{-\varepsilon_{ji}} X^q_j X^q_i \,,
  \quad
  q^{-\delta_{ij}} X^q_i B^q_j = q^{\delta_{ij}} B^q_j X^q_i \,,
  \quad
  B^q_i B^q_j = B^q_j B^q_i \,.
\end{equation}
We introduce an algebra over $\C$ generated by a formal parameter
$\hbar$ and variables $x^\hbar = (x^\hbar_i)_{i \in I}$,
$b^\hbar = (b^\hbar_i)_{i \in I}$ such that
\begin{equation}
  \label{eq:xb-comm}
  [x^\hbar_i, x^\hbar_j] = 2\pi\iu \hbar \varepsilon_{ij} \,,
  \quad
  [x^\hbar_i, b^\hbar_j] = 2\pi\iu \hbar \delta_{ij} \,,
  \quad
  [b^\hbar_i, b^\hbar_j] = 0 \,.
\end{equation}
Then, 
\begin{equation}
  q := \exp(\pi\iu\hbar) \,,
  \quad
  X^q_i := \exp(x^\hbar_i) \,,
  \quad
  B^q_i := \exp(b^\hbar_i)
\end{equation}
gives an embedding of $\mathbf{D}^q_\seed$ into this algebra.

We also introduce
\begin{equation}
  \xt^\hbar_i := x^\hbar_i + \sum_{j \in I} \varepsilon_{ij} b^\hbar_j
\end{equation}
and set
\begin{equation}
  \label{eq:Xt}
  \Xt^q_i := \exp(\xt^\hbar_i) = X^q_i \prod_{j \in I} (B^q_j)^{\varepsilon_{ij}} \,.
\end{equation}
These variables satisfy the relations
\begin{equation}
  \label{eq:xt-comm}
  [\xt^\hbar_i, \xt^\hbar_j] = -2\pi\iu \hbar \varepsilon_{ij} \,,
  \quad
  [\xt^\hbar_i, b^\hbar_j] = 2\pi\iu \hbar \delta_{ij} \,,
  \quad
  [x^\hbar_i, \xt^\hbar_j] = 0
\end{equation}
and
\begin{equation}
  \label{eq:XtBX}
  q^{\varepsilon_{ij}} \Xt^q_i \Xt^q_j =  q^{\varepsilon_{ji}} \Xt^q_j \Xt^q_i \,,
  \quad
  q^{-\delta_{ij}} \Xt^q_i B^q_j = q^{\delta_{ij}} B^q_j \Xt^q_i \,,
  \quad
  X^q_i \Xt^q_j = X^q_j \Xt^q_i \,.
\end{equation}

Let $\mathbb{D}^q_\seed$ be the skew-field of fractions of
$\mathbf{D}^q_\seed$.  For a cluster transformation
$\ct\colon \seed \to \seed'$, we define the \emph{quantum cluster
  transformation}
\begin{equation}
  \ct^q \colon \mathbb{D}^q_{\seed'} \to \mathbb{D}^q_\seed
\end{equation}
as follows.  (Note the direction; $\ct^q$ quantizes the pullback by
$\ct\colon \cD_\seed \to \cD_{\seed'}$.)  For $\ct = \auto$, the map
$\ct^q = \auto^q$ is given by
\begin{equation}
  \auto^q(X'^q_{\auto(i)}) = X^q_i \,,
  \qquad
  \auto^q(B'^q_{\auto(i)}) = B^q_i \,.
\end{equation}
For $\ct = \mu_k$, the map $\ct^q = \mu_k^q$ is a composition of two
maps:
\begin{equation}
  \label{eq:mu_k^q}
  \mu_k^q
  := \mu_k^{\sharp(+)} \circ \mu_k^{\flat(+)}
  = \mu_k^{\sharp(-)} \circ \mu_k^{\flat(-)} \,.
\end{equation}
The fact that $\mu_k^q$ admits two decompositions will be important.
In general, $\ct^q$ is given by the composition
$\ct^q[1] \circ \ct^q[2] \circ \dotsb \circ \ct^q[N]$ of quantum
cluster transformations corresponding to a decomposition of $\ct$ into
mutations and automorphisms.

The ``automorphism part''
$\mu_k^{\sharp(\epsilon)}\colon \mathbb{D}^q_\seed \to
\mathbb{D}^q_\seed$ of $\mu_k^q$ is defined by
\begin{equation}
  \mu_k^{\sharp(\epsilon)}
  :=
  \Ad_{\Psi^q((X^q_k)^\epsilon)^\epsilon \Psi^q((\Xt^q_k)^\epsilon)^{-\epsilon}} \,,
  \quad
  \epsilon = \pm \,,
\end{equation}
where
\begin{equation}
  \Psi^q(x) := \prod_{k=1}^\infty (1 + q^{2k-1} x)^{-1}
\end{equation}
is the \emph{quantum dilogarithm}.  Using the difference equation
\begin{equation}
  \Psi^q(q^2 x) = (1 + qx) \Psi^q(x) \,,
\end{equation}
one can show that $\mu_k^{\sharp(\epsilon)}(X'^q_i)$ and
$\mu_k^{\sharp(\epsilon)}(B'^q_i)$ belong to $\mathbb{D}^q_\seed$
\cite[Lemma 3.2]{MR2470108}.

The ``monomial part''
$\mu_k^{\flat(\epsilon)}\colon \mathbf{D}^q_{\seed'} \to
\mathbf{D}^q_\seed$ is given by
\begin{equation}
  \mu_k^{\flat(\epsilon)}(X'^\hbar_i)
  := \exp\bigl(m_k^{(\epsilon)}(x'^\hbar_i)\bigr) \,,
  \qquad
  \mu_k^{\flat(\epsilon)}(B'^\hbar_i)
  := \exp\bigl(m_k^{(\epsilon)}(b'^\hbar_i)\bigr) \,,
\end{equation}
where
\begin{align}
  \label{eq:x-mutation-monomial}
  m_k^{(\epsilon)}(x'^\hbar_i)
  &:=
  \begin{cases}
    -x^\hbar_k
    &
    \text{if $i = k$} \,;
    \\
    x^\hbar_i + [\epsilon\varepsilon_{ik}]_+ x^\hbar_k
    &
    \text{if $i \neq k$} \,,
  \end{cases}
  \\
  m_k^{(\epsilon)}(b'^\hbar_i)
  &:=
  \begin{cases}
    -b^\hbar_k + \sum_{j \in I} [-\epsilon\varepsilon_{kj}]_+ b^\hbar_j
    &
    \text{if $i = k$} \,;
    \\
    b^\hbar_i
    &
    \text{if $i \neq k$} \,.
  \end{cases}
\end{align}
The transformations of $x^\hbar$ and $b^\hbar$ are dual to each other:
if we write
$m_k^{(\epsilon)}(x'^\hbar_i) = \sum_{j \in I}
(M_k^{(\epsilon)})_{ij} x^\hbar_j$ using a matrix $M_k^{(\epsilon)}$,
then
$m_k^{(\epsilon)}(b'^\hbar_i) = \sum_{j \in I} b^\hbar_j
((M_k^{(\epsilon)})^{-1})_{ji}$.

For $q = 1$, the formula for $\ct^q$ reduces to that for the action of
$\ct$ on $\cD_\seed$.  In particular, if the quantum cluster
transformation $\ct^q$ induced from $\ct\colon \seed \to \seed$ is the
identity map, then $\ct$ also acts trivially on $\cD_\seed$ and hence
on the tropicalization of $\CX_\seed$, which implies that $\ct$ is a
trivial cluster transformation by Theorem~\ref{eq:trivial-trop}.  It
turns out that the converse is also true:
\begin{proposition}  
  \label{prop:trivial-cq}
  A cluster transformation $\ct\colon \seed \to \seed$ is trivial if
  and only if $\ct^q = \id_{\mathbb{D}^q_\seed}$.
\end{proposition}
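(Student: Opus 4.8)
The ``if'' direction is exactly the observation made just before the statement: if $\ct^q = \id_{\mathbb{D}^q_\seed}$, then setting $q = 1$ shows that $\ct$ acts on $\cD_\seed$ by the identity, hence trivially on the tropical $\CX$-variables, so $\ct$ is trivial by Theorem~\ref{eq:trivial-trop}. Thus the task is the converse, and the plan runs as follows. Assume $\ct\colon\seed\to\seed$ is trivial and fix a decomposition $\ct = \ct[N]\circ\dotsb\circ\ct[1]$ with intermediate seeds $\seed = \seed[1],\dotsc,\seed[N+1] = \seed$. At each mutation step $\ct[t] = \mu_{k_t}$ I would choose the sign $\epsilon_t := \sgntrop(\Xtrop_{k_t}[t])$, which is well defined by sign coherence, and decompose $\mu_{k_t}^q = \mu_{k_t}^{\sharp(\epsilon_t)}\circ\mu_{k_t}^{\flat(\epsilon_t)}$ accordingly.

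The monomial parts $\mu_{k_t}^{\flat(\epsilon_t)}$ and the automorphism maps $\auto^q$ act on the Heisenberg generators $x^\hbar$ and $b^\hbar$ by linear substitutions, so they compose by substitution, without producing Baker--Campbell--Hausdorff scalars. Commuting all of them past the inner parts $\mu_{k_t}^{\sharp(\epsilon_t)}$ yields $\ct^q = \Ad_{\mathbf{K}}\circ\Phi$, where $\Phi$ is a monomial automorphism of $\mathbb{D}^q_\seed$ and $\mathbf{K}\in\mathbb{D}^q_\seed$ is a finite product of quantum dilogarithms $\Psi^q$ evaluated at Laurent monomials in the $X^q_i$ and $B^q_i$. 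Now $\Phi$ is determined by its linear action on the $x^\hbar$ together with that on the $b^\hbar$, and these are dual (transpose-inverse) to one another as recalled after \eqref{eq:x-mutation-monomial}, so $\Phi = \id$ as soon as its action on the $x^\hbar$ is trivial. With the choice $\epsilon_t = \sgntrop(\Xtrop_{k_t}[t])$, comparing the monomial rule \eqref{eq:x-mutation-monomial} with the tropical mutation rule \eqref{eq:y-mutation} shows that the $x^\hbar$-part of $\Phi$ is precisely the tropicalized $\CX$-transformation $X_i\mapsto\pitrop(\ct^*(X_i))$, which is the identity by Theorem~\ref{eq:trivial-trop}. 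Hence $\Phi = \id$ and $\ct^q = \Ad_{\mathbf{K}}$.

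It remains to show that $\mathbf{K}$ is central. The key point is that $\ct^q$ preserves the quantum torus generated by the $X^q_i$ and restricts there to the quantum $\CX$-cluster transformation --- on that subalgebra the factors $\Psi^q((\Xt^q_k)^{\pm 1})$ conjugate trivially, since $[x^\hbar_i,\xt^\hbar_j] = 0$ --- and likewise preserves the quantum torus generated by the $\Xt^q_i$, restricting there to the quantum $\CX$-cluster transformation for the exchange matrix $-\varepsilon$. Since $\ct$ is trivial, both of the corresponding classical $\CX$-cluster transformations, and hence their tropicalizations, are the identity; granting the rigidity statement discussed below, both quantum $\CX$-cluster transformations are then the identity, so $\mathbf{K}$ commutes with every $X^q_i$ and every $\Xt^q_i$. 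A short computation with the commutation relations --- immediate when $\varepsilon$ is nondegenerate, and reduced to that case in general by enlarging $\seed$ to the unimodular seed $\widetilde{\seed}$ of the Remark above, on which the lifted transformation $\tilde{\ct}$ is again trivial --- then gives $\ct^q(B^q_i) = B^q_i$ as well, and hence $\ct^q = \id$.

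I expect the rigidity step to be the main obstacle. It is the assertion that a quantum $\CX$-cluster transformation is trivial whenever its tropicalization is (equivalently, that the quantum dilogarithm identities attached to trivial cluster transformations hold), and it is not a formal consequence of Theorem~\ref{eq:trivial-trop}. One would prove it by following the quantum $\CX$-variables along the mutation path: sign coherence controls which $q$-powers occur, and matching leading monomials against the tropical data, together with the $q = 1$ specialization --- which returns the trivial classical transformation --- forces the quantum transformation to be trivial; alternatively one may quote the corresponding statements in \cite{MR2470108, MR3029994}. Once $\Phi$ has been split off, everything else is routine.
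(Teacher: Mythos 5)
Your outer reduction is sound and, in fact, mirrors how the paper treats the monomial parts in the proof of Proposition~\ref{prop:trivial-K}: with the tropical choice of signs, the composite of the monomial maps acts on the $x^\hbar$'s as the tropicalized $\CX$-transformation and on the $b^\hbar$'s by the transpose-inverse, so triviality of $\ct$ kills $\Phi$ and leaves $\ct^q = \Ad_{\mathbf K}$ with $\mathbf K$ a product of quantum dilogarithms; and it is true that $\ct^q$ restricts to the quantum $\CX$-transformations on the subalgebras generated by the $X^q_i$ and by the $\Xt^q_i$. The genuine gap is exactly the step you flag as ``the main obstacle'': the assertion that a quantum $\CX$-cluster transformation is the identity whenever its tropicalization (equivalently, its classical limit) is, i.e.\ that $\Ad_{\mathbf K}$ fixes every $X^q_i$ and $\Xt^q_i$. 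This is not a peripheral lemma --- it is essentially the content of the proposition restricted to the $\CX$-part --- and your sketch does not establish it: the $q=1$ specialization together with ``matching leading monomials'' does not control $q$-dependent corrections (overall powers of $q$, or terms vanishing at $q=1$) in the quantum Laurent expansion, and ruling these out is precisely what the known quantum-periodicity results achieve by nontrivial means. The paper does not re-derive this input either; it imports it, proving Proposition~\ref{prop:trivial-cq} either by citing Proposition 5.21 of \cite{MR2470108} together with Theorem 4.3 of \cite{MR2931902}, or as a corollary of Proposition~\ref{prop:trivial-K}, whose proof rests on the Kashaev--Nakanishi identity \eqref{eq:Phi-identity-tropical} of \cite{MR2861174} --- a genuinely quantum statement that is not a formal consequence of Theorem~\ref{eq:trivial-trop}.

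If you replace the heuristic by the citation you offer, the correct sources are \cite{MR2470108} (or \cite{MR2861174}); \cite{MR3029994} is a purely classical result (it is the source of Theorem~\ref{eq:trivial-trop}) and does not contain the quantum statement. With that citation your argument essentially collapses into the paper's first route, with your $X$/$\Xt$ and nondegeneracy analysis playing the role of Theorem 4.3 of \cite{MR2931902}. Even then, your final step needs more care than ``a short computation'': fixing the monomials $\Xt^q_i (X^q_i)^{-1} \propto \prod_j (B^q_j)^{\varepsilon_{ij}}$ only pins down $B_i$ directly when $\varepsilon$ is unimodular (mere nondegeneracy gives you powers $B_i^N$), and the reduction to the enlarged seed $\widetilde{\seed}$ is not a simple restriction, since the $\cD$-mutation of $B_k$ and the variables $\Xt$ for $\widetilde{\seed}$ involve the $b$-variables at the newly attached vertices; justifying that the original $\ct^q$ is recovered from the lifted one requires an actual argument.
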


This proposition can be proved from Proposition 5.21 of
\cite{MR2470108} and Theorem~4.3 of \cite{MR2931902}.  The proposition
also follows from Proposition \ref{prop:trivial-K}.

\subsection{Representations of quantum cluster varieties}

From now on we take $\hbar$ to be a positive real number.  For a
seed $\seed = (I,\varepsilon)$, let
\begin{equation}
  \CH_\seed := L^2(\CA_\seed^+)
\end{equation}
be the Hilbert space of square-integrable complex functions on the set
of positive real points $\CA_\seed^+ \cong \R^I$ of $\CA_\seed$.  This
is the Hilbert space of states in quantum mechanics of a particle
moving in $\CA_\seed^+$.  Coordinates of $\CA_\seed^+$ are given by
$a_i := \log A_i$, $i \in I$.

The differential operators
\begin{equation}
  \xop_i
  = \pi\iu\hbar \frac{\partial}{\partial a_i}
    - \sum_{j \in I} \varepsilon_{ij} a_j \,,
  \qquad
  \bop_i = 2a_i
\end{equation}
on functions in $\CH_\seed$ satisfy the
commutation relations~\eqref{eq:xb-comm}, and their exponentials
\begin{equation}
  \Xop_i := \exp(\xop_i) \,,
  \qquad
  \Bop_i := \exp(\bop_i) 
\end{equation}
satisfy relations~\eqref{eq:XB}.  The operators
\begin{equation}
  \xtop_i
  = \pi\iu\hbar \frac{\partial}{\partial a_i}
  + \sum_{j \in I} \varepsilon_{ij} a_j \,,
  \qquad
  \Xtop_i := \exp(\xtop_i) \,.
\end{equation}
satisfy relations~\eqref{eq:xt-comm} and \eqref{eq:XtBX}.

Let $\mathbb{L}_\seed$ be the space of Laurent polynomials in the
quantum variables $(X^q, B^q)$ assigned to $\seed$ such that for any
cluster transformation $\ct'\colon \seed' \to \seed$, the quantum
cluster transformation
$\ct'^q\colon \mathbb{D}^q_\seed \to \mathbb{D}^q_{\seed'}$ maps them
to Laurent polynomials in $(X'^q, B'^q)$.  The operators corresponding
to the elements of $\mathbb{L}_\seed$ preserve a certain subspace
$\CS_\seed$ of rapidly decreasing functions in $\CH_\seed$, and they
provide a representation of $\mathbb{L}_\seed$ on $\CS_\seed$.

A cluster transformation $\ct\colon \seed \to \seed'$ gives rise to an
isomorphism of algebras
$\ct^q\colon \mathbb{L}_{\seed'} \to \mathbb{L}_\seed$.  The
representation of $\mathbb{L}_\seed$ on $\CS_\seed$ and that of
$\mathbb{L}_{\seed'}$ on $\CS_{\seed'}$ are intertwined by a unitary
operator $\K{\ct}\colon \CH_{\seed'} \to \CH_\seed$:
\begin{equation}
  \K{\ct} \widehat{A} \K{\ct}^{-1} = \widehat{\ct^q(A)} \,,
  \qquad
  A \in \mathbb{L}_{\seed'} \,.
\end{equation}
Here $\widehat{A}$ denotes the operator corresponding to $A$.

For $\ct = \auto$, the intertwiner is simply
\begin{equation}
  \K{\auto} := \auto^* \,,
\end{equation}
the pullback by $\auto$ considered as a transformation on $\R^I$:
\begin{equation}
  \auto\colon
  (a_i)_{i \in I}
  \mapsto (a'_i)_{i \in I} = (a_{\alpha^{-1}(i)})_{i \in I} \,.
\end{equation}

The intertwiner $\K{\mu_k}$ for a mutation $\mu_k$ decomposes as
\begin{equation}
  \label{eq:K_mu_k}
  \K{\mu_k}
  :=
  \K{\mu_k}^{\sharp(+)}
  \K{\mu_k}^{\flat(+)}
  =
  \K{\mu_k}^{\sharp(-)}
  \K{\mu_k}^{\flat(-)} \,,
\end{equation}
corresponding to the decompositions~\eqref{eq:mu_k^q}.  The fact that
$\K{\mu_k}$ admits these two decompositions was observed by
Kim~\cite{MR4179968}.  We will show the equality of the two
decompositions in section~\ref{sec:QFT}.

The monomial part
$\K{\mu_k}^{\flat(\epsilon)}\colon \CH_{\seed'} \to \CH_\seed$ is the
pullback by the transformation
\begin{equation}
  \label{eq:a-mutation-monomial}
  (a_i)_{i \in I} \mapsto (a'_i)_{i \in I} \,,
  \quad
  a'_i
  =
  \begin{cases}
    -a_k + \sum_{j \in I} [-\epsilon\varepsilon_{kj}]_+ a_j
    &
    \text{if $i = k$} \,;
    \\
    a_i
    &
    \text{if $i \neq k$} \,.
  \end{cases}
\end{equation}

The automorphism part
$\K{\mu_k}^{\sharp(\epsilon)}\colon \CH_\seed \to \CH_\seed$ is given by
\begin{equation}
  \K{\mu_k}^{\sharp(\epsilon)}
  = \Phi^\hbar(\epsilon\xop_k)^\epsilon \Phi^\hbar(\epsilon\xtop_k)^{-\epsilon} \,.
\end{equation}
The function $\Phi^\tau(z)$, $\tau \in \C$, is the \emph{noncompact
  quantum dilogarithm}.  It is defined for
$|\Im z| < \pi(1 + \Re \tau)$ by the integral
\begin{equation}
  \Phi^\tau(z)
  :=
  \exp\biggl(-\frac14 \int_{\R + \iu0}
  \frac{e^{-\iu wz}}{\sinh(\pi w) \sinh(\pi\tau w)}
  \frac{\rmd w}{w}\biggr) \,,
\end{equation}
where the contour along the real axis going from $-\infty$ to $\infty$
and bypassing the origin from above, and is analytically continued to
the entire complex plane.  The unitarity of $\K{\mu_k}$ follows from
the property that 
\begin{equation}
  \overline{\Phi^\tau(z)} = \Phi^\tau(\bar{z})^{-1}
\end{equation}
if $\tau$ is a positive real number or a pure phase.

For $\Im \tau > 0$, we have
\begin{equation}
  \Phi^\tau(z) = \frac{\Psi^q(e^z)}{\Psi^{1/q^\vee}(e^{z/\tau})} \,,
  \quad
  q^\vee := e^{\pi\iu/\tau} \,,
\end{equation}
and hence $\Phi^\tau(z)$ satisfies the difference equations
\begin{align}
  \label{eq:Phi-diff-tau}
  \Phi^\tau(z + 2\pi\iu\tau) &= (1 + q e^z) \Phi^\tau(z) \,,
  \\
  \label{eq:Phi-diff-1}
  \Phi^\tau(z + 2\pi\iu) &= (1 + q^\vee e^{z/\tau}) \Phi^\tau(z) \,.
\end{align}
From the first of these equations we see that conjugation by
$\K{\mu_k}^{\sharp(\epsilon)}$ acts as $\mu_k^{\sharp(\epsilon)}$ on
$\Xop_i$ and $\Bop_i$.

\subsection{Quantum dilogarithm identities}

Let $\ct\colon \seed \to \seed$ be a trivial cluster transformation.
By Proposition \ref{prop:trivial-cq}, the quantum cluster
transformation $\ct^q$ is also trivial.  The dual variables
$X^{q^\vee}$, $B^{q^\vee}$, defined by
\begin{equation}
  X^{q^\vee} := \exp(x^{\hbar^\vee}) \,,
  \quad
  B^{q^\vee} := \exp(b^{\hbar^\vee}) \,,
  \quad
  x^{\hbar^\vee} := \hbar^{-1} x^\hbar \,,
  \quad
  b^{\hbar^\vee} := \hbar^{-1} b^\hbar \,,
\end{equation}
commute with $X^q$, $B^q$ and satisfy relations \eqref{eq:XB} with $q$
replaced by $q^\vee = \exp(\pi\iu/\hbar)$.  The difference
equation~\eqref{eq:Phi-diff-1} shows that conjugation by $\K{\mu_k}$
acts on $\Xop^\vee$, $\Bop^\vee$ as $\mu_k^{q^\vee}$, whereas
conjugation by $\K{\auto}$ acts in the same way on $\Xop^\vee$,
$\Bop^\vee$ and on $\Xop$, $\Bop$.  It follows that $\K{\ct}$ commutes
with $\Xop^\vee$, $\Bop^\vee$ since it commutes with $\Xop$, $\Bop$.
The fact that $\K{\ct}$ commutes with both sets of variables implies
that $\K{\ct} = \lambda_{\ct} \id_{\CH_\seed}$ for some complex number
$\lambda_{\ct}$ with $|\lambda_{\ct}| = 1$ \cite[Theorem
5.4]{MR2470108}.

In fact, $\lambda_{\ct} = 1$, as was pointed out for important
specific cluster transformations by Kim~\cite{MR4179968}.

\begin{proposition}
  \label{prop:trivial-K}
  $\K{\ct} = \id_{\CH_\seed}$ for a trivial cluster transformation
  $\ct\colon \seed \to \seed$.
\end{proposition}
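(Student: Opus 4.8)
The plan is to upgrade the already-established equality $\K{\ct} = \lambda_{\ct}\,\id_{\CH_\seed}$ to the claim that $\lambda_{\ct} = 1$ by pinning down the phase via an explicit computation in the semiclassical limit $\hbar \to 0^+$. The strategy uses the fact that $\K{\ct}$ is built as a product of intertwiners $\K{\mu_k}^{\sharp(\epsilon_t)}$ and pullbacks $\K{\auto}$, and that the phase of each factor — indeed of the whole composite — can be tracked through its action on a suitable coherent test vector or, more robustly, through the stationary-phase asymptotics of the integral kernel of $\K{\ct}$.

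First I would fix a decomposition $\ct = \ct[N] \circ \dotsb \circ \ct[1]$ into mutations and automorphisms, so that $\K{\ct} = \K{\ct[1]} \circ \dotsb \circ \K{\ct[N]}$, and choose the sign $\epsilon_t$ at each mutation step to be the tropical sign $\sgntrop(\Xtrop_{k_t}[t])$; this is exactly the choice for which each $\Phi^\hbar(\epsilon_t \xop_{k_t})^{\epsilon_t}$ contributes no subtraction in the monomial manipulations, and it guarantees that the kernel of $\K{\ct}$ is an absolutely convergent product of shifted noncompact quantum dilogarithms. The key analytic input is the asymptotic expansion of $\Phi^\hbar(z)$ as $\hbar \to 0^+$: one has $\log \Phi^\hbar(z) = \frac{1}{2\pi\iu\hbar}\,\mathrm{Li}_2(-e^z) + O(1)$, and — crucially — the $O(1)$ term is a specific elementary function of $z$ (essentially $-\frac12\log(1+e^z)$ up to contributions that cancel in unitary combinations), with \emph{no} $\hbar$-independent constant. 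Assembling these expansions, the kernel of $\K{\ct}$ acquires the form $\exp\bigl(\frac{1}{2\pi\iu\hbar} W(a,a') + V(a,a') + o(1)\bigr)$, where $W$ is the generating function of the (trivial) classical transformation on $\cD_\seed$ and $V$ is a classically computable prefactor.

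Next I would carry out the stationary-phase analysis. Because $\ct$ is trivial, the classical transformation it induces on $\CA_\seed^+$ (via the $\cD$-torus) is the identity, so $W(a,a')$ has its critical locus exactly along the diagonal $a = a'$, and the Hessian there is nondegenerate with a determinant that one computes from the product of the elementary monomial transformations $M_{k_t}^{(\epsilon_t)}$ and the second derivatives of the dilogarithm factors. The leading stationary-phase contribution then reproduces $\delta(a - a')$ times a factor assembled from $V$ and the Hessian determinant; since $\K{\ct}$ is already known to be a scalar multiple of the identity, this factor \emph{is} the scalar $\lambda_{\ct}$. The remaining task is to show this assembled factor equals $1$. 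Here I would invoke the pentagon-type consistency: the phase $\lambda$ is multiplicative along compositions and is trivially $1$ for automorphisms (pullbacks preserve $L^2$ norms with no phase) and for the elementary monomial parts $\K{\mu_k}^{\flat(\epsilon)}$; for the automorphism part $\K{\mu_k}^{\sharp(\epsilon)} = \Phi^\hbar(\epsilon\xop_k)^\epsilon \Phi^\hbar(\epsilon\xtop_k)^{-\epsilon}$, the two dilogarithm factors occur with opposite exponents and in the combination $\xop_k$ versus $\xtop_k$ that differ only by the commuting shift $\sum_j \varepsilon_{kj} b^\hbar_j$, so their $\hbar$-independent prefactors cancel. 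Therefore $\lambda_{\ct}$ is a product of $1$'s.

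The main obstacle is making the cancellation of the $O(1)$ prefactors rigorous rather than heuristic: one must control the noncompact quantum dilogarithm's asymptotics uniformly enough to justify stationary phase with a distributional (delta-function) output, and one must track the normalization of $\Phi^\hbar$ — in particular its value and the absence of any spurious constant in $\log\Phi^\hbar$ — with care, since an overlooked $\hbar$-independent constant in a single factor would produce exactly the kind of unimodular phase we are trying to rule out. A cleaner alternative, which I would present if the asymptotic bookkeeping proves unwieldy, is to deduce $\lambda_{\ct} = 1$ from Proposition~\ref{prop:trivial-cq} together with a direct functional-analytic argument: decompose $\ct$ so that $\K{\ct}$ is manifestly a product of factors each of which is independently known (from Kim~\cite{MR4179968} for the specific pentagon and square relations, extended by the same method) to have trivial phase, using that the trivial cluster transformations are generated — up to far-commutativity and involutivity, which carry no phase — by these building blocks; multiplicativity of the phase then finishes the proof.
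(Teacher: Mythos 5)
There is a genuine gap, and it sits exactly at the point your argument needs to be sharpest. The whole difficulty of the proposition is to rule out a constant unimodular factor in an \emph{operator} identity for a long product of noncompact quantum dilogarithms; the paper does this by importing the Kashaev--Nakanishi identity (Theorem 4.5 of \cite{MR2861174}) in its tropical-sign form, i.e.\ $\Phi^\hbar(\sgntrop[1]\gamma[1]\cdot\xop)^{\sgntrop[1]}\dotsm\Phi^\hbar(\sgntrop[L]\gamma[L]\cdot\xop)^{\sgntrop[L]}=\id_{\CH_\seed}$ with \emph{no} phase, together with its complex conjugate (which handles the $\xtop$-factors) and the observation that the composed monomial parts $\K{\mu_{k[1]}}^{\flat(\sgntrop[1])}\dotsm\K{\mu_{k[L]}}^{\flat(\sgntrop[L])}\K{\auto}$ equal the identity because $\ct$ acts trivially on the tropical $\CX$-variables. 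Your main route never supplies this input. The final step, ``the phase is multiplicative along compositions and is trivially $1$ for each factor,'' is not a valid argument: the individual operators $\K{\auto}$, $\K{\mu_k}^{\flat(\epsilon)}$, $\K{\mu_k}^{\sharp(\epsilon)}$ are not scalars, so they carry no well-defined phase that could multiply; $\lambda_{\ct}$ is a property of the full composite and is produced by the noncommutative rearrangement of all the factors (this is exactly how a nonzero constant could appear, as in pentagon-type identities in other normalizations). In particular, the cancellation of $\hbar$-independent terms \emph{within a single} $\K{\mu_k}^{\sharp(\epsilon)}=\Phi^\hbar(\epsilon\xop_k)^{\epsilon}\Phi^\hbar(\epsilon\xtop_k)^{-\epsilon}$ says nothing about the composite. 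The stationary-phase computation that is supposed to extract $\lambda_{\ct}$ would have to control precisely the $\hbar^0$ contributions generated by the interplay of the many noncommuting factors, including Maslov-type phases from the Hessian signature along the diagonal critical locus --- this is as hard as (in fact essentially equivalent to) the identity you are trying to avoid proving, so the argument is circular where it matters.

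Your fallback is also not sound as stated: it rests on the claim that every trivial cluster transformation is generated, up to far commutativity and involutivity, by the specific pentagon and square relations whose phase constants Kim computed in \cite{MR4179968}. Such a presentation of the relations in the cluster modular groupoid is not available in the generality needed here, and the paper deliberately avoids it: the tropical-sign formulation of the Kashaev--Nakanishi identity applies directly to an \emph{arbitrary} trivial $\ct$, after which $\K{\ct}$ splits into three blocks (the $\xop$-dilogarithms, the $\xtop$-dilogarithms, the monomial/automorphism part), each separately equal to $\id_{\CH_\seed}$. If you want to salvage your write-up, the fix is to cite and use that identity (and Theorem~\ref{eq:trivial-trop} for the monomial block) rather than attempting to localize the phase factor by factor.
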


We will demonstrate this proposition using quantum dilogarithm
identities proved in \cite{MR2861174}, which we now explain.

Consider a decomposition of $\ct$ into mutations and automorphisms.
The positions of the automorphisms in the decomposition can be moved
by the relation
\begin{equation}
  \mu_k \circ \auto = \auto \circ \mu_{\auto^{-1}(k)} \,,
\end{equation}
hence we can decompose $\ct$ into the form
\begin{equation}
  \ct\colon \seed
  =: \seed[1]
  \xrightarrow{\mu_{k[1]}} \seed[2]
  \xrightarrow{\mu_{k[2]}}
  \dotsb
  \xrightarrow{\mu_{k[L]}} \seed[L+1]
  \xrightarrow{\auto}
  \seed \,.
\end{equation}
Let $\Xtrop_i[t] := \pitrop(X_i[t])$ be the tropicalization of the variable
$X_i[t]$ of $\CX_{\seed[t]}$.  For each $t$, the relation between
$\Xtrop[t]$ and $\Xtrop = \Xtrop[1]$ defines an $I$-tuple of integers
$\gamma[t] = (\gamma_i[t])_{i \in I} \in \Z^I$ and a sign
$\sgntrop[t] \in \{\pm\}$ by
\begin{equation}
  \Xtrop_{k[t]}[t] =: \prod_{i \in I} \Xtrop_i^{\gamma_i[t]} \,,
  \qquad
  \sgntrop[t] := \sgntrop(\Xtrop_{k[t]}[t])
  =
  \begin{cases}
    +1 & \text{$\gamma_i[t] \geq 0$ for all $i \in I$} \,;
     \\
     -1 & \text{$\gamma_i[t] \leq 0$ for all $i \in I$} \,.
  \end{cases}
\end{equation}
The noncompact quantum dilogarithm satisfies the following identity:

\begin{theorem}[{\cite[Theorem 4.5]{MR2861174}}, {\cite[Theorem
    5.16]{MR2931896}}]
  \begin{equation}
    \label{eq:Phi-identity-tropical}
    \Phi^\hbar(\sgntrop[1] \gamma[1] \cdot \xop)^{\sgntrop[1]}
    \Phi^\hbar(\sgntrop[2] \gamma[2] \cdot \xop)^{\sgntrop[2]}
    \dotsm
    \Phi^\hbar(\sgntrop[L] \gamma[L] \cdot \xop)^{\sgntrop[L]}
    = \id_{\CH_\seed} \,.
  \end{equation}
\end{theorem}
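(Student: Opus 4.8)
The plan is to read the ordered product
\[
  P := \Phi^\hbar(\sgntrop[1]\gamma[1]\cdot\xop)^{\sgntrop[1]} \dotsm \Phi^\hbar(\sgntrop[L]\gamma[L]\cdot\xop)^{\sgntrop[L]}
\]
as the intertwiner on $\CH_\seed$ of the quantum $\CX$-cluster transformation induced by $\ct$, with its monomial (coordinate-permuting) parts stripped off, and then to show $P = \id_{\CH_\seed}$. For the $t$-th mutation $\mu_{k[t]}$ I would select, from the two decompositions \eqref{eq:mu_k^q}, the sign $\epsilon = \sgntrop[t]$. With this choice the $\CX$-automorphism part of $\K{\mu_{k[t]}}$ is $\Phi^\hbar(\sgntrop[t]\xop_{k[t]})^{\sgntrop[t]}$ and its monomial part is exactly the tropical $\CX$-mutation \eqref{eq:y-mutation}. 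Conjugating the $t$-th automorphism factor back to the initial seed through the accumulated monomial transformations replaces $\xop_{k[t]}$ by $\gamma[t]\cdot\xop$, because $\gamma[t]$ is defined by the base monomial $\Xtrop_{k[t]}[t] = \prod_{i}\Xtrop_i^{\gamma_i[t]}$, which is the tropicalization of precisely that change of variables. Pushing all the monomial parts to the right, they compose to the tropicalization of $\ct$ acting on $\CX_\seed$; since $\ct$ is trivial this net transformation is the identity by Theorem~\ref{eq:trivial-trop}, so the intertwiner coincides with $P$, and conjugation $\Ad_P$ acts on the generators $\Xop_i$ as the quantum cluster transformation $\ct^q$.

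The first stage is to show that $P$ is a phase. Since $\ct$ is trivial we have $\ct^q = \id$ by Proposition~\ref{prop:trivial-cq} (whose independent proof does not invoke the present identity), hence $\Ad_P(\Xop_i) = \Xop_i$, equivalently $\Ad_P(\xop_i) = \xop_i$, for every $i \in I$. Repeating the argument with the second difference equation \eqref{eq:Phi-diff-1} in place of \eqref{eq:Phi-diff-tau} shows that $\Ad_P$ realizes $\ct^{q^\vee}$ on the Langlands-dual generators $\Xop_i^\vee = \exp(\hbar^{-1}\xop_i)$; as triviality is insensitive to the value of the quantum parameter, $\ct^{q^\vee} = \id$ as well, so $P$ commutes with every $\Xop_i^\vee$. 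Being a function of the $\xop_i$ alone, $P$ also commutes automatically with the operators $\xtop_j$, which satisfy $[\xop_i,\xtop_j] = 0$. For generic $\hbar$, and after the standard reduction to a nondegenerate exchange matrix $\varepsilon$, these operators generate an algebra acting irreducibly on $\CH_\seed$, so $P = \lambda\,\id_{\CH_\seed}$ with $|\lambda| = 1$ by unitarity.

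The last and hardest stage is to identify the phase as $\lambda = 1$. Here the commutant argument is too soft, and I would extract $\lambda$ from the semiclassical limit $\hbar \to 0^+$, in which $\log\Phi^\hbar(z)$ has leading behaviour $\frac{1}{2\pi\iu\hbar}\mathrm{Li}_2(-e^z)$. Since $P = \lambda\,\id$ is scalar, evaluating it on a Gaussian state reduces $\log\lambda$ to a sum $\sum_t \sgntrop[t]\,\mathrm{Li}_2(\cdots)$ of dilogarithms evaluated at the fixed point of the (trivial) classical $\CX$-transformation; this sum is the classical Rogers--dilogarithm identity attached to the trivial period and vanishes, killing the $O(\hbar^{-1})$ part of $\log\lambda$, while the residual Gaussian and constant contributions are controlled by means of the inversion relation for $\Phi^\hbar$. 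The main obstacle is precisely this bookkeeping of the subleading constant term, which is where anomalous phases would otherwise enter. A clean way to sidestep it is to reduce the trivial cluster transformation to a composition of square (far-commutativity) and pentagon moves and to invoke the operator pentagon identity for $\Phi^\hbar$: each such move lifts to a $\Phi^\hbar$-identity with trivial scalar, so that $P$ telescopes directly to $\id_{\CH_\seed}$, at the cost of first establishing that every trivial cluster transformation admits such a decomposition.
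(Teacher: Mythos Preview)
First, note that the paper does not give its own proof of this theorem: it is imported verbatim from Kashaev--Nakanishi \cite[Theorem~4.5]{MR2861174} and Keller \cite[Theorem~5.16]{MR2931896} and then used as a black box in the proof of Proposition~\ref{prop:trivial-K}. So there is no in-paper argument to compare against; what follows is an assessment of your sketch on its own terms and relative to the cited sources.

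Your first stage --- showing that $P$ is a scalar --- is essentially correct. Choosing the tropical sign in each decomposition \eqref{eq:mu_k^q} does make the monomial parts compose to the tropicalized $\CX$-transformation, which is the identity by Theorem~\ref{eq:trivial-trop}; hence $\Ad_P$ realises $\ct^q$ on the $\Xop_i$, and invoking the independent proof of Proposition~\ref{prop:trivial-cq} gives $[P,\xop_i]=0$. Commutativity with $\xtop_j$ is immediate from $[\xop_i,\xtop_j]=0$. The ``standard reduction to a nondegenerate $\varepsilon$'' deserves one sentence of justification (add frozen vertices so that the extended exchange matrix is unimodular, observe that the extended $\ct$ is still trivial on tropical $\CX$-variables, and that the restriction map on operators is faithful), but this is routine.

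The genuine gap is the phase determination $\lambda=1$. Neither of your two proposals is carried to completion. For the semiclassical route you correctly identify that the $O(\hbar^{-1})$ piece vanishes by the classical Rogers identity, but you then admit that the $O(1)$ bookkeeping is the ``main obstacle'' and do not resolve it; this constant term is exactly where a putative anomalous phase would sit, so the argument as written does not close. For the alternative route, the reduction of an arbitrary trivial cluster transformation to a word in square and pentagon moves is itself a nontrivial structural theorem about the cluster modular groupoid. Establishing it is essentially what the cited references do: Kashaev--Nakanishi prove the identity by reducing to the operator pentagon relation for $\Phi^\hbar$ via the tropical sign sequence, and Keller obtains it through categorification of the mutation sequence. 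So your second proposal is not a shortcut but a restatement of the work that remains; without either supplying that decomposition result or completing the constant-term analysis in the semiclassical argument, the proof is incomplete.
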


The operator
$\gamma[t] \cdot \xop := \sum_{i \in I} \gamma_i[t] \cdot \xop_i$
appearing in the above identity can be understood as follows.
Comparing the transformation~\eqref{eq:y-mutation} of $\Xtrop_i[t]$
under $\mu_k$ and the definition~\eqref{eq:x-mutation-monomial} of
$m_k^{(\epsilon)}$, we see that $x^\hbar_i[t+1]$ transforms under
$m_{k[t]}^{(\sgntrop[t])}$ in the same way as $\log \Xtrop_i[t+1]$
does under $\mu_{k[t]}$.  Thus we have
\begin{equation}
  \gamma[t] \cdot x^\hbar
  =
  \mu_{k[1]}^{\flat(\sgntrop[1])}
  \circ \mu_{k[2]}^{\flat(\sgntrop[2])}
  \circ \dotsm
  \circ \mu_{k[t-1]}^{\flat(\sgntrop[t-1])}
  (x^\hbar_{k[t]}[t]) \,,
\end{equation}
and $\gamma[t] \cdot \xop$ is the corresponding operator obtained from
$\xop_{k[t]}[t]$ by conjugation by
$\K{\mu_{k[1]}}^{\flat(\sgntrop[1])}
\K{\mu_{k[2]}}^{\flat(\sgntrop[2])} \dotsm
\K{\mu_{k[t-1]}}^{\flat(\sgntrop[t-1])}$.

\begin{proof}[Proof of Proposition \ref{prop:trivial-K}]
  Since $\overline{\Phi^\hbar(z)} = \Phi^\hbar(\bar{z})^{-1}$ and
  $\overline{\xop_i} = -\xtop_i$, taking the complex conjugate of
  identity~\eqref{eq:Phi-identity-tropical} and conjugating the
  resulting identity with the map $a \mapsto -a$, we obtain
  \begin{equation}
    \Phi^\hbar(\sgntrop[1]\gamma[1] \cdot \xtop)^{-\sgntrop[1]}
    \Phi^\hbar(\sgntrop[2]\gamma[2] \cdot \xtop)^{-\sgntrop[2]}
    \dotsm
    \Phi^\hbar(\sgntrop[L]\gamma[L] \cdot \xtop)^{-\sgntrop[L]}
    = \id_{\CH_\seed} \,.
  \end{equation}
  Furthermore, $a_i$ transforms under
  $\K{\mu_{k[t]}}^{\flat(\sgntrop[t])}$ in the same way as $b^\hbar_i$
  does under $m_{k[t]}^{(\sgntrop[t])}$, and the latter transforms in
  the dual manner to $x^\hbar_i$ and hence to the transformation of
  $\log\Xtrop_i[t]$ under $\mu_{k[t]}$.  Since $\ct$ acts on
  $\Xtrop_i$ trivially by Theorem~\ref{eq:trivial-trop}, we have
  \begin{equation}
    \K{\mu_{k[1]}}^{\flat(\sgntrop[1])}
    \K{\mu_{k[2]}}^{\flat(\sgntrop[2])}
    \dotsm
    \K{\mu_{k[L]}}^{\flat(\sgntrop[L])}
    \K{\auto}
    =
    \id_{\CH_\seed} \,.
  \end{equation}
  Therefore,
  \begin{equation}
    \begin{split}
      \K{\ct}
      &=
      \K{\mu_{k[1]}}^{\sharp(\sgntrop[1])}
      \K{\mu_{k[1]}}^{\flat(\sgntrop[1])}
      \K{\mu_{k[2]}}^{\sharp(\sgntrop[2])}
      \K{\mu_{k[2]}}^{\flat(\sgntrop[2])}
      \dotsm
      \K{\mu_{k[L]}}^{\sharp(\sgntrop[L])}
      \K{\mu_{k[L]}}^{\flat(\sgntrop[L])}
      \K{\auto}
      \\
      &=
      \Phi^\hbar(\sgntrop[1] \gamma[1] \cdot \xop)^{\sgntrop[1]}
      \Phi^\hbar(\sgntrop[2] \gamma[2] \cdot \xop)^{\sgntrop[2]}
      \dotsm
      \Phi^\hbar(\sgntrop[L] \gamma[L] \cdot \xop)^{\sgntrop[L]}
      \\
      & \quad
      \cdot
      \Phi^\hbar(\sgntrop[1] \gamma[1] \cdot \xtop)^{-\sgntrop[1]}
      \Phi^\hbar(\sgntrop[2] \gamma[2] \cdot \xtop)^{-\sgntrop[2]}
      \dotsm
      \Phi^\hbar(\sgntrop[L] \gamma[L] \cdot \xtop)^{-\sgntrop[L]}
      \\
      & \quad
      \cdot
      \K{\mu_{k[1]}}^{\flat(\sgntrop[1])}
      \K{\mu_{k[2]}}^{\flat(\sgntrop[2])}
      \dotsm
      \K{\mu_{k[L]}}^{\flat(\sgntrop[L])}
      \K{\auto}
    \end{split}
  \end{equation}
  is the identity map.
\end{proof}

\section{Trivial cluster transformations from the longest element of
  $S_4$}
\label{sec:trivial-ct}

In this section we introduce the three families of quivers described
in section~\ref{sec:introduction}, which are assigned to the words for
permutations in the symmetric group $S_n$.  A loop of braid moves on
reduced expressions for the longest element of $S_4$ yields trivial
cluster transformations acting on the seed tori of relevant quivers.

\subsection{Symmetric groups and wiring diagrams}
\label{sec:wiring}

The \emph{symmetric group} $S_n$ is the group of permutations of
$\{1, 2, \dotsc, n\}$.  It is generated by the \emph{adjacent
  transpositions} $\{s_a\}_{a = 1}^{n-1}$ satisfying the
relations
\begin{alignat}{2}
  & s_a^2 = 1 \,,
  \\
  \label{eq:s_is_j=s_js_i}
  & s_a s_b = s_b s_a \quad \text{for $|a - b| \geq 2$}
                      & \quad & \text{(far commutativity)}\,,
  \\
  & s_a s_{a+1} s_a = s_{a+1} s_a s_{a+1}
                                  & \quad & \text{(braid relation)}\,.
\end{alignat}

A \emph{word} for a permutation $s \in S_n$ is a finite string
$a_1 a_2 \dotso a_k$ of elements of $\{1, 2, \dotsc, n\}$ such that
$s = s_{a_1} s_{a_2} \dotsm s_{a_k}$.  The \emph{length} $l(s)$ of $s$
is the minimal number such that
$s = s_{a_1} s_{a_2} \dotsm s_{a_{l(s)}}$ for some $a_1$, $a_2$,
$\dotsc$, $a_{l(s)} \in \{1, 2, \dotsc, n\}$.  The expression
$s_{a_1} s_{a_2} \dotsm s_{a_{l(s)}}$ and the string
$a_1 a_2 \dotsc a_{l(s)}$ are called a \emph{reduced expression} for
$s$ and a \emph{reduced word} for $s$, respectively.  A reduced
expression for a permutation can be transformed to any other reduced
expression for the same permutation by a sequence of far commutativity
and braid relations (Tits' lemma).  The longest element of $S_n$ is
the order reversing permutation $a \mapsto n - a + 1$ and its length
is $n(n-1)/2$.

To the end of constructing quivers corresponding to words for
permutations in $S_n$, we represent words diagrammatically.

\begin{definition}
  A \emph{wiring diagram on $n$ wires} is a union of $n$ continuous
  paths, called \emph{wires}, inside the vertical strip
  $\{(x,y) \in \R^2 \mid 0 \leq x \leq 1\}$ such that
  \begin{enumerate}
  \item the wires start from distinct points on the left boundary
    ($x = 0$) and end at distinct points on the right boundary
    ($x = 1$);
  \item no three wires intersect at a point; and
  \item no two intersections of wires take place at the same
    horizontal position ($x$-coordinate).
  \end{enumerate}
  Two wiring diagrams are identified if they are related by an isotopy
  that preserves the horizontal ordering of the intersections of
  wires.
\end{definition}

The words for permutations in $S_n$ are in one-to-one correspondence
with the wiring diagrams on $n$ wires.  To represent a word
$a_1 a_2 \dotso a_k$ by a wiring diagram, we move rightward in the
positive $x$-direction from the left boundary of the vertical strip
toward the right boundary, and let wires intersect according to the
letters appearing in the reduced word.  Thus, we first let the $a_1$th
wire and the $(a_1 + 1)$st wire intersect, counted from bottom to top
in the ascending order of the $y$-coordinates of the wires, and next
let the $a_2$th wire and the $(a_2 + 1)$st intersect, again counted
from bottom to top but at a horizontal position right to the first
intersection, and so on.  Conversely, given a wiring diagram on $n$
wires, the corresponding word for a permutation can be written down.

\begin{definition}
  A wiring diagram is said to be \emph{reduced} if no two wires
  intersect more than once.
\end{definition}

The reduced words are in one-to-one correspondence with the reduced
wiring diagrams.  For example, the reduced word $123121$ for the
longest element of $S_4$ is represented by the reduced wiring diagram
\begin{equation}
  \begin{tikzpicture}[rotate=90]
    \braid[gap=0, scale=0.4] a_1 a_2 a_3 a_1 a_2 a_1;
  \end{tikzpicture}
\end{equation}
As can be seen from this example, a wiring diagram for a reduced word
of the longest element of $S_n$ has the property that each pair of
wires intersect exactly once.

Given a wiring diagram on $n$ wires, we name the wires $1$, $2$,
$\dotsc$, $n$ from bottom to top according to their positions on the
left boundary of the vertical strip.  We need to label the segments
and intersections of wires.  The segments of wire $a$ is labeled, from
left to right, $a_1$, $a_2$, $\dotsc$, $a_n$.  The intersection of
wires $a$ and $b$ with $a < b$ is labeled $ab$.

A wiring diagram divides the vertical strip into \emph{chambers},
i.e., connected components of the complement of the wires, and we also
need labels for them.  We label the chambers with subsets of
$\{1, 2, \dotsc, n\}$ as follows.  First, we label the chamber
extending to $y = -\infty$ the empty set $\emptyset$.  Starting from
the chamber $\emptyset$, we can go to any other chamber by crossing
some number of wires.  If we reach a new chamber from a chamber $C$ by
crossing wire $a$, then we label that chamber $C \cup \{a\}$.

Figure~\ref{fig:labels-121} illustrates our labeling scheme with the
wiring diagram for the reduced word $121$ for the longest element of
$S_3$.

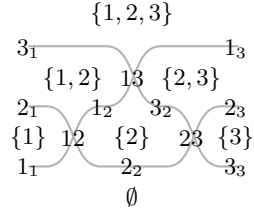
\begin{figure}
  \centering
  \begin{tikzpicture}[scale=0.8, rotate=90]
    \braid[gap=0, thick, black!30] a_1 a_2 a_1;

    \node[font=\small, vertex] (1_1) at (1,0) {$1_1$};
    \node[font=\small, vertex] (1_2) at (2,-1.25) {$1_2$};
    \node[font=\small, vertex] (1_3) at (3,-3.5) {$1_3$};
    
    \node[font=\small, vertex] (2_1) at (2,0) {$2_1$};
    \node[font=\small, vertex] (2_2) at (1,-1.75) {$2_2$}; 
    \node[font=\small, vertex] (2_3) at (2,-3.5) {$2_3$};
    
    \node[font=\small, vertex] (3_1) at (3,0) {$3_1$};
    \node[font=\small, vertex] (3_2) at (2,-2.25) {$3_2$};
    \node[font=\small, vertex] (3_3) at (1,-3.5) {$3_3$};
    
    \node[font=\small, vertex] (12) at (1.5,-0.75) {$12$};
    \node[font=\small, vertex] (13) at (2.5,-1.75) {$13$};
    \node[font=\small, vertex] (23) at (1.5,-2.75) {$23$};
    
    \node[font=\small, vertex] (e) at (0.5,-1.75) {$\emptyset$};
    \node[font=\small, vertex] (1) at (1.5,0) {$\{1\}$};
    \node[font=\small, vertex] (2) at (1.5,-1.75) {$\{2\}$};
    \node[font=\small, vertex] (3) at (1.5,-3.5) {$\{3\}$};
    \node[font=\small, vertex] (23) at (2.5,-2.75) {$\{2,3\}$};
    \node[font=\small, vertex] (12) at (2.5,-0.75) {$\{1,2\}$};
    \node[font=\small, vertex] (123) at (3.6,-1.75) {$\{1,2,3\}$};
\end{tikzpicture}
  \caption{Labeling of the segments, intersections and chambers of the
    wiring diagram for the reduced word $121$ for the longest element
    of $S_3$.}
  \label{fig:labels-121}
\end{figure}

\subsection{Quivers assigned to wiring diagrams}

\begin{definition}
  The \emph{triangle quiver}, the \emph{square quiver} and the
  \emph{butterfly quiver} assigned to a word for a permutation of
  $S_n$ is constructed as follows:
  \begin{enumerate}
  \item Around each intersection of wires of the corresponding wiring
    diagram, place vertices and connect them with arrows according to
    the rule
    \begin{equation}
      \begin{tabular}{ccc}
        \begin{tikzpicture}[scale=0.8, rotate=90]
          \braid[gap=0, thick, black!30] a_1;
          \node[bvertex] (1) at (1.5,-0.2) {};
          \node[bvertex] (e) at (0.75,-0.75) {};
          \node[bvertex] (2) at (1.5,-1.3) {};
          \node[bvertex, draw=none] at (2.25,-0.75) {};
          \draw[q->] (e) -- (1);
          \draw[q->] (1) -- (2);
          \draw[q->] (2) -- (e);
        \end{tikzpicture}
        &
          \begin{tikzpicture}[scale=0.8, rotate=90]
            \braid[gap=0, thick, black!30] a_1;
            \node[bvertex] (1_1) at (1,-0.25) {};
            \node[bvertex] (1_2) at (2,-1.25) {};
            \node[bvertex] (2_1) at (2,-0.25) {};
            \node[bvertex] (2_2) at (1,-1.25) {};
            \draw[q->] (2_2) -- (1_1);
            \draw[q->] (1_2) -- (2_2);
            \draw[q->] (2_1) -- (1_2);
            \draw[q->] (1_1) -- (2_1);
          \end{tikzpicture}
        &
          \begin{tikzpicture}[scale=0.8, rotate=90]
            \braid[gap=0, thick, black!30] a_1;
            \node[bvertex] (1) at (1.5,0) {};
            \node[bvertex] (e) at (0.75,-0.75) {};
            \node[bvertex] (2) at (1.5,-1.5) {};
            \node[bvertex] (12) at (1.5,-0.75) {};
            \node[bvertex] (1c2) at (2.25,-0.75) {};
            
            \draw[q->] (e) -- (1);
            \draw[q->] (12) -- (e);
            \draw[q->] (1) -- (12);
            \draw[q->] (2) -- (12);
            \draw[q->] (12) -- (1c2);
            \draw[q->] (1c2) -- (2);
          \end{tikzpicture}
        \\
        triangle quiver & square quiver & butterfly quiver
      \end{tabular}
    \end{equation}
    
  \item Label each vertex with the name of the segment, intersection
    or chamber where the vertex is placed, and identify vertices with
    the same labels.
    
  \item Delete $2$-cycles formed in the previous step.
    
  \item For the triangle quiver and the butterfly quiver, add a vertex
    for each subset of $\{1,2, \dotsc, n\}$ if not present already.
    This vertex is disconnected from any other vertices.
  \end{enumerate}
\end{definition}

The introduction of disconnected vertices in the last step is a
technicality that can be avoided if we allow either mutations or
automorphisms to rename the elements of the index set for a quiver.
Figure \ref{fig:quivers-123121} shows the three quivers assigned to
the reduced word $123121$ for the longest element of $S_4$.

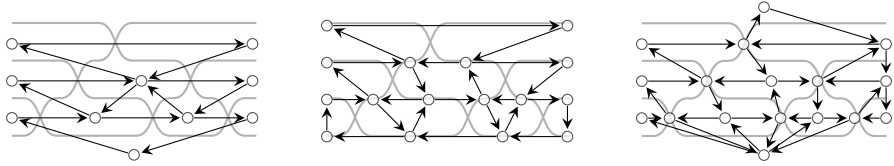
\begin{figure}
  \centering
  \begin{tikzpicture}[scale=0.5, rotate=90]
    \braid[gap=0, thick, black!30] a_1 a_2 a_3 a_1 a_2 a_1;

    \node[bvertex] (e) at (0.5,-3.3) {};
    \node[bvertex] (1) at (1.5,0) {};
    \node[bvertex] (1c2) at (2.5,0) {};
    \node[bvertex] (1c2c3) at (3.5,0) {};
    \node[bvertex] (4) at (1.5,-6.5) {};
    \node[bvertex] (3c4) at (2.5,-6.5) {};
    \node[bvertex] (2c3c4) at (3.5,-6.5) {};
    \node[vertex] (1c2c3c4) at (4.5,-3.3) {\phantom{\tikz{\node[bvertex]{};}}};

    \node[bvertex] (2) at (1.5,-2.25) {};
    \node[bvertex] (2c3) at (2.5,-3.5) {};
    \node[bvertex] (3) at (1.5,-4.75) { }; 
    
    \draw[q->] (e) -- (1);
    \draw[q->] (1) -- (2);
    
    \draw[q->] (2) -- (1c2);
    \draw[q->] (2c3) -- (2);
    \draw[q->] (1c2) -- (2c3);
    
    \draw[q->] (2c3) -- (1c2c3);
    \draw[q->] (2c3c4) -- (2c3);
    \draw[q->] (1c2c3) -- (2c3c4);
    
    \draw[q->] (2) -- (3);
    
    \draw[q->] (3) -- (2c3);
    \draw[q->] (3c4) -- (3);
    \draw[q->] (2c3) -- (3c4);
    
    \draw[q->] (4) -- (e);
    \draw[q->] (3) -- (4);
  \end{tikzpicture}
  \qquad
  \begin{tikzpicture}[scale=0.5, rotate=90]
    \braid[gap=0, thick, black!30] a_1 a_2 a_3 a_1 a_2 a_1;
  
    \node[bvertex] (1_1) at (1,0) {};
    \node[bvertex] (2_1) at (2,0) {};
    \node[bvertex] (3_1) at (3,0) {};
    \node[bvertex] (4_1) at (4,0) {};
    \node[bvertex] (4_4) at (1,-6.5) {};
    \node[bvertex] (3_4) at (2,-6.5) {};
    \node[bvertex] (2_4) at (3,-6.5) {};
    \node[bvertex] (1_4) at (4,-6.5) {};
  
    \node[bvertex] (2_2) at (1,-2.25) {}; 
    \node[bvertex] (1_2) at (2,-1.25) {};
    \node[bvertex] (3_2) at (2,-2.75) {};
    \node[bvertex] (1_3) at (3,-2.25) {};
    \node[bvertex] (3_3) at (1,-4.75) {}; 
    \node[bvertex] (2_3) at (2,-4.25) {};
    \node[bvertex] (4_2) at (3,-3.75) {};
    \node[bvertex] (4_3) at (2,-5.25) {};
  
    \draw[q->] (1_1) -- (2_1);
    \draw[q->] (2_1) -- (1_2);
    \draw[q->] (1_2) -- (2_2);
    \draw[q->] (2_2) -- (1_1);
    
    \draw[q->] (1_2) -- (3_1);
    \draw[q->] (3_1) -- (1_3);
    \draw[q->] (1_3) -- (3_2);
    \draw[q->] (3_2) -- (1_2);
        
    \draw[q->] (1_3) -- (4_1);
    \draw[q->] (4_1) -- (1_4);
    \draw[q->] (1_4) -- (4_2);
    \draw[q->] (4_2) -- (1_3);
        
    \draw[q->] (2_2) -- (3_2);
    \draw[q->] (3_2) -- (2_3);
    \draw[q->] (2_3) -- (3_3);
    \draw[q->] (3_3) -- (2_2);
        
    \draw[q->] (2_3) -- (4_2);
    \draw[q->] (4_2) -- (2_4);
    \draw[q->] (2_4) -- (4_3);
    \draw[q->] (4_3) -- (2_3);
        
    \draw[q->] (3_3) -- (4_3);
    \draw[q->] (4_3) -- (3_4);
    \draw[q->] (3_4) -- (4_4);
    \draw[q->] (4_4) -- (3_3);
  \end{tikzpicture}
  \qquad
  \begin{tikzpicture}[scale=0.5, rotate=90]
    \braid[gap=0, thick, black!30] a_1 a_2 a_3 a_1 a_2 a_1;

    \node[bvertex] (e) at (0.5,-3.3) {};
    \node[bvertex] (1) at (1.5,0) {};
    \node[bvertex] (1c2) at (2.5,0) {};
    \node[bvertex] (1c2c3) at (3.5,0) {};
    \node[bvertex] (4) at (1.5,-6.6) {};
    \node[bvertex] (3c4) at (2.5,-6.6) {};
    \node[bvertex] (2c3c4) at (3.5,-6.6) {};
    \node[bvertex] (1c2c3c4) at (4.5,-3.3) {};

    \node[bvertex] (2) at (1.5,-2.25) {};
    \node[bvertex] (2c3) at (2.5,-3.5) {};
    \node[bvertex] (3) at (1.5,-4.75) {}; 
    
    \node[bvertex] (12) at (1.5,-0.75) {};
    \node[bvertex] (13) at (2.5,-1.75) {};
    \node[bvertex] (14) at (3.5,-2.75) {};
    \node[bvertex] (23) at (1.5,-3.75) {};
    \node[bvertex] (24) at (2.5,-4.75) {};
    \node[bvertex] (34) at (1.5,-5.75) {};
    
    \draw[q->] (12) -- (e);
    \draw[q->] (e) -- (1);
    \draw[q->] (1) -- (12);
    \draw[q->] (12) -- (1c2);
    \draw[q->] (2) -- (12);

    \draw[q->] (13) -- (2);
    \draw[q->] (1c2) -- (13);
    \draw[q->] (13) -- (1c2c3);
    \draw[q->] (2c3) -- (13);

    \draw[q->] (14) -- (2c3);
    \draw[q->] (1c2c3) -- (14);
    \draw[q->] (14) -- (1c2c3c4);
    \draw[q->] (1c2c3c4) -- (2c3c4);
    \draw[q->] (2c3c4) -- (14);

    \draw[q->] (23) -- (e);
    \draw[q->] (e) -- (2);
    \draw[q->] (2) -- (23);
    \draw[q->] (23) -- (2c3);
    \draw[q->] (3) -- (23);

    \draw[q->] (24) -- (3);
    \draw[q->] (2c3) -- (24);
    \draw[q->] (24) -- (2c3c4);
    \draw[q->] (2c3c4) -- (3c4);
    \draw[q->] (3c4) -- (24);

    \draw[q->] (34) -- (e);
    \draw[q->] (e) -- (3);
    \draw[q->] (3) -- (34);
    \draw[q->] (34) -- (3c4);
    \draw[q->] (3c4) -- (4);
    \draw[q->] (4) -- (34);
  \end{tikzpicture}
  \caption{The triangle, square and butterfly quivers assigned to the
    reduced word $123121$ for the longest element of $S_4$.  The
    vertex labels and the disconnected vertices are not shown.}
  \label{fig:quivers-123121}
\end{figure}

The three quivers assigned to a wiring diagram depend only on the
isotopy class of the wiring diagram.  For example, to the two wiring
diagrams
\begin{equation*}
  \tikz{\braid[gap=0, scale=0.4, rotate=90] a_1 a_3 a_2 a_3 a_1 a_2;} 
  \qquad
  \text{and}
  \qquad
  \tikz{\braid[gap=0, scale=0.4, rotate=90] a_3 a_1 a_2 a_1 a_3 a_2;}
\end{equation*}
the same quivers are assigned.  Therefore, these quivers are really
assigned to an \emph{equivalence class} of words with respect to far
commutativity \eqref{eq:s_is_j=s_js_i}.  Any two equivalence classes
of reduced words for the same permutation can be obtained from one
another by a sequence of braid relations, which translate to a
sequence of local moves in the wiring diagrams.

\begin{definition}
  The \emph{braid move} $\beta_{abc}$ on a wiring diagram is the
  local transformation
  \begin{equation}
    \begin{tikzpicture}[scale=0.4, rotate=90]
      \braid[gap=0] a_1 a_2 a_1;
      \node at (1, 0.75) {$a$};
      \node at (2, 0.75) {$b$};
      \node at (3, 0.75) {$c$};
    \end{tikzpicture}
    \quad \rightarrow \quad
    \begin{tikzpicture}[scale=0.4, rotate=90]
      \braid[gap=0] a_2 a_1 a_2;
      \node at (1, 0.75) {$a$};
      \node at (2, 0.75) {$b$};
      \node at (3, 0.75) {$c$};
    \end{tikzpicture}
  \end{equation}
  on wires $a$, $b$, $c$.
\end{definition}

The braid move $\beta_{abc}$ on a wiring diagram transforms the
assigned quivers.  A key fact is that the transformations induced by a
braid move can be expressed as compositions of mutations and
automorphisms:
\begin{proposition}
  \label{prop:R}
  The braid move $\beta_{abc}$ induces a cluster transformation on the
  triangle, square and butterfly quivers assigned to a wiring diagram.
\end{proposition}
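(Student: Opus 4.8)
The plan is to establish Proposition~\ref{prop:R} by a direct, local analysis: since a braid move $\beta_{abc}$ changes the wiring diagram only in a bounded region (the three wires $a$, $b$, $c$ and their mutual intersections, together with the chambers and segments touching that region), it suffices to exhibit, for each of the three families of quivers, an explicit sequence of mutations and automorphisms that reproduces the change in the quiver. First I would fix notation for the relevant portion of the quiver before the move: the three intersection vertices $ab$, $bc$, $ac$, the six segment vertices $a_i$, $a_{i+1}$, $b_j$, $b_{j+1}$, $c_k$, $c_{k+1}$ adjacent to them, and (for the triangle and butterfly quivers) the chamber vertices for the chambers $C$, $C\cup\{a\}$, $C\cup\{b\}$, $C\cup\{a,b\}$, $C\cup\{a,c\}$, $C\cup\{b,c\}$, $C\cup\{a,b,c\}$ surrounding the configuration. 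The effect of $\beta_{abc}$ is to relabel these vertices (the chamber labels $C\cup\{a\}\leftrightarrow C\cup\{c\}$ and $C\cup\{a,b\}\leftrightarrow C\cup\{b,c\}$ swap, the outer intersection label $ac$ moves across, etc.) and to reconnect the arrows according to the local rule in the definition of each quiver applied to the new configuration of three crossings $bab\to aba$.

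For the triangle quiver I would claim that $\beta_{abc}$ is realized by a single mutation at the central vertex, namely the intersection vertex $bc$ (the one that, in the diagram $121$, is the crossing labeled $13$ — i.e.\ the crossing of the outer two wires), followed by the relabeling automorphism described above; this matches the claim in the introduction that for triangle quivers "a braid move is a single mutation followed by an automorphism." The verification amounts to checking that the mutation rule for quivers — (1) for each $i\to k$, $k\to j$ add $i\to j$; (2) reverse arrows at $k$; (3) cancel $2$-cycles — applied at that vertex, together with the relabeling, transforms the "before" local quiver into the "after" local quiver; this is a finite check on a quiver with roughly ten vertices. For the square and butterfly quivers I would similarly exhibit the composition of four mutations (again matching "a braid move involves four mutations" from the introduction): I expect the four mutation directions to be the four vertices that genuinely change their local neighborhoods — for the square quiver the four segment vertices interior to the region (the analogues of $1_2$, $3_2$, $2_3$, $4_2$), and for the butterfly quiver the three intersection vertices $ab$, $bc$, $ac$ together with one chamber vertex, performed in an order dictated by the geometry — and then I would verify by the same local bookkeeping that this composition followed by the relabeling automorphism yields the quiver assigned to the transformed wiring diagram.

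The main obstacle will be the square- and butterfly-quiver cases: the local quivers there are larger and denser, so determining the correct sequence and order of the four mutations, and then checking that the intermediate $2$-cycle cancellations and arrow reversals conspire to give exactly the right final quiver (rather than something differing by stray arrows at the boundary of the local region), is the delicate part. A clean way to organize this is to note that the boundary vertices of the local region — the segment vertices on wires entering and leaving the region, and the chamber vertices of the surrounding chambers — must have their connections to the outside world preserved, so the mutations must be chosen so that their cumulative effect on these boundary connections is trivial; this constrains the choice strongly and should pin down the sequence uniquely up to far-commutativity of mutations. I would also remark that one should check the statement is independent of which representative of the far-commutativity equivalence class one starts from, but this is immediate since the three quivers depend only on the isotopy class of the wiring diagram and $\beta_{abc}$ is a well-defined operation on isotopy classes. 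Finally, I would record the explicit mutation sequences (and the permutation $\alpha$) for later use, since the construction of the R-matrices in section~\ref{sec:TE} depends on these specific compositions $\ct$.
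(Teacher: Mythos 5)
Your overall strategy---reduce to a local check on three wires and then exhibit explicit mutation/automorphism sequences realizing $\beta_{abc}$---is exactly the paper's, but the concrete sequences you commit to are wrong in two of the three cases, and those sequences are the actual content of the proof. For the triangle quiver there is no vertex at any crossing: by construction its vertices are labeled by chambers only, so ``the intersection vertex $bc$'' you propose to mutate does not exist in that quiver. The correct move is the mutation at the central \emph{chamber} vertex, the chamber $\{b\}$ enclosed by the three crossings (the chamber $\{2\}$ in the $121$ picture), followed by the automorphism relabeling it as $\{a,c\}$, i.e.\ $\beta_{123}=\alpha_{\{2\},\{1,3\}}\circ\mu_{\{2\}}$. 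For the square quiver your guess of four \emph{distinct} segment vertices, including analogues of $2_3$ and $4_2$ that involve a fourth wire not touched by the braid move, is also off: only the three middle segments $1_2$, $2_2$, $3_2$ of the wires $a$, $b$, $c$ are involved, and the correct sequence mutates the middle one twice, $\beta_{123}=\alpha_{1_2,3_2}\circ\mu_{2_2}\circ\mu_{1_2}\circ\mu_{3_2}\circ\mu_{2_2}$. Your butterfly guess (the three crossing vertices together with one chamber vertex) is the right set; the paper's sequence is $\mu_{13}\circ\mu_{23}\circ\mu_{12}\circ\mu_{\{2\}}$ followed by relabeling automorphisms.

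A second, smaller gap is the locality reduction itself. You argue via preservation of boundary connections, but the real issue in passing from the three-wire model to a general wiring diagram is that arrows between the peripheral vertices of the local region may be created or canceled by the rest of the diagram. The justification you need is the one the paper gives: the action of $\mu_k$ depends only on the entries $\varepsilon_{ki}$, i.e.\ on the arrows incident to the mutated vertex $k$, and for the vertices mutated in a braid move these entries are the same in the full quiver as in the three-wire local quiver; hence the local verification carries over verbatim. With the corrected sequences and this observation, the rest of your plan (a finite check of the mutation rule on the local quivers, recording the sequences for later use in section~\ref{sec:TE}) is exactly what the paper does.
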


\begin{proof}
  Since braid moves are local transformations, it is sufficient to
  check the statement for wiring diagrams on three wires.  (Although
  some arrows can be canceled when more wires are added, the action of
  $\mu_k$ depends only on $\varepsilon_{ki}$, $i \in I$, and for the
  mutations relevant to the braid move these integers remain
  unchanged.)

  For the triangle quiver, $\beta_{123}$ acts on the reduced word $121$ as
  \begin{equation}
    \begin{tikzpicture}[scale=0.8, rotate=90]
      \braid[gap=0, thick, black!30] a_1 a_2 a_1;
      \node[vertex] (e) at (0.5,-1.75) {$\emptyset$};
      \node[vertex] (1) at (1.5,0) {$\{1\}$};
      \node[vertex] (2) at (1.5,-1.75) {$\{2\}$};
      \node[vertex] (3) at (1.5,-3.5) {$\{3\}$};
      \node[vertex] (2c3) at (2.5,-2.75) {$\{2,3\}$};
      \node[vertex] (1c2) at (2.5,-0.75) {$\{1,2\}$};
      \node[vertex] (1c2c3) at (3.5,-1.75) {$\{1,2,3\}$};
      
      \draw[q->] (e) -- (1);
      \draw[q->] (1) -- (2);
      \draw[q->] (3) -- (e);
      \draw[q->] (2) -- (3);
      \draw[q->] (2) -- (1c2);
      \draw[q->] (2c3) -- (2);
      \draw[q->] (1c2) -- (2c3);
    \end{tikzpicture}
    \quad
    \xrightarrow{\beta_{123}}
    \quad
    \begin{tikzpicture}[scale=0.8, rotate=90]
      \braid[gap=0, thick, black!30] a_2 a_1 a_2;
      \node[vertex] (e) at (0.5,-1.75) {$\emptyset$};
      \node[vertex] (1) at (1.5,-0.75) {$\{1\}$};
      \node[vertex] (3) at (1.5,-2.75) {$\{3\}$};
      \node[vertex] (1c3) at (2.5,-1.75) {$\{1,3\}$};
      \node[vertex] (2c3) at (2.5,-3.5) {$\{2,3\}$};
      \node[vertex] (1c2) at (2.5,0) {$\{1,2\}$};
      \node[vertex] (1c2c3) at (3.5,-1.75) {$\{1,2,3\}$};
      
      \draw[q->] (e) -- (1);
      \draw[q->] (3) -- (e);
      \draw[q->] (1) -- (3);
      \draw[q->] (1) -- (1c2);
      \draw[q->] (1c3) -- (1);
      \draw[q->] (1c2) -- (1c3);
      \draw[q->] (3) -- (1c3);
      \draw[q->] (2c3) -- (3);
      \draw[q->] (1c3) -- (2c3);
    \end{tikzpicture}
  \end{equation}
  This transformation is the mutation $\mu_{{\{2\}}}$ at ${\{2\}}$,
  followed by relabeling of the vertex $\{2\}$ to $\{1,3\}$:
  \begin{equation}
    \beta_{123} = \auto_{\{2\}, \{1,3\}} \circ \mu_{\{2\}}
    \quad
    \text{(triangle quiver)} \,.
  \end{equation}
  Here $\auto_{i,j}$ denotes the automorphism given by the permutation
  interchanging $i$, $j \in I$.  Note that the braid move changes the
  index set for the connected vertices.  To deal with this
  complication we have enlarged the quiver by additional disconnected
  vertices.

  For the square quiver, $\beta_{123}$ acts as
  \begin{equation}
    \begin{tikzpicture}[scale=0.8, rotate=90]
      \braid[gap=0, thick, black!30] a_1 a_2 a_1;

      \node[vertex] (11) at (1,0) {$1_1$};
      \node[vertex] (12) at (2,-1.25) {$1_2$};
      \node[vertex] (13) at (3,-2.75) {$1_3$};
      
      \node[vertex] (21) at (2,0) {$2_1$};
      \node[vertex] (22) at (1,-1.75) {$2_2$}; 
      \node[vertex] (23) at (2,-3.5) {$2_3$};
      
      \node[vertex] (31) at (3,-0.75) {$3_1$};
      \node[vertex] (32) at (2,-2.25) {$3_2$};
      \node[vertex] (33) at (1,-3.5) {$3_3$}; 
      
      \draw[q->] (22) -- (11);
      \draw[q->] (12) -- (22);
      \draw[q->] (21) -- (12);
      \draw[q->] (11) -- (21);
      
      \draw[q->] (33) -- (22);
      \draw[q->] (23) -- (33);
      \draw[q->] (32) -- (23);
      \draw[q->] (22) -- (32);
      
      \draw[q->] (32) -- (12);
      \draw[q->] (13) -- (32);
      \draw[q->] (31) -- (13);
      \draw[q->] (12) -- (31);
    \end{tikzpicture}
    \quad
    \xrightarrow{\beta_{123}}
    \quad
    \begin{tikzpicture}[scale=0.8, rotate=90]
      \braid[gap=0, thick, black!30] a_2 a_1 a_2;
      
      \node[vertex] (11) at (1,-0.75) {$1_1$};
      \node[vertex] (12) at (2,-2.25) {$1_2$};
      \node[vertex] (13) at (3,-3.5) {$1_3$};
      
      \node[vertex] (21) at (2,0) {$2_1$};
      \node[vertex] (22) at (3,-1.75) {$2_2$}; 
      \node[vertex] (23) at (2,-3.5) {$2_3$};
      
      \node[vertex] (31) at (3,0) {$3_1$};
      \node[vertex] (32) at (2,-1.25) {$3_2$};
      \node[vertex] (33) at (1,-2.75) {$3_3$}; 
      
      \draw[q->] (33) -- (11);
      \draw[q->] (12) -- (33);
      \draw[q->] (32) -- (12);
      \draw[q->] (11) -- (32);
      
      \draw[q->] (32) -- (21);
      \draw[q->] (22) -- (32);
      \draw[q->] (31) -- (22);
      \draw[q->] (21) -- (31);
      
      \draw[q->] (23) -- (12);
      \draw[q->] (13) -- (23);
      \draw[q->] (22) -- (13);
      \draw[q->] (12) -- (22);
    \end{tikzpicture}
  \end{equation}
  This transformation can be written as a composition of four
  mutations and one automorphism:
  \begin{equation}
    \beta_{123}
    =
    \auto_{1_2,3_2} \circ \mu_{2_2} \circ \mu_{1_2} \circ \mu_{3_2} \circ \mu_{2_2}
    \quad
    \text{(square quiver)} \,.
  \end{equation}
  See Figure~\ref{fig:beta_123-square}.

  For the butterfly quiver, we have
  \begin{equation}
    \begin{tikzpicture}[scale=0.8, rotate=90]
      \braid[gap=0, thick, black!30] a_1 a_2 a_1;
      \node[vertex] (e) at (0.5,-1.75) {$\emptyset$};
      \node[vertex] (1) at (1.5,0.25) {$\{1\}$};
      \node[vertex] (2) at (1.5,-1.75) {$\{2\}$};
      \node[vertex] (3) at (1.5,-3.75) {$\{3\}$};
      \node[vertex] (2c3) at (2.5,-2.75) {$\{2,3\}$};
      \node[vertex] (1c2) at (2.5,-0.75) {$\{1,2\}$};
      \node[vertex] (1c2c3) at (3.5,-1.75) {$\{1,2,3\}$};

      \node[vertex] (12) at (1.5,-0.75) {$12$};
      \node[vertex] (13) at (2.5,-1.75) {$13$};
      \node[vertex] (23) at (1.5,-2.75) {$23$};
      
      \draw[q->] (12) -- (e);
      \draw[q->] (e) -- (1);
      \draw[q->] (1) -- (12);

      \draw[q->] (12) -- (1c2);
      \draw[q->] (2) -- (12);

      \draw[q->] (13) -- (2);
      \draw[q->] (1c2) -- (13);

      \draw[q->] (13) -- (1c2c3);
      \draw[q->] (1c2c3) -- (2c3);
      \draw[q->] (2c3) -- (13);

      \draw[q->] (23) -- (e);
      \draw[q->] (e) -- (2);
      \draw[q->] (2) -- (23);

      \draw[q->] (23) -- (2c3);
      \draw[q->] (2c3) -- (3);
      \draw[q->] (3) -- (23);
    \end{tikzpicture}
    \quad
    \xrightarrow{\beta_{123}}
    \quad
    \begin{tikzpicture}[scale=0.8, rotate=90]
      \braid[gap=0, thick, black!30] a_2 a_1 a_2;
      \node[vertex] (e) at (0.5,-1.75) {$\emptyset$};
      \node[vertex] (1) at (1.5,-0.75) {$\{1\}$};
      \node[vertex] (3) at (1.5,-2.75) {$\{3\}$};
      \node[vertex] (1c3) at (2.5,-1.75) {$\{1,3\}$};
      \node[vertex] (2c3) at (2.5,-3.75) {$\{2,3\}$};
      \node[vertex] (1c2) at (2.5,0.25) {$\{1,2\}$};
      \node[vertex] (1c2c3) at (3.5,-1.75) {$\{1,2,3\}$};

      \node[vertex] (23) at (2.5,-0.75) {$23$};
      \node[vertex] (13) at (1.5,-1.75) {$13$};
      \node[vertex] (12) at (2.5,-2.75) {$12$};

      \draw[q->] (23) -- (1);
      \draw[q->] (1) -- (1c2);
      \draw[q->] (1c2) -- (23);

      \draw[q->] (23) -- (1c2c3);
      \draw[q->] (1c2c3) -- (1c3);
      \draw[q->] (1c3) -- (23);

      \draw[q->] (13) -- (e);
      \draw[q->] (e) -- (1);
      \draw[q->] (1) -- (13);

      \draw[q->] (13) -- (1c3);
      \draw[q->] (3) -- (13);

      \draw[q->] (12) -- (3);
      \draw[q->] (1c3) -- (12);

      \draw[q->] (12) -- (1c2c3);
      \draw[q->] (1c2c3) -- (2c3);
      \draw[q->] (2c3) -- (12);
    \end{tikzpicture}
  \end{equation}
  This transformation can be written as
  \begin{equation}
    \beta_{123}
    =
    \auto_{\{2\}, \{1,3\}} \circ \auto_{\{2\}, 13} \circ \auto_{12,23}
    \circ \mu_{13} \circ \mu_{23} \circ \mu_{12} \circ \mu_{\{2\}}
    \quad
    \text{(butterfly quiver)} \,.
  \end{equation}
\end{proof}

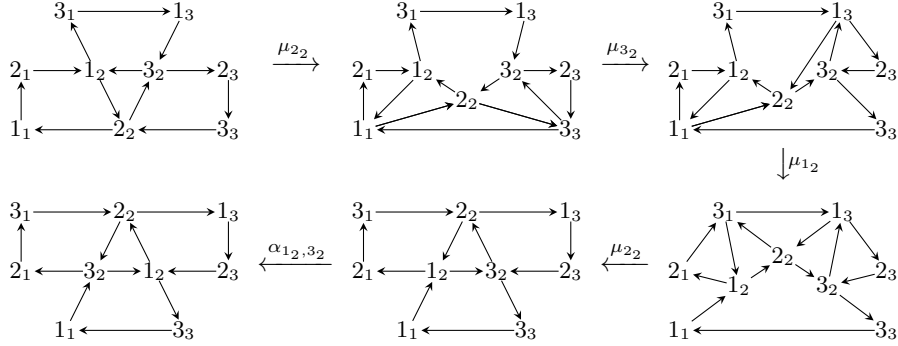
\begin{figure}
  \centering
  \begin{equation*}
    \begin{array}{c@{\ \ }c@{\ \ }c@{\ \ }c@{\ \ }c}
      \begin{tikzpicture}[scale=0.8, rotate=90]
        \node[vertex] (11) at (1,0) {$1_1$};
        \node[vertex] (12) at (2,-1.25) {$1_2$};
        \node[vertex] (13) at (3,-2.75) {$1_3$};
        
        \node[vertex] (21) at (2,0) {$2_1$};
        \node[vertex] (22) at (1,-1.75) {$2_2$}; 
        \node[vertex] (23) at (2,-3.5) {$2_3$};
        
        \node[vertex] (31) at (3,-0.75) {$3_1$};
        \node[vertex] (32) at (2,-2.25) {$3_2$};
        \node[vertex] (33) at (1,-3.5) {$3_3$}; 
        
        \draw[q->] (22) -- (11);
        \draw[q->] (12) -- (22);
        \draw[q->] (21) -- (12);
        \draw[q->] (11) -- (21);
        
        \draw[q->] (33) -- (22);
        \draw[q->] (23) -- (33);
        \draw[q->] (32) -- (23);
        \draw[q->] (22) -- (32);
        
        \draw[q->] (32) -- (12);
        \draw[q->] (13) -- (32);
        \draw[q->] (31) -- (13);
        \draw[q->] (12) -- (31);
      \end{tikzpicture}
      &
      \xrightarrow{\mu_{2_2}}
      &
      \begin{tikzpicture}[scale=0.8, rotate=90]
        \node[vertex] (11) at (1,0) {$1_1$};
        \node[vertex] (12) at (2,-1) {$1_2$};
        \node[vertex] (13) at (3,-2.75) {$1_3$};
        
        \node[vertex] (21) at (2,0) {$2_1$};
        \node[vertex] (22) at (1.5,-1.75) {$2_2$}; 
        \node[vertex] (23) at (2,-3.5) {$2_3$};
        
        \node[vertex] (31) at (3,-0.75) {$3_1$};
        \node[vertex] (32) at (2,-2.5) {$3_2$};
        \node[vertex] (33) at (1,-3.5) {$3_3$}; 
        
        \draw[q->] (33) -- (11);
        
        \draw[q->] (11) -- (22);
        \draw[q->] (12) -- (11);
        \draw[q->] (32) -- (22);
        \draw[q->] (33) -- (32);
        \draw[q->] (22) -- (33);
        
        \draw[q->] (11) -- (22);
        \draw[q->] (22) -- (12);
        \draw[q->] (21) -- (12);
        \draw[q->] (11) -- (21);
        
        \draw[q->] (22) -- (33);
        \draw[q->] (23) -- (33);
        \draw[q->] (32) -- (23);
        
        \draw[q->] (13) -- (32);
        \draw[q->] (31) -- (13);
        \draw[q->] (12) -- (31);
      \end{tikzpicture}
      &
      \xrightarrow{\mu_{3_2}}
      &
      \begin{tikzpicture}[scale=0.8, rotate=90]
        \node[vertex] (11) at (1,0) {$1_1$};
        \node[vertex] (12) at (2,-1) {$1_2$};
        \node[vertex] (13) at (3,-2.75) {$1_3$};
        
        \node[vertex] (21) at (2,0) {$2_1$};
        \node[vertex] (22) at (1.5,-1.75) {$2_2$}; 
        \node[vertex] (23) at (2,-3.5) {$2_3$};
        
        \node[vertex] (31) at (3,-0.75) {$3_1$};
        \node[vertex] (32) at (2,-2.5) {$3_2$};
        \node[vertex] (33) at (1,-3.5) {$3_3$}; 
        
        \draw[q->] (33) -- (11);
        
        \draw[q->] (11) -- (22);
        \draw[q->] (12) -- (11);
        \draw[q->] (22) -- (32);
        \draw[q->] (32) -- (33);
        
        \draw[q->] (11) -- (22);
        \draw[q->] (22) -- (12);
        \draw[q->] (21) -- (12);
        \draw[q->] (11) -- (21);
        
        \draw[q->] (23) -- (32);
        
        \draw[q->] (32) -- (13);
        \draw[q->] (31) -- (13);
        \draw[q->] (12) -- (31);
        \draw[q->] (13) -- (23);
        \draw[q->] (13) -- (22);
      \end{tikzpicture}
      \\[6ex]
      &&&& \phantom{\hspace{-0.15em}\scriptstyle \mu_{1_2}}
           \big\downarrow{\raisebox{0.5ex}{$\hspace{-0.15em}\scriptstyle \mu_{1_2}$}}
      \\[2ex]
    \begin{tikzpicture}[scale=0.8, rotate=90]
      \node[vertex] (11) at (1,-0.75) {$1_1$};
      \node[vertex] (12) at (2,-2.25) {$1_2$};
      \node[vertex] (13) at (3,-3.5) {$1_3$};
      
      \node[vertex] (21) at (2,0) {$2_1$};
      \node[vertex] (22) at (3,-1.75) {$2_2$}; 
      \node[vertex] (23) at (2,-3.5) {$2_3$};
      
      \node[vertex] (31) at (3,0) {$3_1$};
      \node[vertex] (32) at (2,-1.25) {$3_2$};
      \node[vertex] (33) at (1,-2.75) {$3_3$}; 
      
      \draw[q->] (33) -- (11);
      \draw[q->] (12) -- (33);
      \draw[q->] (32) -- (12);
      \draw[q->] (11) -- (32);
      
      \draw[q->] (32) -- (21);
      \draw[q->] (22) -- (32);
      \draw[q->] (31) -- (22);
      \draw[q->] (21) -- (31);
      
      \draw[q->] (23) -- (12);
      \draw[q->] (13) -- (23);
      \draw[q->] (22) -- (13);
      \draw[q->] (12) -- (22);
    \end{tikzpicture}
      &
        \xleftarrow{\auto_{1_2,3_2}}
      &
    \begin{tikzpicture}[scale=0.8, rotate=90]
      \node[vertex] (11) at (1,-0.75) {$1_1$};
      \node[vertex] (32) at (2,-2.25) {$3_2$};
      \node[vertex] (13) at (3,-3.5) {$1_3$};
      
      \node[vertex] (21) at (2,0) {$2_1$};
      \node[vertex] (22) at (3,-1.75) {$2_2$}; 
      \node[vertex] (23) at (2,-3.5) {$2_3$};
      
      \node[vertex] (31) at (3,0) {$3_1$};
      \node[vertex] (12) at (2,-1.25) {$1_2$};
      \node[vertex] (33) at (1,-2.75) {$3_3$}; 
      
      \draw[q->] (33) -- (11);
      \draw[q->] (32) -- (33);
      \draw[q->] (12) -- (32);
      \draw[q->] (11) -- (12);
      
      \draw[q->] (12) -- (21);
      \draw[q->] (22) -- (12);
      \draw[q->] (31) -- (22);
      \draw[q->] (21) -- (31);
      
      \draw[q->] (23) -- (32);
      \draw[q->] (13) -- (23);
      \draw[q->] (22) -- (13);
      \draw[q->] (32) -- (22);
    \end{tikzpicture}
      &
        \xleftarrow{\mu_{2_2}}
      &
    \begin{tikzpicture}[scale=0.8, rotate=90]
      \node[vertex] (11) at (1,0) {$1_1$};
      \node[vertex] (12) at (1.75,-1) {$1_2$};
      \node[vertex] (13) at (3,-2.75) {$1_3$};
      
      \node[vertex] (21) at (2,0) {$2_1$};
      \node[vertex] (22) at (2.25,-1.75) {$2_2$}; 
      \node[vertex] (23) at (2,-3.5) {$2_3$};
      
      \node[vertex] (31) at (3,-0.75) {$3_1$};
      \node[vertex] (32) at (1.75,-2.5) {$3_2$};
      \node[vertex] (33) at (1,-3.5) {$3_3$}; 
      
      \draw[q->] (33) -- (11);

      \draw[q->] (11) -- (12);
      \draw[q->] (22) -- (32);
      \draw[q->] (32) -- (33);

      \draw[q->] (12) -- (22);
      \draw[q->] (12) -- (21);
      
      \draw[q->] (23) -- (32);
      
      \draw[q->] (32) -- (13);
      \draw[q->] (31) -- (13);
      \draw[q->] (31) -- (12);
      \draw[q->] (13) -- (23);
      \draw[q->] (13) -- (22);

      \draw[q->] (21) -- (31);
      \draw[q->] (22) -- (31);
    \end{tikzpicture}
    \end{array}
  \end{equation*}
  \caption{The braid move $\beta_{123}$ for the square quiver is a
    composition of four mutations and one automorphism.}
  \label{fig:beta_123-square}
\end{figure}

\begin{remark}
  In the context of supersymmetric gauge theories and the Yang--Baxter
  equation, the triangle quiver appeared in \cite{Yagi:2015lha}, the
  square quiver in \cite{Bazhanov:2011mz, Yagi:2015lha} and the
  butterfly quiver in \cite{Yamazaki:2015voa}.  The triangle quiver
  has also appeared in relation to cluster algebras before, e.g.\
  in~\cite{FWZ}.  A connection between the butterfly quiver and
  cluster algebras was pointed out in \cite{Yamazaki:2016wnu}.
\end{remark}

\subsection{Trivial cluster transformations}

Take the sequence of braid moves on the left-hand side of
Figure~\ref{fig:TE} and concatenate it with the inverse of the
sequence of braid moves on the right-hand side.  This creates a
sequence of transformations on reduced words for the longest element
of $S_4$ which starts with $123121$ and ends with $121321$.  Let
$\seed_{a_1 a_2 a_3 a_4 a_5 a_6}$ be any of the triangle, square and
butterfly quivers assigned to the reduced word
$a_1 a_2 a_3 a_4 a_5 a_6$ for this element.  By
Proposition~\ref{prop:R}, the sequence of transformations in question
induces the loop of cluster transformations
\begin{equation}
  \newcommand{\verteq}{\rotatebox[origin=c]{90}{$=$}}
  \begin{array}{c@{\hspace{0.278em}}c@{\hspace{0.278em}}c@{\hspace{0.278em}}c@{\hspace{0.278em}}c@{\hspace{0.278em}}c@{\hspace{0.278em}}c@{\hspace{0.278em}}c@{\hspace{0.278em}}c@{\hspace{0.278em}}c@{\hspace{0.278em}}c}
    % 0.278em ~ 5/18mu
    \seed_{1 2 3 1 2 1}
    &  \xrightarrow{\beta_{234}} 
    & \seed_{1 2 3 2 1 2}
    &  \xrightarrow{\beta_{134}} 
    & \seed_{1 3 2 3 1 2}
    & =
    & \seed_{3 1 2 1 3 2}
    &  \xrightarrow{\beta_{124}} 
    & \seed_{3 2 1 2 3 2}
    &  \xrightarrow{\beta_{123}} 
    & \seed_{3 2 1 3 2 3}
    \\
    \verteq  &&&&&&&&&& \verteq
    \\
    \seed_{1 2 1 3 2 1}
    &  \xleftarrow{\beta_{123}^{-1}} 
    & \seed_{2 1 2 3 2 1}
    &  \xleftarrow{\beta_{124}^{-1}} 
    & \seed_{2 1 3 2 3 1}
    & =
    & \seed_{2 3 1 2 1 3}
    &  \xleftarrow{\beta_{134}^{-1}} 
    & \seed_{2 3 2 1 2 3}
    &  \xleftarrow{\beta_{234}^{-1}} 
    & \seed_{3 2 3 1 2 3}
  \end{array}
\end{equation}
This loop of cluster transformation turns out to be trivial:

\begin{proposition}
  \label{prop:trivial-loop-ct}
  For the triangle, square and butterfly quivers assigned to the
  reduced word $123121$ for the longest element of $S_4$, the cluster
  transformation
  \begin{equation}
    \label{eq:RRRRRRRR}
    \beta_{123}^{-1} \circ \beta_{124}^{-1} \circ \beta_{134}^{-1} \circ \beta_{234}^{-1}
    \circ \beta_{123} \circ \beta_{124} \circ \beta_{134} \circ \beta_{234}
  \end{equation}
  is trivial.
\end{proposition}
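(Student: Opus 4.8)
The plan is to reduce everything to the tropical $\CX$-variables. By Theorem~\ref{eq:trivial-trop}, the cluster transformation \eqref{eq:RRRRRRRR} is trivial if and only if it fixes the tropical $\CX$-variables $\Xtrop = (\Xtrop_i)_{i \in I}$ attached to the initial quiver $\seed_{123121}$, so for each of the three families it suffices to propagate $\Xtrop$ around the loop and check that it returns unchanged. First I would record, once and for all, the effect of a single braid move $\beta_{abc}$ on tropical $\CX$-variables. Using the decompositions of $\beta_{abc}$ into mutations and automorphisms from Proposition~\ref{prop:R} --- one mutation and one relabeling for the triangle quiver, four mutations and one relabeling for the square quiver, four mutations and three relabelings for the butterfly quiver --- together with the piecewise-linear rule \eqref{eq:y-mutation}, one obtains a transformation that is local: since $\beta_{abc}$ alters the quiver only near the three intersections $ab$, $bc$, $ac$, and, as noted in the proof of Proposition~\ref{prop:R}, the extra arrows coming from the fourth wire leave unchanged the entries $\varepsilon_{ki}$ relevant to those mutations, it multiplies the tropical $\CX$-variables of the vertices in that local neighbourhood by monomials in the tropical $\CX$-variable of the mutated vertex, and then permutes the index set. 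I would tabulate this local rule for each family.

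Next I would compose the eight such transformations in the order dictated by \eqref{eq:RRRRRRRR}, following the loop of quivers displayed above: the moves $\beta_{234}, \beta_{134}, \beta_{124}, \beta_{123}$ along one leg and their inverses along the other, for a total of $8$, $32$ and $32$ mutations (plus relabelings) for the triangle, square and butterfly quivers. At each mutation one applies \eqref{eq:y-mutation}, which requires the tropical sign $\sgntrop(\Xtrop_k[t])$ of the mutated vertex; sign coherence guarantees this sign is well defined, and one reads it off from the current Laurent monomial in the initial variables. The relabelings are handled by carrying along a running identification of the current index set with $I$. Having pushed $\Xtrop$ through the whole loop, one verifies that the resulting $I$-tuple coincides with the initial $\Xtrop$, that is, $\pitrop(\ct^*(X_i)) = \pitrop(X_i)$ for all $i \in I$; triviality then follows from Theorem~\ref{eq:trivial-trop}. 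Alternatively, by the Remark after the definition of triviality it would be enough to check the corresponding identity on the $\CA$-variables, where each mutation changes a single variable through \eqref{eq:A'}; for the triangle quiver this alternative is a short hand computation.

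I expect the difficulty to be purely in the bookkeeping rather than in any conceptual point: the triangle-quiver case is small enough to do entirely by hand, but the square- and butterfly-quiver cases involve on the order of thirty mutations apiece on quivers with more than a dozen vertices, and the two things that need care are the tropical signs, which can change as one goes around the loop, and the accumulation of relabelings. Organising the computation around the locality of braid moves keeps every individual step low-dimensional --- even though the intermediate tropical $\CX$-variables, being Laurent monomials in the initial variables, are typically supported on many coordinates --- and makes the verification routine and readily mechanisable.
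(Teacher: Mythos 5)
Your proposal follows essentially the same route as the paper's Appendix~\ref{sec:proof} proof: decompose the loop into the mutations and automorphisms supplied by Proposition~\ref{prop:R}, push the tropical $\CX$-variables through step by step via \eqref{eq:y-mutation} while tracking tropical signs and relabelings, verify that they return to their initial values, and conclude by Theorem~\ref{eq:trivial-trop}. The one caveat is that a braid move's effect on tropical variables cannot be tabulated once and for all as a fixed local rule, since the tropical signs depend on the accumulated Laurent monomials in the initial variables, but you acknowledge this and the sign-by-sign bookkeeping you describe is exactly the calculation the paper carries out.
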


The proposition can be proved by Theorem~\ref{eq:trivial-trop} and
straightforward calculations.  The proof is given in
Appendix~\ref{sec:proof}.

\section{Solutions of the tetrahedron equation from trivial cluster
  transformations}
\label{sec:TE}

Now we discuss the solutions of the tetrahedron equation arising from
the three trivial cluster transformations obtained in
section~\ref{sec:trivial-ct}.

\subsection{The trivial cluster transformations and the tetrahedron
  equation}

The triviality of the cluster transformation~\eqref{eq:RRRRRRRR} can
be expressed as the equality
\begin{equation}
  \label{eq:TE-beta}
  \beta_{123} \circ \beta_{124} \circ \beta_{134} \circ \beta_{234}
  =
  \beta_{234} \circ \beta_{134} \circ \beta_{124} \circ \beta_{123}
\end{equation}
between two cluster transformations from the seed tori for
$\seed_{123121} = \seed_{121321}$ to those for
$\seed_{321323} = \seed_{323123}$.  This equation takes the form of
the tetrahedron equation.

By Proposition~\ref{prop:trivial-cq} the corresponding quantum cluster
transformations for the same quivers also satisfy the tetrahedron
equation
\begin{equation}
  \beta^q_{234} \circ \beta^q_{134} \circ \beta^q_{124} \circ \beta^q_{123}
  =
  \beta^q_{123} \circ \beta^q_{124} \circ \beta^q_{134} \circ \beta^q_{234} \,.
\end{equation}
Furthermore, Proposition \ref{prop:trivial-K} and the tetrahedron
equation~\eqref{eq:TE-beta} imply that the unitary operator
\begin{equation}
  R_{abc} = \K{\beta_{abc}}
\end{equation}
solves the tetrahedron equation \eqref{eq:TE}.  This is an equality
between operators from the Hilbert space $\CH_{\seed_{123121}}$ to the
Hilbert space $\CH_{\seed_{321323}}$.

\begin{remark}
  For the triangle quiver, the tropical sign sequence
  $(\epsilon[t])_{t=1}^8$ is given by $(+,+,+,+,-,-,-,-)$ and one can
  use simpler quantum dilogarithm identities (Eqs.~(1.8) and (1.10) of
  \cite{MR2861174}) to construct solutions of the tetrahedron
  equation.
\end{remark}

\subsection{R-matrices and three-dimensional integrable lattice
  models}

The R-matrix $R_{123}$ may be regarded as the local Boltzmann weight
of a three-dimensional statistical lattice model with continuous spin
variables.  Let us explain this point taking the square quiver case as
an example.

Consider the cluster transformation
$\beta_{123}\colon \seed_{121} \to \seed_{212}$ between the square
quivers assigned to the reduced words $121$ and $212$ for the longest
element of $S_3$.  The vertices of the square quivers are placed on
the segments of wires, hence
\begin{equation}
  I = \{1_1, 1_2, 1_3, 2_1, 2_2, 2_3, 3_1, 3_2, 3_3\}
\end{equation}
is the set labeling the vertices of $\seed_{121}$ and $\seed_{121}$.
The quantum variables $(X^q, B^q)$ assigned to $\seed_{121}$ and
$(X'^q, B'^q)$ assigned to $\seed_{212}$ have representations on the
Hilbert spaces $\CH_{\seed_{121}} = L^2(\R^I)$ and the Hilbert space
$\CH_{\seed_{212}} = L^2(\R^I)$, respectively.  Let
$a = (a_i)_{i \in I}$ and $a' = (a'_i)_{i \in I}$ be the standard
coordinates for functions in $\CH_{\seed_{121}}$ and
$\CH_{\seed_{212}}$.  Then, the R-matrix
$R_{123}\colon \CH_{\seed_{212}} \to \CH_{\seed_{121}}$ can be
expressed as an integral operator:
\begin{equation}
  (R_{123} f)(a)
  =
  \int_{\R^I} \rmd a' \, S_{123}(a,a') f(a') \,.
\end{equation}
Here $\rmd a' := \prod_{i \in I} \rmd a'_i$.  Since
$\K{\beta_{123}}(a'_i) = a_i$ unless $\beta_{123}$ contains the
mutation $\mu_i$ or an automorphism acting nontrivially on $i$, the
kernel $S_{123}(a,a')$ takes the form
\begin{equation}
  S_{123}(a,a')
  =
  S \!
  \begin{bmatrix}
    a_{1_1} & a_{1_2} & a_{1_3} & a'_{1_2} \\
    a_{2_1} & a_{2_2} & a_{2_3} & a'_{2_2} \\
    a_{3_1} & a_{3_2} & a_{3_3} & a'_{3_2}
  \end{bmatrix}
  \prod_{i \in I \setminus \{1_2, 2_2, 3_2\}}
  \delta(a_i - a'_i) \,,
\end{equation}
where $S$ is a function of the indicated variables and $\delta$
is the delta function.

In the lattice model, $R_{123}$ resides at an intersection of three
planes constituting part of a lattice, as shown in
Figure~\ref{fig:lattice}.  Each vertex of $\seed_{121}$ and
$\seed_{212}$ corresponds to one quadrant of one of the three planes,
separated from the other quadrants by the intersections with the other
two planes.  To construct a statistical lattice model, for each vertex
$i \in I$ we place an $\R$-valued spin variable $a_i$ or $a'_i$ on the
corresponding region, depending on whether the vertex is from
$\seed_{121}$ or $\seed_{212}$, and identify $a_i = a'_i$ for
$i \neq 1_2$, $2_2$, $3_2$.  Given a configuration of the 12 spin
variables thus prepared, we define the local ``energy''%
\footnote{The quantity $E$ is not real in the present case and cannot
  be interpreted as a physical energy.}
$E$ for that configuration by $S = \exp(-E/k_B T)$, where $k_B$
is the Boltzmann constant and $T$ is the temperature.

The tetrahedron equation \eqref{eq:TE} translates to the equation
\begin{align}
  \int_{\R^4} \rmd a'_{1_3} \, \rmd a'_{2_3} \, \rmd a'_{3_3} \, \rmd a'_{4_3} \,
  &S \!
  \begin{bmatrix}
    a_{2_2} & a_{2_3} & a_{2_4} & a'_{2_3} \\
    a_{3_2} & a_{3_3} & a_{3_4} & a'_{3_3} \\
    a_{4_2} & a_{4_3} & a_{4_4} & a'_{4_3}
  \end{bmatrix}
  S \!
  \begin{bmatrix}
    a_{1_2} & a_{1_3} & a_{1_4} & a'_{1_3} \\
    a_{3_1} & a_{3_2} & a'_{3_3} & a''_{3_2} \\
    a_{4_1} & a_{4_2} & a'_{4_3} & a''_{4_2}
  \end{bmatrix}
  \\ \nonumber
  \times
  &S \!
  \begin{bmatrix}
    a_{1_1} & a_{1_2} & a'_{1_3} & a''_{1_2} \\
    a_{2_1} & a_{2_2} & a'_{2_3} & a''_{2_2} \\
    a''_{4_2} & a'_{4_3} & a_{4_4} & a''_{4_3}
  \end{bmatrix}
  S \!
  \begin{bmatrix}
    a''_{1_2} & a'_{1_3} & a_{1_4} & a''_{1_3} \\
    a''_{2_2} & a'_{2_3} & a_{2_4} & a''_{2_3} \\
    a''_{3_2} & a'_{3_3} & a_{3_4} & a''_{3_3}
  \end{bmatrix}
  \\ \nonumber
  =
  \int_{\R^4} \rmd a'_{1_2} \, \rmd a'_{2_2} \, \rmd a'_{3_2} \, \rmd a'_{4_2} \,
  &S \!
  \begin{bmatrix}
    a_{1_1} & a_{1_2} & a_{1_3} & a'_{1_2} \\
    a_{2_1} & a_{2_2} & a_{2_3} & a'_{2_2} \\
    a_{3_1} & a_{3_2} & a_{3_3} & a'_{3_2}
  \end{bmatrix}
  S \!
  \begin{bmatrix}
    a'_{1_2} & a_{1_3} & a_{1_4} & a''_{1_3} \\
    a'_{2_2} & a_{2_3} & a_{2_4} & a''_{2_3} \\
    a_{4_1} & a_{4_2} & a_{4_3} & a'_{4_2}
  \end{bmatrix}
  \\ \nonumber
  \times
  &S \!
  \begin{bmatrix}
    a_{1_1} & a'_{1_2} & a''_{1_3} & a''_{1_2} \\
    a'_{3_2} & a_{3_3} & a_{3_4} & a''_{3_3} \\
    a'_{4_2} & a_{4_3} & a_{4_4} & a''_{4_3}
  \end{bmatrix}
  S \!
  \begin{bmatrix}
    a_{2_1} & a'_{2_2} & a''_{2_3} & a''_{2_2} \\
    a_{3_1} & a'_{3_2} & a''_{3_3} & a''_{3_2} \\
    a_{4_1} & a'_{4_2} & a''_{4_3} & a''_{4_2}
  \end{bmatrix}
  \,.
\end{align}
This is the form of the tetrahedron equation appropriate for the local
Boltzmann weight of a three-dimensional lattice model with spin
variables placed on the faces of the lattice \cite{Hietarinta:1994pt,
  MR1278735}.

\begin{remark}
  The Boltzmann weights for the lattice models corresponding to the
  triangle and butterfly quivers can be determined in the same
  fashion.  The triangle quiver model has spin variables placed inside
  the regions bounded by the planes making up the cubic lattice, and
  the butterfly quiver model has spin variables in those regions as
  well as the edges of the lattice.  The tetrahedron equation for the
  former can be found in \cite{Hietarinta:1994pt, MR1278735}.
\end{remark}

The fact that the local Boltzmann weight solves the tetrahedron
equation is closely related to the integrability of the lattice model.

In Figure~\ref{fig:lattice}, let us make the direction common to the
first and second planes periodic and replace the third plane with a
stack of $n$ parallel planes.  The resulting configuration of planes
can be thought of as defining a local Boltzmann weight $W$ for a
two-dimensional lattice model, which is obtained from the
three-dimensional lattice model by ``dimensional reduction'' on the
periodic direction.  Schematically, we can write $W$ as
$W = \Tr(S^n)$, where each factor of $S$ represents one of the $n$
parallel planes.  The tetrahedron equation implies that $W$ satisfies
the Yang--Baxter equation.

For a solution of the Yang--Baxter equation to define a
two-dimensional integrable lattice model, it must also depend on a
continuous parameter, called a spectral parameter.  A spectral
parameter can be introduced to $W$ by ``twisting'' of the periodic
boundary condition.  Let
\begin{equation}
  q_{12'} := a'_{1_2} - a_{2_3} \,,
  \qquad
  q_{12} := a_{1_1} - a_{2_2} \,.
\end{equation}
We can regard $q_{12'}$ and $q_{12}$ as ``charges'' assigned to the
edges $12'$ and $12$ of the lattice.  As we will see shortly, these
charges are conserved: the local Boltzmann weight vanishes unless
$q_{12} = q_{12'}$.  As a result, $W(z) = \Tr(z^{q_{12}} S^n)$ defines
a solution of the Yang--Baxter equation with spectral parameter $z$.
The integrability of the corresponding two-dimensional lattice model
implies the integrability of the original three-dimensional lattice
model.

\begin{remark}
  A similar dimensional reduction for the solution of \cite{MR1278735,
    Bazhanov:2005as} was considered in \cite{Bazhanov:2005as}, where
  it was shown that the resulting solution of the Yang--Baxter
  equation is a trigonometric R-matrix associated with the direct sum
  of symmetric tensor representations of $\mathfrak{sl}_n$.  In
  \cite{Yagi:2022tot}, this reduction was interpreted in string theory
  as a duality transformation to a brane configuration studied in
  \cite{Costello:2018txb, Ishtiaque:2021jan}.
\end{remark}

\begin{figure}
  \centering
  \begin{tikzpicture}[
    /pgf/braid/.cd,
    style strands={1}{red!50!black, opacity=0.4},
    style strands={2}{green!50!black, opacity=0.4},
    style strands={3}{blue!50!black, opacity=0.4},
    number of strands=3
  ]
    \begin{scope}[xscale=0.8, yscale=0.4, rotate=90]
      \braid[gap=0, thick] a_1 a_2 a_1;

      \node[font=\small, vertex] (1_1) at (1,0) {$1_1$};
      \node[font=\small, vertex] (1_2) at (2,-1.25) {$1_2$};
      \node[font=\small, vertex] (1_3) at (3,-3.5) {$1_3$};
      
      \node[font=\small, vertex] (2_1) at (2,0) {$2_1$};
      \node[font=\small, vertex] (2_2) at (1,-1.75) {$2_2$}; 
      \node[font=\small, vertex] (2_3) at (2,-3.5) {$2_3$};
      
      \node[font=\small, vertex] (3_1) at (3,0) {$3_1$};
      \node[font=\small, vertex] (3_2) at (2,-2.25) {$3_2$};
      \node[font=\small, vertex] (3_3) at (1,-3.5) {$3_3$};
      
      \node[font=\small, vertex] (12) at (1.5,-0.75) {$12$};
      \node[font=\small, vertex] (13) at (2.5,-1.75) {$13$};
      \node[font=\small, vertex] (23) at (1.5,-2.75) {$23$};

      \begin{scope}[shift={(-5,0)}]
        \braid[gap=0, thick] a_2 a_1 a_2;

        \node[font=\small, vertex] (1_1) at (1,0) {$1_1'$};
        \node[font=\small, vertex] (1_2) at (2,-1.25) {$3_2'$};
        \node[font=\small, vertex] (1_3) at (3,-3.5) {$1_3'$};
        
        \node[font=\small, vertex] (2_1) at (2,0) {$2_1'$};
        \node[font=\small, vertex] (2_2) at (3,-1.75) {$2_2'$}; 
        \node[font=\small, vertex] (2_3) at (2,-3.5) {$2_3'$};
        
        \node[font=\small, vertex] (3_1) at (3,0) {$3_1'$};
        \node[font=\small, vertex] (3_2) at (2,-2.25) {$1_2'$};
        \node[font=\small, vertex] (3_3) at (1,-3.5) {$3_3'$};
        
        \node[font=\small, vertex] (12) at (2.5,-0.75) {$23'$};
        \node[font=\small, vertex] (13) at (1.5,-1.75) {$13'$};
        \node[font=\small, vertex] (23) at (2.5,-2.75) {$12'$};
      \end{scope}

      \node[black] at (2,1) {$121$};
      \node[black] at (-0.5,1) {${\raisebox{0.5ex}{$\hspace{-0.15em}\scriptstyle \beta_{123}$}}\Bigg\downarrow\phantom{\hspace{-0.15em}\scriptstyle \beta_{123}}$};
      \node[black] at (-3,1) {$212$};

      \node[black] at (2,-4.5) {$\CH_{\seed_{121}}$};
      \node[black] at (-0.5,-4.5) {$\phantom{\hspace{-0.15em}\scriptstyle R_{123}}\Bigg\uparrow{\raisebox{0.5ex}{$\hspace{-0.15em}\scriptstyle R_{123}$}}$};
      \node[black] at (-3,-4.5) {$\CH_{\seed_{212}}$};
    \end{scope}
  \end{tikzpicture}
  \quad
  \tdplotsetmaincoords{70}{120}
  \begin{tikzpicture}[tdplot_main_coords, scale=0.8]
    \draw[->, >=latex, thick]
    (-2,0,0) node[shift={(7pt,5pt)}] {$23'$}
    -- (2,0,0) node[shift={(-6pt,-4pt)}] {$23$};

    \draw[->, >=latex, thick]
    (0,-2,0) node[left, shift={(4pt,4pt)}] {$13'$}
    -- (0,2,0) node[right, shift={(-2pt,-2pt)}] {$13$};

    \draw[->, >=latex, thick]
    (0,0,-2) node[shift={(0pt,-6pt)}] {$12'$}
    -- (0,0,2) node[shift={(0pt,6pt)}] {$12$};
    
    \draw[fill=red, opacity=0.2]
    (0,-2,-2) node[shift={(6pt,4pt)}, opacity=1] {$1_2'$}
    -- (0,2,-2) node[shift={(-6pt,8pt)}, opacity=1] {$1_3^{(\prime)}$}
    -- (0,2,2) node[shift={(-6pt,-4pt)}, opacity=1] {$1_2$}
    -- (0,-2,2) node[shift={(8pt,-9pt)}, opacity=1] {$1_1^{(\prime)}$}
    -- cycle;
    \draw[black!50, thick] (0,-2,-2) -- (0,2,-2) -- (0,2,2) -- (0,-2,2) -- cycle;

    \draw[fill=green, opacity=0.2]
    (-2,0,-2) node[shift={(-5pt,4pt)}, opacity=1] {$2_2'$}
    -- (-2,0,2) node[shift={(-6pt,-11pt)}, opacity=1] {$2_1^{(\prime)}$}
    -- (2,0,2) node[shift={(6pt,-4pt)}, opacity=1] {$2_2$}
    -- (2,0,-2) node[shift={(8pt,12pt)}, opacity=1] {$2_3^{(\prime)}$}
    -- cycle;
    \draw[black!50, thick] (-2,0,-2) -- (-2,0,2) -- (2,0,2) -- (2,0,-2) -- cycle;

    \draw[fill=blue, opacity=0.2]
    (-2,-2,0) node[shift={(4pt,-7pt)}, opacity=1] {$3_2'$}
    -- (2,-2,0) node[shift={(20pt,4pt)}, opacity=1] {$3_3^{(\prime)}$}
    -- (2,2,0) node[shift={(-2pt,8pt)}, opacity=1] {$3_2$}
    -- (-2,2,0) node[shift={(-20pt,-4pt)}, opacity=1] {$3_1^{(\prime)}$}
    -- cycle;
    \draw[black!50, thick] (-2,-2,0) -- (2,-2,0) -- (2,2,0) -- (-2,2,0) -- cycle;
  \end{tikzpicture}

  \caption{The braid move
    $\beta_{123}\colon \seed_{121} \to \seed_{212}$ and the
    corresponding three-dimensional lattice.  The vertices $i_a$ and
    $i_a'$ are identified for $i = 1$, $2$, $3$ and $a = 1$, $3$.}
  \label{fig:lattice}
\end{figure}
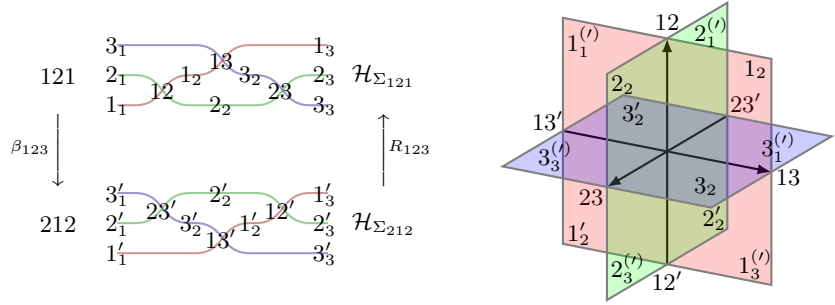

\subsection{R-matrix for the square quiver}

Let us calculate an explicit form of the function $S$ for the square
quiver.  The R-matrix
\begin{equation}
  R_{123}
  =
  \K{\mu_{2_2}}^{\sharp(\epsilon[1])}
  \K{\mu_{2_2}}^{\flat(\epsilon[1])}
  \K{\mu_{3_2}}^{\sharp(\epsilon[2])}
  \K{\mu_{3_2}}^{\flat(\epsilon[2])}
  \K{\mu_{1_2}}^{\sharp(\epsilon[3])}
  \K{\mu_{1_2}}^{\flat(\epsilon[3])}
  \K{\mu_{2_2}}^{\sharp(\epsilon[4])}
  \K{\mu_{2_2}}^{\flat(\epsilon[4])}
  \K{\auto_{1_2,3_2}}
\end{equation}
is independent of the choice of signs
$(\epsilon[1], \epsilon[2], \epsilon[3], \epsilon[4]) \in \{\pm\}^4$,
and different choices lead to different expressions for $R_{123}$.
Let us take them to be the tropical signs, which are all $+$.  Then,
\begin{equation}
  \begin{split}
    R_{123}
    &=
    \Phi^\hbar(\gamma[1] \cdot \xop)
    \Phi^\hbar(\gamma[2] \cdot \xop)
    \Phi^\hbar(\gamma[3] \cdot \xop)
    \Phi^\hbar(\gamma[4] \cdot \xop)
    \\
    & \quad
    \cdot
    \Phi^\hbar(\gamma[1] \cdot \xtop)^{-1}
    \Phi^\hbar(\gamma[2] \cdot \xtop)^{-1}
    \Phi^\hbar(\gamma[3] \cdot \xtop)^{-1}
    \Phi^\hbar(\gamma[4] \cdot \xtop)^{-1}
    \\
    &\quad\cdot
    \K{\mu_{2_2}}^{\flat(+)}
    \K{\mu_{3_2}}^{\flat(+)}
    \K{\mu_{1_2}}^{\flat(+)}
    \K{\mu_{2_2}}^{\flat(+)}
    \K{\auto_{1_2,3_2}}
  \end{split}
\end{equation}
The transformations of the tropical $\CX$-variables is listed in
Table~\ref{tab:y-square}.  According to the table we have
\begin{equation}
  \gamma[1] \cdot \xop = \xop_{2_2} \,,
  \quad
  \gamma[2] \cdot \xop = \xop_{2_2} + \xop_{3_2} \,,
  \quad
  \gamma[3] \cdot \xop =  \xop_{1_2} \,,
  \quad
  \gamma[4] \cdot \xop =  \xop_{3_2} \,.
\end{equation}
Expressing the relation between $\xop[1]$ and $\xop[6]$ in the matrix
form and taking the inverse transpose matrix, we find that
$\K{\mu_{2_2}}^{\flat(+)} \K{\mu_{3_2}}^{\flat(+)}
\K{\mu_{1_2}}^{\flat(+)} \K{\mu_{2_2}}^{\flat(+)} \K{\auto_{1_2,3_2}}$
maps
\begin{equation}
  a_{1_2}' \mapsto a_{1_1} - a_{2_2} + a_{2_3} \,,
  \quad
  a_{2_2}' \mapsto a_{2_3} + a_{3_1} - a_{3_2} \,,
  \quad
  a_{3_2}' \mapsto a_{1_1} - a_{1_2} + a_{3_1} \,.
\end{equation}
Thus we obtain
\begin{align}
  &S \!
  \begin{bmatrix}
    a_{1_1} & a_{1_2} & a_{1_3} & a'_{1_2} \\
    a_{2_1} & a_{2_2} & a_{2_3} & a'_{2_2} \\
    a_{3_1} & a_{3_2} & a_{3_3} & a'_{3_2}
  \end{bmatrix}
  \\ \nonumber
  &
    =
  \frac{
    \Phi^\hbar(\xop_{2_2})
    \Phi^\hbar(\xop_{2_2} + \xop_{3_2})
    \Phi^\hbar(\xop_{1_2})
    \Phi^\hbar(\xop_{3_2})
  }{
    \Phi^\hbar(\xtop_{2_2})
    \Phi^\hbar(\xtop_{2_2} + \xtop_{3_2})
    \Phi^\hbar(\xtop_{1_2})
    \Phi^\hbar(\xtop_{3_2})
  }
  \\ \nonumber
  &\times
    \delta(a_{1_1} - a_{2_2} - a_{1_2}' + a_{2_3})
    \delta(a_{2_3} - a_{3_2} - a_{2_2}' + a_{3_1})
    \delta(a_{3_1} - a_{1_2} - a_{3_2}' + a_{1_1}) \,.
\end{align}

\begin{table}
  \begin{equation*}
    \begin{array}{c|c@{\hspace{0.5em}}c@{\hspace{0.5em}}c@{\hspace{0.5em}}c@{\hspace{0.5em}}c@{\hspace{0.5em}}c@{\hspace{0.5em}}c@{\hspace{0.5em}}c@{\hspace{0.5em}}c}
    t
    & \Xtrop_{1_1}[t] & \Xtrop_{1_2}[t] & \Xtrop_{1_3}[t]
    & \Xtrop_{2_1}[t] & \Xtrop_{2_2}[t] & \Xtrop_{2_3}[t]
    & \Xtrop_{3_1}[t] & \Xtrop_{3_2}[t] & \Xtrop_{3_3}[t]
    \\ \hline\hline
    1
    & \Xtrop_{1_1} & \Xtrop_{1_2} & \Xtrop_{1_3}
    & \Xtrop_{2_1} & \Xtrop_{2_2} & \Xtrop_{2_3}
    & \Xtrop_{3_1} & \Xtrop_{3_2} & \Xtrop_{3_3}
    \\
    2
    & \Xtrop_{1_1}  \Xtrop_{2_2} & \Xtrop_{1_2} & \Xtrop_{1_3}
    & \Xtrop_{2_1} & \Xtrop^{-1}_{2_2} & \Xtrop_{2_3}
    & \Xtrop_{3_1} & \Xtrop_{2_2}  \Xtrop_{3_2} & \Xtrop_{3_3}
    \\
    3
    & \Xtrop_{1_1}  \Xtrop_{2_2} & \Xtrop_{1_2} & \Xtrop_{1_3}
    & \Xtrop_{2_1} & \Xtrop_{3_2} & \Xtrop_{2_2}  \Xtrop_{2_3}  \Xtrop_{3_2}
    & \Xtrop_{3_1} & \Xtrop^{-1}_{2_2} \Xtrop^{-1}_{3_2} & \Xtrop_{3_3}
    \\
    4
    & \Xtrop_{1_1}  \Xtrop_{1_2}  \Xtrop_{2_2} & \Xtrop^{-1}_{1_2} & \Xtrop_{1_3}
    & \Xtrop_{2_1} & \Xtrop_{3_2} & \Xtrop_{2_2}  \Xtrop_{2_3}  \Xtrop_{3_2}
    & \Xtrop_{1_2}  \Xtrop_{3_1} & \Xtrop^{-1}_{2_2} \Xtrop^{-1}_{3_2} & \Xtrop_{3_3}
    \\
    5
    & \Xtrop_{1_1}  \Xtrop_{1_2}  \Xtrop_{2_2} & \Xtrop^{-1}_{1_2} & \Xtrop_{1_3}
    & \Xtrop_{2_1} & \Xtrop^{-1}_{3_2} & \Xtrop_{2_2}  \Xtrop_{2_3}  \Xtrop_{3_2}
    & \Xtrop_{1_2}  \Xtrop_{3_1}  \Xtrop_{3_2} & \Xtrop^{-1}_{2_2} & \Xtrop_{3_3}
    \\
    6
    & \Xtrop_{1_1}  \Xtrop_{1_2}  \Xtrop_{2_2} & \Xtrop^{-1}_{2_2} & \Xtrop_{1_3}
    & \Xtrop_{2_1} & \Xtrop^{-1}_{3_2} & \Xtrop_{2_2}  \Xtrop_{2_3}  \Xtrop_{3_2}
    & \Xtrop_{1_2}  \Xtrop_{3_1}  \Xtrop_{3_2} & \Xtrop^{-1}_{1_2} & \Xtrop_{3_3}
  \end{array}
  \end{equation*}
    
  \caption{The transformation of the tropical $\CX$-variables under
    the cluster transformation
    $\seed[1] \xrightarrow{\mu_{2_2}} \seed[2] \xrightarrow{\mu_{3_2}}
    \seed[3] \xrightarrow{\mu_{1_2}} \seed[4] \xrightarrow{\mu_{2_2}}
    \seed[5] \xrightarrow{\auto_{1_2,3_2}} \seed[6]$ from the square
    quiver $\seed[1] := \seed_{121}$ to $\seed[6] := \seed_{212}$.}
  \label{tab:y-square}
\end{table}

\section{Cluster transformations and three-dimensional gauge theories}
\label{sec:QFT}

Terashima and Yamazaki \cite{Terashima:2013fg} observed that the
sequence of quantum dilogarithms~\eqref{eq:Phi-identity-tropical} can
be interpreted as the partition function of a three-dimensional
$\CN = 2$ supersymmetric gauge theory formulated on the squashed
three-sphere
\begin{equation}
  S^3_b := \{(z_1,z_2) \in \C^2 \mid b|z_1|^2 + b^{-1} |z_2|^2 = 1\} \,,
\end{equation}
where the squashing parameter $b \in \R_{>0}$ is related to the Planck
constant $\hbar$ as
\begin{equation}
  \hbar = b^2 \,.
\end{equation}

In this section we rewrite the intertwiner
$\K{\ct}\colon \CH_{\seed'} \to \CH_\seed$ for a cluster
transformation $\ct\colon \seed \to \seed'$ as the partition function
of a three-dimensional $\CN = 2$ supersymmetric gauge theory on
$S^3_b$.  This result shows that the R-matrices obtained from trivial
cluster transformations in section~\ref{sec:TE} can be identified with
$S^3_b$ partition functions.

Let
\begin{equation}
  \ct\colon \seed
  =: \seed[1]
  \xrightarrow{\mu_{k[1]}} \seed[2]
  \xrightarrow{\mu_{k[2]}} \seed[3]
  \dotsb
  \xrightarrow{\mu_{k[L]}} \seed[L+1]
  \xrightarrow{\auto}
  \seed[L+2] := \seed'
\end{equation}
be a decomposition of $\ct$ into $L$ mutations and one automorphism.
Following \cite{MR2861174}, for each Hilbert space $\CH_{\seed[t]}$ we
introduce a set of position and momentum operators
$(\ah_i[t], \ph_i[t])_{i \in I}$, a basis of position eigenstates
$\{\ket{a[t]}\}_{a \in \R^I}$ and a basis of momentum eigenstates
$\{\ket{p[t]}\}_{p \in \R^I}$ such that
\begin{alignat}{2}
  \ah_i[t] \ket{a[t]} &= a_i[t] \ket{a[t]} \,,
  \qquad &
  \braket{a[t]}{a'[t]} &= \delta(a[t] - a'[t]) \,,
  \\
  \ph_i[t] \ket{p[t]} &= p_i[t] \ket{p[t]} \,,
  \qquad &
  \braket{p[t]}{p'[t]} &= \delta(p[t] - p'[t])
\end{alignat}
and
\begin{equation}
  \braket{a[t]}{p[t]} = e^{2\pi\iu p[t] \cdot a[t]} \,,
  \quad
  \bra{a[t]} \ph_i[t] \ket{f}
  = -\frac{\iu}{2\pi} \frac{\partial}{\partial a_i[t]} \braket{a[t]}{f} \,,
  \quad
  \ket{f} \in \CH_{\seed[t]} \,.
\end{equation}
The completeness relations read
\begin{equation}
  \int_{\R^I} \rmd a[t] \ket{a[t]} \bra{a[t]}
  = \int_{\R^I} \rmd p[t] \ket{p[t]} \bra{p[t]}
  = 1 \,.
\end{equation}

Inserting the completeness relations for the position eigenstates into
the expression
$\K{\ct} = \K{\mu_{k[1]}} \K{\mu_{k[2]}} \dotsm \K{\mu_{k[L]}}
\K{\auto}$, we can write the matrix elements of $\K{\ct}$ as
\begin{multline}
  \bra{a[1]} \K{\ct} \ket{a[L+2]}
  \\
  =
  \int
  \biggl(
  \prod_{t=1}^L \rmd a[t+1]
  \bra{a[t]} \K{\mu_{k[t]}} \ket{a[t+1]}
  \biggr)
  \bra{a[L+1]} \K{\auto} \ket{a[L+2]} \,.
\end{multline}
The last factor is a product of delta functions:
\begin{equation}
  \bra{a[L+1]} \K{\auto} \ket{a[L+2]}
  = \prod_{i \in I} \delta(a_i[L+1] - a_{\auto(i)}[L+2]) \,.
\end{equation}
Let us calculate the remaining factors.

Using the completeness relations we can write
\begin{align}
    &\bra{a[t]} \K{\mu_{k[t]}} \ket{a[t+1]}
    \\ \nonumber
    &=
    \int \rmd p[t] \rmd \at[t]
    \bra{a[t]}
    \Phi^\hbar(\epsilon[t] \xop_{k[t]}[t])^{\epsilon[t]}
    \ket{p[t]}
    \bra{p[t]}
    \Phi^\hbar(\epsilon[t] \xtop_{k[t]}[t])^{-\epsilon[t]}
    \ket{\at[t]}
    \\ \nonumber
    &\qquad
    \times
    \bra{\at[t]}
    \K{\mu_{k[t]}}^{\flat(\epsilon[t])}
    \ket{a[t+1]}
    \\ \nonumber
    &=
    \int \rmd p[t] \rmd \at[t]
    \frac{
      \Phi^\hbar\bigl(\epsilon[t] (-2\pi^2 b^2 p_{k[t]}[t]
      - \sum_{j \in I} \varepsilon_{k[t]j}[t] a_j[t])\bigr)^{\epsilon[t]}
    }{
      \Phi^\hbar\bigl(\epsilon[t](-2\pi^2 b^2 p_{k[t]}[t]
      + \sum_{j \in I} \varepsilon_{k[t]j}[t] \at_j[t])\bigr)^{\epsilon[t]}
    }
  \\ \nonumber
    &\qquad
      \times
      e^{2\pi\iu p[t] \cdot (a[t] - \at[t])}
      \prod_{i \in I \setminus \{k[t]\}}
      \delta\bigl(\at_i[t] -  a_i[t+1]\bigr)
  \\ \nonumber
    &\qquad
      \times
      \delta\Bigl(-\at_{k[t]}[t]
      + \sum_{j \in I} \bigl[-\epsilon[t]\varepsilon_{k[t]j}\bigr]_+ \at_j[t]
      -  a_{k[t]}[t+1]\Bigr) \,,
\end{align}
where we have chosen a sign $\epsilon[t] \in \{\pm\}$.  (Note that
$\ph_{k[t]}$ and $\sum_{j \in I} \varepsilon_{k[t]j}[t] \ah_j$
commute.)  Integrating over $p_i[t]$ for $i \neq k[t]$ yields
$\delta(a_i[t] - \at_i[t])$.  Then, integrating over $\at[t]$ gives
\begin{multline}
  \bra{a[t]} \K{\mu_{k[t]}} \ket{a[t+1]}
  \\
  =
  \frac{1}{\pi b}
  \int \rmd x[t]
  \biggl(\frac{
    \Phi^\hbar\bigl(2\pi b(x[t] - \epsilon[t] u[t])\bigr)
  }{
    \Phi^\hbar\bigl(2\pi b(x[t] + \epsilon[t] u[t])\bigr)
  }\biggr)^{\epsilon[t]}
  e^{2\pi\iu w[t] x[t]}
  \prod_{i \in I \setminus \{k[t]\}}
  \delta\bigl(a_i[t] - a_i[t+1]\bigr) \,,
\end{multline}
where
\begin{align}
  x[t] &:= -\pi b \epsilon[t] p_{k[t]}[t] \,,
  \\
  u[t] &:= \frac{1}{2\pi b} \sum_{j \in I} \varepsilon_{k[t]j}[t] a_j[t] \,,
  \\
  w[t]
  &:=
  -\frac{\epsilon[t]}{\pi b}
  \biggl(a_{k[t]}[t] + a_{k[t]}[t+1]
    - \sum_{j \in I} \bigl[-\epsilon[t]\varepsilon_{k[t]j}[t]\bigr]_+ a_j[t]
  \biggr) \,.
\end{align}

The integration over $x[t]$ can be performed with the
formula~\cite{Faddeev:2000if}%
\footnote{This formula is valid for $\Im(-u + \iu Q/2) > 0$ and
  $\Im(-2u) < \Im(w) < 0$.  To satisfy these conditions we can give
  $a_j[t]$ an imaginary part
  $\iu\varepsilon_{k[t]j}[t] c[t] + \iu \delta_{kj} d[t]$ with some
  constants $c[t]$, $d[t]$.  In the corresponding three-dimensional
  gauge theory this operation amounts to shifting the R-charges of the
  three kinds of chiral multiplets to positive values that sum to
  $2$.}
\begin{equation}
  \int_{\R} \frac{\Phi^\hbar(2\pi b(x-u))}{\Phi^\hbar(2\pi b(x+u))}
  e^{2\pi\iu wx} \rmd x
  =
  s_b\Bigl(2u - \frac{\iu}{2} Q\Bigr)
  s_b\Bigl(w + \frac{\iu}{2} Q\Bigr)
  s_b\Bigl(-2u - w + \frac{\iu}{2} Q\Bigr)
  \,.
\end{equation}
Here
\begin{equation}
  Q = b + b^{-1}
\end{equation}
and
\begin{equation}
  s_b(z)
  :=
  e^{-\pi\iu z^2/2 + \pi\iu(2 - Q^2)/24}
  \Phi^{b^2}(2\pi b z)^{-1}
  = s_b(-z)^{-1} \,.
\end{equation}
We obtain
\begin{multline}
  \bra{a[t]} \K{\mu_{k[t]}} \ket{a[t+1]}
  \\
  =
  \frac{1}{\pi b}
  Z_{\mu_{k[t]}}\bigl(\sigma_{k[t]}[t], \sigma_{k[t]}[t+1]; (\sigma_i[t])_{i \in I \setminus \{k[t]\}}\bigr)
  \prod_{i \in I \setminus \{k[t]\}} \delta(a_i[t] - a_i[t+1])
  \,,
\end{multline}
with
\begin{equation}
  \sigma_i[t] := \frac{a_i[t]}{\pi b}
\end{equation}
and
\begin{align}
  &
    Z_{\mu_{k[t]}}\bigl(\sigma_{k[t]}[t], \sigma_{k[t]}[t+1]; (\sigma_i[t])_{i \in I \setminus \{k[t]\}}\bigr)
  \\ \nonumber
  &\qquad
    :=
  s_b\biggl(
  \sum_{j \in I} \varepsilon_{k[t]j}[t] \sigma_j[t]
  - \frac{\iu}{2} Q\biggr)
  \\ \nonumber
  &\qquad\qquad
  \times
  s_b\biggl(
  \sigma_{k[t]}[t] + \sigma_{k[t]}[t+1]
  - \sum_{j \in I} \bigl[\varepsilon_{k[t]j}[t]\bigr]_+ \sigma_j[t]
  + \frac{\iu}{2} Q\biggr)
  \\ \nonumber
  &\qquad\qquad
  \times s_b\biggl(
  -\sigma_{k[t]}[t] - \sigma_{k[t]}[t+1]
  + \sum_{j \in I} \bigl[-\varepsilon_{k[t]j}[t]\bigr]_+ \sigma_j[t]
  + \frac{\iu}{2} Q\biggr)
  \,.
\end{align}
This is independent of the choice of the sign $\epsilon[t]$, as
claimed before.

From this calculation we find that the matrix elements of $\K{\ct}$
are given by
\begin{multline}
  \bra{a[1]} \K{\ct} \ket{a[L+2]}
  \\
  =
  \int
  \biggl(
  \prod_{t=1}^L \rmd \sigma_{k[t]}[t+1]
  Z_{\mu_{k[t]}}\bigl(\sigma_{k[t]}[t], \sigma_{k[t]}[t+1]; (\sigma_i[t])_{i \in I \setminus \{k[t]\}}\bigr)
  \biggr)
  \\
  \times
  \prod_{i \in I} \delta(a_i[L+1] - a_{\auto(i)}[L+2])
  \,,
\end{multline}
where the variables are understood to satisfy the relation
\begin{equation}
  \label{eq:sigma_it=sigma_it+1}
  \sigma_i[t] = \sigma_i[t+1] \,,
  \quad
  i \neq k[t] \,,
  \quad
  \text{$t = 1$, $2$, $\dotsc$, $L$} \,.
\end{equation}
Let
\begin{equation}
  K := \bigcup_{t=1}^L \{k[t]\}
\end{equation}
and write the product of delta functions in the above expression as
\begin{multline}
  \prod_{i \in I} \delta(a_i[L+1] - a_{\auto(i)}[L+2])
  \\
  =
  \prod_{k \in K}
  \delta(a_k[L+1] - a_{\auto(k)}[L+2])
  \prod_{l \in I \setminus K}
  \delta(a_l[L+1] - a_{\auto(l)}[L+2]) \,.
\end{multline}
For each $k \in K$, let $t_k$ be the largest $t$ such that $k = k[t]$.
The integration over $\sigma_k[t_k+1]$ sets
\begin{equation}
  \label{eq:sigma_kt+1=sigma_kL+2}
  \sigma_k[t_k+1] = \sigma_{\auto(k)}[L+2] \,,
  \quad
  k \in K \,.
\end{equation}
By the relations \eqref{eq:sigma_it=sigma_it+1} and
\eqref{eq:sigma_kt+1=sigma_kL+2}, each variable $a_i[t]$ is now equal
to either $a_i[1]$, $a_{\auto(i)}[L+2]$ or one of the remaining
integration variables.  Finally, we arrive at the expression
\begin{multline}
  \label{eq:Kct-ME}
  \bra{a[1]} \K{\ct} \ket{a[L+2]}
  \\
  =
  \int
  \biggl(
  \prod_{s \in \{1, 2, \dotsc, L\} \setminus
    \bigcup_{k \in K} \{t_k\}}
  \rmd \sigma_{k[s]}[s+1]
  \prod_{t=1}^L
  Z_{\mu_{k[t]}}\bigl(\sigma_{k[t]}[t], \sigma_{k[t]}[t+1]; (\sigma_i[t])_{i \in I \setminus \{k[t]\}}\bigr)
  \biggr)
  \\
  \times
  (\pi b)^{-|K|}
  \prod_{l \in I \setminus K}
  \delta(a_l[1] - a_{\auto(l)}[L+2])
  \,.
\end{multline}

The first line of the right-hand side of the matrix
element~\eqref{eq:Kct-ME} coincides with the integral formula for the
partition function of a three-dimensional $\CN = 2$ supersymmetric
gauge theory on $S^3_b$ \cite{Hama:2011ea}.  This theory has
\begin{itemize}
\item abelian symmetry groups $\U(1)_{\sigma_i[t]}$, $i \in I$,
  $t \in \{1 ,2, \dotsc, L+2\}$, with the identification
  \eqref{eq:sigma_it=sigma_it+1} and \eqref{eq:sigma_kt+1=sigma_kL+2},
  among which $\U(1)_{\sigma_{k[s]}[s+1]}$,
  $s \in \{1, 2, \dotsc, L\} \setminus \bigcup_{k \in K} \{t_k\}$, are
  gauged;

\item a vector multiplet for each gauge group
  $\U(1)_{\sigma_{k[s]}[s+1]}$;
  
\item a background (i.e.\ nondynamical) vector multiplet for each
  global symmetry group $\U(1)_{\sigma_i[t]}$ whose real scalar
  component is $\sigma_i[t]$;

\item a chiral multiplet for each $t \in \{1,2, \dotsc, L\}$ with
  $\U(1)_{\sigma_j[t]}$-charge $\varepsilon_{k[t]j}[t]$ and R-charge
  $2$;
  
\item a chiral multiplet for each $t \in \{1,2, \dotsc, L\}$ with
  $\U(1)_{\sigma_{k[t]}[t]}$-charge $+1$,
  $\U(1)_{\sigma_{k[t]}[t+1]}$-charge $+1$,
  $\U(1)_{\sigma_j[t]}$-charge $-[\varepsilon_{k[t]j}[t]]_+$ for
  $j \neq k[t]$ and R-charge $0$;

\item a chiral multiplet for each $t \in \{1,2, \dotsc, L\}$ with
  $\U(1)_{\sigma_{k[t]}[t]}$-charge $-1$,
  $\U(1)_{\sigma_{k[t]}[t+1]}$-charge $-1$,
  $\U(1)_{\sigma_j[t]}$-charge $[-\varepsilon_{k[t]j}[t]]_+$ for
  $j \neq k[t]$ and R-charge $0$; and

\item zero Chern--Simons levels and zero Fayet--Iliopoulos parameters.
\end{itemize}
The field content is almost that of an $\CN = 4$ supersymmetric gauge
theory and is compatible with the cubic superpotential required for
such a theory, but the $\U(1)_{\sigma_j[t]}$-charge assignment does
not allow the fields to form $\CN = 4$ supermultiplets.

\appendix

\section{Proof of Proposition \ref{prop:trivial-loop-ct}}
\label{sec:proof}

In this appendix we give a proof of Proposition
\ref{prop:trivial-loop-ct}.  For the ease of presentation we will
employ different naming conventions for the vertices of quivers.

The proof is based on calculations.  For each of the triangle, square
and butterfly quivers assigned to $\seed_{123121}$, we will decompose
the cluster transformation~\eqref{eq:RRRRRRRR} into a sequence of
mutations followed by automorphisms.  Then, for each step of the
decomposition, we will list the tropical $\CX$-variables that are
transformed nontrivially and their relations to the initial tropical
$\CX$-variables $(\Xtrop_i)_{i \in I}$.  We will find that if $\Xtrop_i$ is ever
transformed, then the last entry in which a variable $\Xtrop_i[t]$ of any
$t$ appears (emphasized in bold letters) is always a relation
$\Xtrop_i[t] = \Xtrop_i$.  Therefore, the cluster transformation acts on the
tropical $\CX$-variables trivially, and
Proposition~\ref{prop:trivial-loop-ct} follows from
Theorem~\ref{eq:trivial-trop}.

\subsection{Triangle quiver}

There are 10 connected vertices in the triangle quivers assigned to
reduced words for the longest element of $S_4$.  For the triangle
quivers assigned to 123121, we label the vertices as
\begin{equation}
  \centering
  \begin{tikzpicture}[scale=0.8, rotate=90]
    \braid[gap=0, thick, black!30] a_1 a_2 a_3 a_1 a_2 a_1;
    
    \node[fnode] (e) at (0.5,-3.3) {1};
    \node[fnode] (1) at (1.5,0) {2};
    \node[fnode] (1c2) at (2.5,0) {6};
    \node[fnode] (1c2c3) at (3.5,0) {9};
    \node[fnode] (4) at (1.5,-6.5) {5};
    \node[fnode] (3c4) at (2.5,-6.5) {8};
    \node[fnode] (2c3c4) at (3.5,-6.5) {10};
    \node[vertex] (1c2c3c4) at (4.5,-3.3) {\phantom{\tikz{\node[gnode]{};}}};
    
    \node[gnode] (2) at (1.5,-2.25) {3};
    \node[gnode] (2c3) at (2.5,-3.5) {7};
    \node[gnode] (3) at (1.5,-4.75) {4}; 
    
    \draw[q->] (e) -- (1);
    \draw[q->] (1) -- (2);
    
    \draw[q->] (2) -- (1c2);
    \draw[q->] (2c3) -- (2);
    \draw[q->] (1c2) -- (2c3);
    
    \draw[q->] (2c3) -- (1c2c3);
    \draw[q->] (2c3c4) -- (2c3);
    \draw[q->] (1c2c3) -- (2c3c4);
    
    \draw[q->] (2) -- (3);
    
    \draw[q->] (3) -- (2c3);
    \draw[q->] (3c4) -- (3);
    \draw[q->] (2c3) -- (3c4);
    
    \draw[q->] (4) -- (e);
    \draw[q->] (3) -- (4);
  \end{tikzpicture}
\end{equation}

The cluster transformation~\eqref{eq:RRRRRRRR} has a decomposition
\begin{multline}
 \beta_{123}^{-1} \circ \beta_{124}^{-1} \circ \beta_{134}^{-1} \circ \beta_{234}^{-1}
  \circ \beta_{123} \circ \beta_{124} \circ \beta_{134} \circ \beta_{234}
  \\
  =
  \auto_{4,7} \circ \auto_{3,7}
  \circ \mu_7 \circ \mu_4 \circ \mu_3 \circ \mu_7
  \circ \mu_4 \circ \mu_3 \circ \mu_7 \circ \mu_4 \,.
\end{multline}
Under this sequence of mutations and automorphisms the tropical
$\CX$-variables transform as follows:
\begin{multicols}{2}
  \begin{itemize}
  \item[$\mu_4$:]
    $\Xtrop_4[2] = 1/\Xtrop_4$ \\
    $\Xtrop_5[2] = \Xtrop_4 \Xtrop_5$ \\
    $\Xtrop_7[2] = \Xtrop_4 \Xtrop_7$
		
  \item[$\mu_7$:]
    $\Xtrop_4[3] = \Xtrop_7$ \\
    $\Xtrop_7[3] = 1/\Xtrop_4 \Xtrop_7$ \\
    $\Xtrop_9[3] = \Xtrop_4 \Xtrop_7 \Xtrop_9$
		
  \item[$\mu_3$:]
    $\Xtrop_3[4] = 1/\Xtrop_3$ \\
    $\Xtrop_5[4] = \Xtrop_3 \Xtrop_4 \Xtrop_5$ \\
    $\Xtrop_6[4] = \Xtrop_3 \Xtrop_6$
	
  \item[$\mu_4$:]
    $\Xtrop_4[5] = 1/\Xtrop_7$ \\
    $\Xtrop_7[5] = 1/\Xtrop_4$ \\
    $\Xtrop_8[5] = \Xtrop_7 \Xtrop_8$

  \item[$\mu_7$:]
    $\Xtrop_3[6] = 1/\Xtrop_3 \Xtrop_4$ \\
    $\Xtrop_7[6] = \Xtrop_4$ \\
    $\Xtrop_9[6] = \Xtrop_7 \Xtrop_9$
		
  \item[$\mu_3$:]
    $\Xtrop_3[7] = \Xtrop_3 \Xtrop_4$ \\
    $\boldsymbol{\Xtrop_5[7] = \Xtrop_5}$ \\
    $\Xtrop_7[7] = 1/\Xtrop_3$
	
  \item[$\mu_4$:]
    $\Xtrop_4[8] = \Xtrop_7$ \\
    $\boldsymbol{\Xtrop_8[8] = \Xtrop_8}$ \\
    $\boldsymbol{\Xtrop_9[8] = \Xtrop_9}$
	
  \item[$\mu_7$:]
    $\Xtrop_3[9] = \Xtrop_4$ \\
    $\boldsymbol{\Xtrop_6[9] = \Xtrop_6}$ \\
    $\Xtrop_7[9] = \Xtrop_3$
		
  \item[$\auto_{3,7}$:]
    $\boldsymbol{\Xtrop_3[10] = \Xtrop_3}$ \\
    $\Xtrop_7[11] = \Xtrop_4$

  \item[$\auto_{4,7}$:]
    $\boldsymbol{\Xtrop_4[11] = \Xtrop_4}$ \\
    $\boldsymbol{\Xtrop_7[11] = \Xtrop_7}$
  \end{itemize}
\end{multicols}

\subsection{Square quiver}	

We label the 16 connected vertices of the square quiver assigned to
the reduced word 123121 for the longest element of $S_4$ as
\begin{equation}
  \centering
  \begin{tikzpicture}[scale=0.8, rotate=90]
    \braid[gap=0, thick, black!30] a_1 a_2 a_3 a_1 a_2 a_1;
    
    \node[fnode] (1_1) at (1,0) {1};
    \node[fnode] (2_1) at (2,0) {5};
    \node[fnode] (3_1) at (3,0) {11};
    \node[fnode] (4_1) at (4,0) {15};
    \node[fnode] (4_4) at (1,-6.5) {4};
    \node[fnode] (3_4) at (2,-6.5) {10};
    \node[fnode] (2_4) at (3,-6.5) {14};
    \node[fnode] (1_4) at (4,-6.5) {16};
    
    \node[gnode] (2_2) at (1,-2.25) {2}; 
    \node[gnode] (1_2) at (2,-1.25) {6};
    \node[gnode] (3_2) at (2,-2.75) {7};
    \node[gnode] (1_3) at (3,-2.25) {12};
    \node[gnode] (3_3) at (1,-4.75) {3}; 
    \node[gnode] (2_3) at (2,-4.25) {8};
    \node[gnode] (4_2) at (3,-3.75) {13};
    \node[gnode] (4_3) at (2,-5.25) {9};
    
    \draw[q->] (1_1) -- (2_1);
    \draw[q->] (2_1) -- (1_2);
    \draw[q->] (1_2) -- (2_2);
    \draw[q->] (2_2) -- (1_1);
    
    \draw[q->] (1_2) -- (3_1);
    \draw[q->] (3_1) -- (1_3);
    \draw[q->] (1_3) -- (3_2);
    \draw[q->] (3_2) -- (1_2);
    
    \draw[q->] (1_3) -- (4_1);
    \draw[q->] (4_1) -- (1_4);
    \draw[q->] (1_4) -- (4_2);
    \draw[q->] (4_2) -- (1_3);
    
    \draw[q->] (2_2) -- (3_2);
    \draw[q->] (3_2) -- (2_3);
    \draw[q->] (2_3) -- (3_3);
    \draw[q->] (3_3) -- (2_2);
    
    \draw[q->] (2_3) -- (4_2);
    \draw[q->] (4_2) -- (2_4);
    \draw[q->] (2_4) -- (4_3);
    \draw[q->] (4_3) -- (2_3);
    
    \draw[q->] (3_3) -- (4_3);
    \draw[q->] (4_3) -- (3_4);
    \draw[q->] (3_4) -- (4_4);
    \draw[q->] (4_4) -- (3_3);
  \end{tikzpicture}
\end{equation}

Under the cluster transformation
\begin{align}
  &\beta_{123}^{-1}\circ \beta_{124}^{-1} \circ \beta_{134}^{-1} \circ \beta_{234}^{-1}
  \circ \beta_{123} \circ \beta_{124} \circ \beta_{134} \circ \beta_{234}
  \\ \nonumber
  &\qquad
    =
    \auto_{8,9} \circ \auto_{6,7} \circ \auto_{3,13} \circ \auto_{2,12}
    \circ \mu_{12} \circ \mu_6 \circ \mu_7 \circ \mu_{12}
    \circ \mu_9 \circ \mu_3 \circ \mu_2 \circ \mu_9
  \\ \nonumber
  &\qquad\qquad
    \circ \mu_{13} \circ \mu_8 \circ \mu_6 \circ \mu_{13}
    \circ \mu_7 \circ \mu_2 \circ \mu_{12} \circ \mu_7
    \circ \mu_9 \circ \mu_{13} \circ \mu_3 \circ \mu_9
  \\ \nonumber
  &\qquad\qquad
    \circ \mu_2 \circ \mu_6 \circ \mu_8 \circ \mu_2
    \circ \mu_7 \circ \mu_{12} \circ \mu_{13} \circ \mu_7
    \circ \mu_3 \circ \mu_8 \circ \mu_9 \circ \mu_3
\end{align}
the tropical $\CX$-variables transform as
\begin{multicols}{2}
  \begin{itemize}
  \item[$\mu_{3}$:]
    $\Xtrop_{2}[2] = \Xtrop_{2} \Xtrop_{3}$ \\
    $\Xtrop_{3}[2] = 1/\Xtrop_{3}$ \\
    $\Xtrop_{9}[2] = \Xtrop_{3} \Xtrop_{9}$
		
  \item[$\mu_{9}$:]
    $\Xtrop_{3}[3] = \Xtrop_{9}$ \\
    $\Xtrop_{9}[3] = 1/\Xtrop_{3} \Xtrop_{9}$ \\
    $\Xtrop_{10}[3] = \Xtrop_{3} \Xtrop_{9} \Xtrop_{10}$

  \item[$\mu_{8}$:]
    $\Xtrop_{2}[4] = \Xtrop_{2} \Xtrop_{3} \Xtrop_{8}$ \\
    $\Xtrop_{8}[4] = 1/\Xtrop_{8}$ \\
    $\Xtrop_{13}[4] = \Xtrop_{8} \Xtrop_{13}$

  \item[$\mu_{3}$:]
    $\Xtrop_{3}[5] = 1/\Xtrop_{9}$ \\
    $\Xtrop_{9}[5] = 1/\Xtrop_{3}$ \\
    $\Xtrop_{13}[5] = \Xtrop_{8} \Xtrop_{9} \Xtrop_{13}$

  \item[$\mu_{7}$:]
    $\Xtrop_{6}[6] = \Xtrop_{6} \Xtrop_{7}$ \\
    $\Xtrop_{7}[6] = 1/\Xtrop_{7}$ \\
    $\Xtrop_{13}[6] = \Xtrop_{7} \Xtrop_{8} \Xtrop_{9} \Xtrop_{13}$
		
  \item[$\mu_{13}$:]
    $\Xtrop_{3}[7] = \Xtrop_{7} \Xtrop_{8} \Xtrop_{13}$ \\
    $\Xtrop_{7}[7] = \Xtrop_{8} \Xtrop_{9} \Xtrop_{13}$ \\
    $\Xtrop_{13}[7] = 1/\Xtrop_{7} \Xtrop_{8} \Xtrop_{9} \Xtrop_{13}$
		
  \item[$\mu_{12}$:]
    $\Xtrop_{6}[8] = \Xtrop_{6} \Xtrop_{7} \Xtrop_{12}$ \\
    $\Xtrop_{12}[8] = 1/\Xtrop_{12}$ \\
    $\Xtrop_{15}[8] = \Xtrop_{12} \Xtrop_{15}$
		
  \item[$\mu_{7}$:]
    $\Xtrop_{7}[9] = 1/\Xtrop_{8} \Xtrop_{9} \Xtrop_{13}$ \\
    $\Xtrop_{13}[9] = 1/\Xtrop_{7}$ \\
    $\Xtrop_{15}[9] = \Xtrop_{8} \Xtrop_{9} \Xtrop_{12} \Xtrop_{13} \Xtrop_{15}$

  \item[$\mu_{2}$:]
    $\Xtrop_{1}[10] = \Xtrop_{1} \Xtrop_{2} \Xtrop_{3} \Xtrop_{8}$ \\
    $\Xtrop_{2}[10] = 1/\Xtrop_{2} \Xtrop_{3} \Xtrop_{8}$ \\
    $\Xtrop_{8}[10] = \Xtrop_{2} \Xtrop_{3}$
		
  \item[$\mu_{8}$:]
    $\Xtrop_{2}[11] = 1/\Xtrop_{8}$ \\
    $\Xtrop_{8}[11] = 1/\Xtrop_{2} \Xtrop_{3}$ \\
    $\Xtrop_{9}[11] = \Xtrop_{2}$
		
  \item[$\mu_{6}$:]
    $\Xtrop_{1}[12] = \Xtrop_{1} \Xtrop_{2} \Xtrop_{3} \Xtrop_{6} \Xtrop_{7} \Xtrop_{8} \Xtrop_{12}$ \\
    $\Xtrop_{6}[12] = 1/\Xtrop_{6} \Xtrop_{7} \Xtrop_{12}$ \\
    $\Xtrop_{12}[12] = \Xtrop_{6} \Xtrop_{7}$

  \item[$\mu_{2}$:]
    $\Xtrop_{2}[13] = \Xtrop_{8}$ \\
    $\Xtrop_{6}[13] = 1/\Xtrop_{6} \Xtrop_{7} \Xtrop_{8} \Xtrop_{12}$ \\
    $\Xtrop_{13}[13] = 1/\Xtrop_{7} \Xtrop_{8}$
		
  \item[$\mu_{9}$:]
    $\Xtrop_{3}[14] = \Xtrop_{2} \Xtrop_{7} \Xtrop_{8} \Xtrop_{13}$ \\
    $\Xtrop_{8}[14] = 1/\Xtrop_{3}$ \\
    $\Xtrop_{9}[14] = 1/\Xtrop_{2}$
		
  \item[$\mu_{3}$:]
    $\Xtrop_{3}[15] = 1/\Xtrop_{2} \Xtrop_{7} \Xtrop_{8} \Xtrop_{13}$ \\
    $\Xtrop_{9}[15] = \Xtrop_{7} \Xtrop_{8} \Xtrop_{13}$ \\
    $\Xtrop_{14}[15] = \Xtrop_{2} \Xtrop_{7} \Xtrop_{8} \Xtrop_{13} \Xtrop_{14}$
		
  \item[$\mu_{13}$:]
    $\Xtrop_{2}[16] = 1/\Xtrop_{7}$ \\
    $\Xtrop_{9}[16] = \Xtrop_{13}$ \\
    $\Xtrop_{13}[16] = \Xtrop_{7} \Xtrop_{8}$
		
  \item[$\mu_{9}$:]
    $\Xtrop_{3}[17] = 1/\Xtrop_{2} \Xtrop_{7} \Xtrop_{8}$ \\
    $\Xtrop_{7}[17] = 1/\Xtrop_{8} \Xtrop_{9}$ \\
    $\Xtrop_{9}[17] = 1/\Xtrop_{13}$
		
  \item[$\mu_{7}$:]
    $\Xtrop_{2}[18] = 1/\Xtrop_{7} \Xtrop_{8} \Xtrop_{9}$ \\
    $\Xtrop_{7}[18] = \Xtrop_{8} \Xtrop_{9}$ \\
    $\Xtrop_{15}[18] = \Xtrop_{12} \Xtrop_{13} \Xtrop_{15}$
		
  \item[$\mu_{12}$:]
    $\Xtrop_{7}[19] = \Xtrop_{6} \Xtrop_{7} \Xtrop_{8} \Xtrop_{9}$ \\
    $\Xtrop_{11}[19] = \Xtrop_{6} \Xtrop_{7} \Xtrop_{11}$ \\
    $\Xtrop_{12}[19] = 1/\Xtrop_{6} \Xtrop_{7}$
		
  \item[$\mu_{2}$:]
    $\Xtrop_{2}[20] = \Xtrop_{7} \Xtrop_{8} \Xtrop_{9}$ \\
    $\Xtrop_{7}[20] = \Xtrop_{6}$ \\
    $\Xtrop_{13}[20] = 1/\Xtrop_{9}$
		
  \item[$\mu_{7}$:]
    $\Xtrop_{6}[21] = 1/\Xtrop_{7} \Xtrop_{8} \Xtrop_{12}$ \\
    $\Xtrop_{7}[21] = 1/\Xtrop_{6}$ \\
    $\Xtrop_{12}[21] = 1/\Xtrop_{7}$

  \item[$\mu_{13}$:]
    $\Xtrop_{2}[22] = \Xtrop_{7} \Xtrop_{8}$ \\
    $\Xtrop_{8}[22] = 1/\Xtrop_{3} \Xtrop_{9}$ \\
    $\Xtrop_{13}[22] = \Xtrop_{9}$
		
  \item[$\mu_{6}$:]
    $\Xtrop_{1}[23] = \Xtrop_{1} \Xtrop_{2} \Xtrop_{3} \Xtrop_{6}$ \\
    $\Xtrop_{2}[23] = 1/\Xtrop_{12}$ \\
    $\Xtrop_{6}[23] = \Xtrop_{7} \Xtrop_{8} \Xtrop_{12}$

  \item[$\mu_{8}$:]
    $\Xtrop_{8}[24] = \Xtrop_{3} \Xtrop_{9}$ \\
    $\boldsymbol{\Xtrop_{10}[24] = \Xtrop_{10}}$ \\
    $\Xtrop_{13}[24] = 1/\Xtrop_{3}$
		
  \item[$\mu_{13}$:]
    $\Xtrop_{1}[25] = \Xtrop_{1} \Xtrop_{2} \Xtrop_{6}$ \\
    $\Xtrop_{8}[25] = \Xtrop_{9}$ \\
    $\Xtrop_{13}[25] = \Xtrop_{3}$
		
  \item[$\mu_{9}$:]
    $\Xtrop_{3}[26] = 1/\Xtrop_{2} \Xtrop_{7} \Xtrop_{8} \Xtrop_{13}$ \\
    $\Xtrop_{9}[26] = \Xtrop_{13}$ \\
    $\Xtrop_{15}[26] = \Xtrop_{12} \Xtrop_{15}$
		
  \item[$\mu_{2}$:]
    $\Xtrop_{2}[27] = \Xtrop_{12}$ \\
    $\Xtrop_{6}[27] = \Xtrop_{7} \Xtrop_{8}$ \\
    $\boldsymbol{\Xtrop_{15}[27] = \Xtrop_{15}}$
		
  \item[$\mu_{3}$:]
    $\Xtrop_{3}[28] = \Xtrop_{2} \Xtrop_{7} \Xtrop_{8} \Xtrop_{13}$ \\
    $\Xtrop_{9}[28] = 1/\Xtrop_{2} \Xtrop_{7} \Xtrop_{8}$ \\
    $\boldsymbol{\Xtrop_{14}[28] = \Xtrop_{14}}$

  \item[$\mu_{9}$:]
    $\Xtrop_{3}[29] = \Xtrop_{13}$ \\
    $\Xtrop_{6}[29] = 1/\Xtrop_{2}$ \\
    $\Xtrop_{9}[29] = \Xtrop_{2} \Xtrop_{7} \Xtrop_{8}$
		
  \item[$\mu_{12}$:]
    $\Xtrop_{6}[30] = 1/\Xtrop_{2} \Xtrop_{7}$ \\
    $\Xtrop_{11}[30] = \Xtrop_{6} \Xtrop_{11}$ \\
    $\Xtrop_{12}[30] = \Xtrop_{7}$
		
  \item[$\mu_{7}$:]
    $\Xtrop_{1}[31] = \Xtrop_{1} \Xtrop_{2}$ \\
    $\Xtrop_{7}[31] = \Xtrop_{6}$ \\
    $\boldsymbol{\Xtrop_{11}[31] = \Xtrop_{11}}$
		
  \item[$\mu_{6}$:]
    $\Xtrop_{6}[32] = \Xtrop_{2} \Xtrop_{7}$ \\
    $\Xtrop_{9}[32] = \Xtrop_{8}$ \\
    $\Xtrop_{12}[32] = 1/\Xtrop_{2}$

  \item[$\mu_{12}$:]
    $\boldsymbol{\Xtrop_{1}[33] = \Xtrop_{1}}$ \\
    $\Xtrop_{6}[33] = \Xtrop_{7}$ \\
    $\Xtrop_{12}[33] = \Xtrop_{2}$
		
  \item[$\auto_{2,12}$:]
    $\boldsymbol{\Xtrop_{2}[34] = \Xtrop_{2}}$ \\
    $\boldsymbol{\Xtrop_{12}[34] = \Xtrop_{12}}$

  \item[$\auto_{3,13}$:]
    $\boldsymbol{\Xtrop_{3}[35] = \Xtrop_{3}}$ \\
    $\boldsymbol{\Xtrop_{13}[35] = \Xtrop_{13}}$

  \item[$\auto_{6,7}$:]
    $\boldsymbol{\Xtrop_{6}[36] = \Xtrop_{6}}$ \\
    $\boldsymbol{\Xtrop_{7}[36] = \Xtrop_{7}}$

  \item[$\auto_{8,9}$:]
    $\boldsymbol{\Xtrop_{8}[37] = \Xtrop_{8}}$ \\
    $\boldsymbol{\Xtrop_{9}[37] = \Xtrop_{9}}$
  \end{itemize}	
\end{multicols}

\subsection{Butterfly quiver}

For the butterfly quiver assigned to $123121$, we label the vertices
as
\begin{equation}
    \begin{tikzpicture}[scale=0.8, rotate=90]
    \braid[gap=0, thick, black!30] a_1 a_2 a_3 a_1 a_2 a_1;

    \node[fnode] (e) at (0.5,-3.3) {$1$};
    \node[fnode] (1) at (1.5,0) {$2$};
    \node[fnode] (1c2) at (2.5,0) {$9$};
    \node[fnode] (1c2c3) at (3.5,0) {$14$};
    \node[fnode] (4) at (1.5,-6.6) {$8$};
    \node[fnode] (3c4) at (2.5,-6.6) {$13$};
    \node[fnode] (2c3c4) at (3.5,-6.6) {$16$};
    \node[fnode] (1c2c3c4) at (4.5,-3.3) {$17$};

    \node[gnode] (2) at (1.5,-2.25) {$4$};
    \node[gnode] (2c3) at (2.5,-3.5) {$11$};
    \node[gnode] (3) at (1.5,-4.75) {$6$}; 
    
    \node[gnode] (12) at (1.5,-0.75) {$3$};
    \node[gnode] (13) at (2.5,-1.75) {$10$};
    \node[gnode] (14) at (3.5,-2.75) {$15$};
    \node[gnode] (23) at (1.5,-3.75) {$5$};
    \node[gnode] (24) at (2.5,-4.75) {$12$};
    \node[gnode] (34) at (1.5,-5.75) {$7$};
    
    \draw[q->] (12) -- (e);
    \draw[q->] (e) -- (1);
    \draw[q->] (1) -- (12);
    \draw[q->] (12) -- (1c2);
    \draw[q->] (2) -- (12);

    \draw[q->] (13) -- (2);
    \draw[q->] (1c2) -- (13);
    \draw[q->] (13) -- (1c2c3);
    \draw[q->] (2c3) -- (13);

    \draw[q->] (14) -- (2c3);
    \draw[q->] (1c2c3) -- (14);
    \draw[q->] (14) -- (1c2c3c4);
    \draw[q->] (1c2c3c4) -- (2c3c4);
    \draw[q->] (2c3c4) -- (14);

    \draw[q->] (23) -- (e);
    \draw[q->] (e) -- (2);
    \draw[q->] (2) -- (23);
    \draw[q->] (23) -- (2c3);
    \draw[q->] (3) -- (23);

    \draw[q->] (24) -- (3);
    \draw[q->] (2c3) -- (24);
    \draw[q->] (24) -- (2c3c4);
    \draw[q->] (2c3c4) -- (3c4);
    \draw[q->] (3c4) -- (24);

    \draw[q->] (34) -- (e);
    \draw[q->] (e) -- (3);
    \draw[q->] (3) -- (34);
    \draw[q->] (34) -- (3c4);
    \draw[q->] (3c4) -- (4);
    \draw[q->] (4) -- (34);
  \end{tikzpicture}
\end{equation}

Under the cluster transformation
\begin{align}
  &\beta_{123}^{-1} \circ \beta_{124}^{-1} \circ \beta_{134}^{-1} \circ \beta_{234}^{-1}
  \circ \beta_{123} \circ \beta_{124} \circ \beta_{134} \circ \beta_{234}
  \\ \nonumber
  &\qquad
    =
    \auto_{7,12} \circ \auto_{5,15} \auto_{4,6} \circ \auto_{3,10}
    \circ \mu_{6} \circ \mu_{10} \circ \mu_{15} \circ \mu_{3} 
    \circ \mu_{11} \circ \mu_{15} \circ \mu_{7} \circ \mu_{5} 
  \\ \nonumber
  &\qquad\qquad
    \circ \mu_{4} \circ \mu_{6} \circ \mu_{12} \circ \mu_{11}
    \circ \mu_{3} \circ \mu_{10} \circ \mu_{6} \circ \mu_{15}
    \circ \mu_{5} \circ \mu_{7} \circ \mu_{6} \circ \mu_{12}
  \\ \nonumber
  &\qquad\qquad
    \circ \mu_{11} \circ \mu_{6} \circ \mu_{3} \circ \mu_{4}   
    \circ \mu_{15} \circ \mu_{5} \circ \mu_{10} \circ \mu_{11}
    \circ \mu_{12} \circ \mu_{7} \circ \mu_{5} \circ \mu_{6}
\end{align}
the tropical $\CX$-variables transform as
\begin{multicols}{2}
\begin{itemize}
\item[$\mu_{6}$:]
  $\Xtrop_{5}[2] = \Xtrop_{5} \Xtrop_{6}$ \\
  $\Xtrop_{6}[2] = 1/\Xtrop_{6}$ \\
  $\Xtrop_{7}[2] = \Xtrop_{6} \Xtrop_{7}$

\item[$\mu_{5}$:]
  $\Xtrop_{5}[3] = 1/\Xtrop_{5} \Xtrop_{6}$ \\
  $\Xtrop_{6}[3] = \Xtrop_{5}$ \\
  $\Xtrop_{11}[3] = \Xtrop_{5} \Xtrop_{6} \Xtrop_{11}$

\item[$\mu_{7}$:]
  $\Xtrop_{6}[4] = \Xtrop_{5} \Xtrop_{6} \Xtrop_{7}$ \\
  $\Xtrop_{7}[4] = 1/\Xtrop_{6} \Xtrop_{7}$ \\
  $\Xtrop_{13}[4] = \Xtrop_{6} \Xtrop_{7} \Xtrop_{13}$

\item[$\mu_{12}$:]
  $\Xtrop_{6}[5] = \Xtrop_{5} \Xtrop_{6} \Xtrop_{7} \Xtrop_{12}$ \\
  $\Xtrop_{12}[5] = 1/\Xtrop_{12}$ \\
  $\Xtrop_{16}[5] = \Xtrop_{12} \Xtrop_{16}$
 
\item[$\mu_{11}$:]
  $\Xtrop_{5}[6] = \Xtrop_{11}$ \\
  $\Xtrop_{10}[6] = \Xtrop_{5} \Xtrop_{6} \Xtrop_{10} \Xtrop_{11}$ \\
  $\Xtrop_{11}[6] = 1/\Xtrop_{5} \Xtrop_{6} \Xtrop_{11}$

\item[$\mu_{10}$:]
  $\Xtrop_{10}[7] = 1/\Xtrop_{5} \Xtrop_{6} \Xtrop_{10} \Xtrop_{11}$ \\
  $\Xtrop_{11}[7] = \Xtrop_{10}$ \\
  $\Xtrop_{14}[7] = \Xtrop_{5} \Xtrop_{6} \Xtrop_{10} \Xtrop_{11} \Xtrop_{14}$

\item[$\mu_{5}$:]
  $\Xtrop_{5}[8] = 1/\Xtrop_{11}$ \\
  $\Xtrop_{11}[8] = \Xtrop_{10} \Xtrop_{11}$ \\
  $\Xtrop_{16}[8] = \Xtrop_{11} \Xtrop_{12} \Xtrop_{16}$

\item[$\mu_{15}$:]
  $\Xtrop_{11}[9] = \Xtrop_{10} \Xtrop_{11} \Xtrop_{15}$ \\
  $\Xtrop_{15}[9] = 1/\Xtrop_{15}$ \\
  $\Xtrop_{17}[9] = \Xtrop_{15} \Xtrop_{17}$

\item[$\mu_{4}$:]
  $\Xtrop_{3}[10] = \Xtrop_{3} \Xtrop_{4}$ \\
  $\Xtrop_{4}[10] = 1/\Xtrop_{4}$ \\
  $\Xtrop_{6}[10] = \Xtrop_{4} \Xtrop_{5} \Xtrop_{6} \Xtrop_{7} \Xtrop_{12}$

\item[$\mu_{3}$:]
  $\Xtrop_{3}[11] = 1/\Xtrop_{3} \Xtrop_{4}$ \\
  $\Xtrop_{4}[11] = \Xtrop_{3}$ \\
  $\Xtrop_{9}[11] = \Xtrop_{3} \Xtrop_{4} \Xtrop_{9}$

\item[$\mu_{6}$:]
  $\Xtrop_{4}[12] = \Xtrop_{3} \Xtrop_{4} \Xtrop_{5} \Xtrop_{6} \Xtrop_{7} \Xtrop_{12}$ \\
  $\Xtrop_{6}[12] = 1/\Xtrop_{4} \Xtrop_{5} \Xtrop_{6} \Xtrop_{7} \Xtrop_{12}$ \\
  $\Xtrop_{12}[12] = \Xtrop_{4} \Xtrop_{5} \Xtrop_{6} \Xtrop_{7}$

\item[$\mu_{11}$:]
  $\Xtrop_{4}[13] = \Xtrop_{3} \Xtrop_{4} \Xtrop_{5} \Xtrop_{6} \Xtrop_{7} \Xtrop_{10} \Xtrop_{11} \Xtrop_{12} \Xtrop_{15}$ \\
  $\Xtrop_{11}[13] = 1/\Xtrop_{10} \Xtrop_{11} \Xtrop_{15}$ \\
  $\Xtrop_{15}[13] = \Xtrop_{10} \Xtrop_{11}$

\item[$\mu_{12}$:]
  $\Xtrop_{6}[14] = 1/\Xtrop_{12}$ \\
  $\Xtrop_{7}[14] = \Xtrop_{4} \Xtrop_{5}$ \\
  $\Xtrop_{12}[14] = 1/\Xtrop_{4} \Xtrop_{5} \Xtrop_{6} \Xtrop_{7}$

\item[$\mu_{6}$:]
  $\Xtrop_{5}[15] = 1/\Xtrop_{11} \Xtrop_{12}$ \\
  $\Xtrop_{6}[15] = \Xtrop_{12}$ \\
  $\Xtrop_{11}[15] = 1/\Xtrop_{10} \Xtrop_{11} \Xtrop_{12} \Xtrop_{15}$

\item[$\mu_{7}$:]
  $\Xtrop_{7}[16] = 1/\Xtrop_{4} \Xtrop_{5}$ \\
  $\Xtrop_{12}[16] = 1/\Xtrop_{6} \Xtrop_{7}$ \\
  $\Xtrop_{16}[16] = \Xtrop_{4} \Xtrop_{5} \Xtrop_{11} \Xtrop_{12} \Xtrop_{16}$

\item[$\mu_{5}$:]
  $\Xtrop_{5}[17] = \Xtrop_{11} \Xtrop_{12}$ \\
  $\Xtrop_{6}[17] = 1/\Xtrop_{11}$ \\
  $\Xtrop_{7}[17] = 1/\Xtrop_{4} \Xtrop_{5} \Xtrop_{11} \Xtrop_{12}$

\item[$\mu_{15}$:]
  $\Xtrop_{6}[18] = \Xtrop_{10}$ \\
  $\Xtrop_{10}[18] = 1/\Xtrop_{5} \Xtrop_{6}$ \\
  $\Xtrop_{15}[18] = 1/\Xtrop_{10} \Xtrop_{11}$

\item[$\mu_{6}$:]
  $\Xtrop_{6}[19] = 1/\Xtrop_{10}$ \\
  $\Xtrop_{11}[19] = 1/\Xtrop_{11} \Xtrop_{12} \Xtrop_{15}$ \\
  $\Xtrop_{15}[19] = 1/\Xtrop_{11}$

\item[$\mu_{10}$:]
  $\Xtrop_{3}[20] = 1/\Xtrop_{3} \Xtrop_{4} \Xtrop_{5} \Xtrop_{6}$ \\
  $\Xtrop_{10}[20] = \Xtrop_{5} \Xtrop_{6}$ \\
  $\Xtrop_{14}[20] = \Xtrop_{10} \Xtrop_{11} \Xtrop_{14}$

\item[$\mu_{3}$:]
  $\Xtrop_{3}[21] = \Xtrop_{3} \Xtrop_{4} \Xtrop_{5} \Xtrop_{6}$ \\
  $\Xtrop_{6}[21] = 1/\Xtrop_{3} \Xtrop_{4} \Xtrop_{5} \Xtrop_{6} \Xtrop_{10}$ \\
  $\Xtrop_{10}[21] = 1/\Xtrop_{3} \Xtrop_{4}$

\item[$\mu_{11}$:]
  $\Xtrop_{4}[22] = \Xtrop_{3} \Xtrop_{4} \Xtrop_{5} \Xtrop_{6} \Xtrop_{7} \Xtrop_{10}$ \\
  $\Xtrop_{5}[22] = 1/\Xtrop_{15}$ \\
  $\Xtrop_{11}[22] = \Xtrop_{11} \Xtrop_{12} \Xtrop_{15}$

\item[$\mu_{12}$:]
  $\Xtrop_{4}[23] = \Xtrop_{3} \Xtrop_{4} \Xtrop_{5} \Xtrop_{10}$ \\
  $\Xtrop_{12}[23] = \Xtrop_{6} \Xtrop_{7}$ \\
  $\boldsymbol{\Xtrop_{13}[23] = \Xtrop_{13}}$

\item[$\mu_{6}$:]
  $\Xtrop_{3}[24] = 1/\Xtrop_{10}$ \\
  $\Xtrop_{4}[24] = 1/\Xtrop_{6}$ \\
  $\Xtrop_{6}[24] = \Xtrop_{3} \Xtrop_{4} \Xtrop_{5} \Xtrop_{6} \Xtrop_{10}$

\item[$\mu_{4}$:]
  $\Xtrop_{4}[25] = \Xtrop_{6}$ \\
  $\Xtrop_{6}[25] = \Xtrop_{3} \Xtrop_{4} \Xtrop_{5} \Xtrop_{10}$ \\
  $\Xtrop_{12}[25] = \Xtrop_{7}$

\item[$\mu_{5}$:]
  $\Xtrop_{5}[26] = \Xtrop_{15}$ \\
  $\Xtrop_{11}[26] = \Xtrop_{11} \Xtrop_{12}$ \\
  $\boldsymbol{\Xtrop_{17}[26] = \Xtrop_{17}}$

\item[$\mu_{7}$:]
  $\Xtrop_{7}[27] = \Xtrop_{4} \Xtrop_{5} \Xtrop_{11} \Xtrop_{12}$ \\
  $\Xtrop_{11}[27] = 1/\Xtrop_{4} \Xtrop_{5}$ \\
  $\boldsymbol{\Xtrop_{16}[27] = \Xtrop_{16}}$

\item[$\mu_{15}$:]
  $\Xtrop_{11}[28] = 1/\Xtrop_{4} \Xtrop_{5} \Xtrop_{11}$ \\
  $\Xtrop_{14}[28] = \Xtrop_{10} \Xtrop_{14}$ \\
  $\Xtrop_{15}[28] = \Xtrop_{11}$

\item[$\mu_{11}$:]
  $\Xtrop_{7}[29] = \Xtrop_{12}$ \\
  $\Xtrop_{11}[29] = \Xtrop_{4} \Xtrop_{5} \Xtrop_{11}$ \\
  $\Xtrop_{15}[29] = 1/\Xtrop_{4} \Xtrop_{5}$

\item[$\mu_{3}$:]
  $\Xtrop_{3}[30] = \Xtrop_{10}$ \\
  $\Xtrop_{6}[30] = \Xtrop_{3} \Xtrop_{4} \Xtrop_{5}$ \\
  $\boldsymbol{\Xtrop_{14}[30] = \Xtrop_{14}}$

\item[$\mu_{15}$:]
  $\Xtrop_{6}[31] = \Xtrop_{3}$ \\
  $\boldsymbol{\Xtrop_{11}[31] = \Xtrop_{11}}$ \\
  $\Xtrop_{15}[31] = \Xtrop_{4} \Xtrop_{5}$

\item[$\mu_{10}$:]
  $\Xtrop_{6}[32] = 1/\Xtrop_{4}$ \\
  $\boldsymbol{\Xtrop_{9}[32] = \Xtrop_{9}}$ \\
  $\Xtrop_{10}[32] = \Xtrop_{3} \Xtrop_{4}$

\item[$\mu_{6}$:]
  $\Xtrop_{6}[33] = \Xtrop_{4}$ \\
  $\Xtrop_{10}[33] = \Xtrop_{3}$ \\
  $\Xtrop_{15}[33] = \Xtrop_{5}$

\item[$\auto_{3,10}$:]
  $\boldsymbol{\Xtrop_{3}[34] = \Xtrop_{3}}$ \\
  $\boldsymbol{\Xtrop_{10}[34] = \Xtrop_{10}}$

\item[$\auto_{4,6}$:]
  $\boldsymbol{\Xtrop_{4}[35] = \Xtrop_{4}}$ \\
  $\boldsymbol{\Xtrop_{6}[35] = \Xtrop_{6}}$

\item[$\auto_{5,15}$:]
  $\boldsymbol{\Xtrop_{5}[36] = \Xtrop_{5}}$ \\
  $\boldsymbol{\Xtrop_{15}[36] = \Xtrop_{15}}$

\item[$\auto_{7,12}$:]
  $\boldsymbol{\Xtrop_{7}[37] = \Xtrop_{7}}$ \\
  $\boldsymbol{\Xtrop_{12}[37] = \Xtrop_{12}}$
\end{itemize}
\end{multicols}

\section*{Acknowledgments}

We are grateful to Hyun Kyu Kim and Mauricio Romo for illuminating
discussions and to Dylan Allegretti for helpful comments.

\bibliographystyle{alpha}
\newcommand{\etalchar}[1]{$^{#1}$}

\end{document}